\documentclass[11pt]{article}
\usepackage{zzt}

\title{Spectral Methods for the Complexity of Planar Graph Homomorphisms}
\author{
Jin-Yi Cai\thanks{Department of Computer Sciences,
University of Wisconsin-Madison}\\
\texttt{jyc@cs.wics.edu}
\and Ashwin Maran\footnotemark[1]\\
\texttt{amaran@wisc.edu}
\and Zhuxiao Tang\footnotemark[1]\\
\texttt{zztang@wisc.edu}
}
\date{}

\begin{document}

\maketitle

\begin{abstract}
We explore the frontier beyond the recently discovered barrier represented by    
the \emph{quantum automorphism group} $\qut(M)$~\cite{ 
wang1998quantum, banica2005quantum,
lupini2020nonlocal, manvcinska2020quantum,cai2026dichotomy}
%%% JYC: I made it chronological
in the classification theory of planar
graph homomorphisms $\PlGH(M)$.
We show that analyzing the spectral relations of $M$ can 
%{\it analytic} methods by analyzing the spectral relations of $M$ can be used to
prove \#P-hardness when traditional vertex separation and domain-reduction methods with planar edge gadgets 
provably fail due to the $\qut(M)$ barrier.
We prove two criteria of \#P-hardness for $\PlGH(M)$: a spectral criterion and a determinant criterion.
%The proof makes use of matrix perturbation theory.
It is known that the core problem for the classification of $\PlGH(M)$ for nonnegative matrices $M$ is for 
positive definite entry-wise positive matrices~\cite{cai2023complexity,cai2024polynomial,cai2026dichotomy}.
We  use the spectral criterion to show that
 $\PlGH(M)$  is  \#P-hard for all  circulant matrices of prime order 
$q \ge 3$, while for $q=2$ it is precisely the matchgate case and
is P-time computable by the FKT algorithm (for planar perfect matching).
We also prove a complexity dichotomy for $\PlGH$ problems defined by tensor products of 2 by 2 matrices. 
This gives a complete complexity classification for this class of
matrices, and the FKT algorithm together with a holographic transformation is \emph{universal}---every $\PlGH(M)$ is either (1) P-time computable over all graphs, or (2) \#P-hard in general but  P-time computable over planar graphs, or (3) \#P-hard over planar graphs; furthermore, $\PlGH(M)$ in (2) consists of precisely
those computable by FKT with a holographic transformation.
\end{abstract}
\thispagestyle{empty}

\newpage

\tableofcontents

\thispagestyle{empty}
\clearpage
\pagenumbering{arabic}

\section{Introduction}\label{sec: introduction}

Given graphs $G$ and $H$,
a mapping from $V(G)$ to $V(H)$ is called
a \emph{homomorphism}
if the edges of $G$ are mapped to the edges of $H$.
More generally,
let $M = (M_{ij})$ be a $q \times q$ symmetric matrix
with entries $M_{ij} \in \mathbb{R}_{\geq 0}$.
We interpret $M$ as defining a weighted graph $H = H_M$ on $[q]$,
where $M_{ij}$ is the weight of edge $(i,j)$.
Given $M$, the associated \emph{partition function}
$Z_{M}(G)$ for an input undirected multigraph $G = (V,E)$ is
$Z_{M}(G) = \sum_{\sigma: V \rightarrow [q]} \prod_{(u, v) \in E} M_{\sigma(u) \sigma(v)}.$

Isomorphic graphs $G \cong G'$ have the same value
$Z_{M}(G) = Z_{M}(G')$, thus every $M$
defines a graph property $Z_{M}(\cdot)$.
For a $0$--$1$ matrix $M$,
$Z_{M}(G)$ counts the number of homomorphisms from $G$ to $H$.
Graph homomorphism ($\GH$) encompasses a great deal of
graph properties and counting problems
arising in combinatorics and statistical physics
~\cite{lovasz1967operations, lovasz2012large, hell2004graphs}.

Each $M$ defines a
computational problem denoted by $\GH(M)$: given an input graph $G$,
output $Z_{M}(G)$.
The  complexity of 
$\GH(M)$ has been a major focus of research.
A number of increasingly general complexity dichotomy theorems have been achieved~\cite{dyer2000complexity,bulatov2005complexity,goldberg2010complexity,cai2013graph}.
A central feature of these results is the use of
\emph{edge gadgets}:
%small graphs
%%% JYC: doesn't have to be small
graph fragments that replace edges of the input instance
and enable polynomial-time reductions.
%Graph homomorphism 
$\GH$ can also be viewed as
a special case of counting $\CSP$, where
%a series of results established 
a complexity dichotomy is proved for
any set of constraint functions ${\cal F}$
~\cite{bulatov2013complexity,dyer2010complexity,dyer2011decidability,dyer2013complexity,bulatov2012complexity,cai2016nonnegative,cai2017complexity}.

Parallel to this development,
Valiant~\cite{valiant2008holographic}
introduced \emph{holographic algorithms}.
It is well known that counting the number of perfect
matchings (\#PM) is \#P-complete~\cite{valiant1979complexity}. 
On the other hand,
%it is known 
%since the 60's 
%that
the
famous FKT algorithm~\cite{kasteleyn1961statistics,temperley1961dimer,kasteleyn1963dimer,kasteleyn1967graph} from the 60's
 can
compute \#PM on {\it planar graphs} in P-time.
%Valiant's h
Holographic algorithms
greatly extended its reach.
%in fact so much so that a 
A most intriguing question arises:
Is this a \emph{universal} algorithm that 
\emph{every} counting problem expressible as a sum-of-products
that \emph{can be solved} in P-time on planar graphs
(but \#P-hard in general)
\emph{is solved} by this method alone?

After a series of work~\cite{cai2009holant,cai2016complete, backens2017new, backens2018complete, yang2022local, fu2019blockwise, fu2014holographic, cai2019holographic}
it was established that for every set of complex valued constraint functions ${\cal F}$
on the Boolean domain (i.e., domain size $q = 2$) there is a 3-way exact
classification for  
\#$\CSP$(${\cal F}$): 
 (1) P-time solvable, 
 (2) P-time solvable over planar graphs but \#P-hard over general graphs, 
 (3) \#P-hard over planar graphs.
Moreover,
category (2) consists of precisely  those problems that can be solved by
Valiant's holographic algorithm using FKT.

Extending this understanding to larger domains has proven remarkably difficult
\cite{cai2023complexity,cai2024polynomial},
even in the restricted setting of graph homomorphisms.
Hardness proofs in the planar setting typically rely on
\emph{planar edge gadgets},
which must preserve planarity while enabling reductions.
Even for $q = 3$ and $q = 4$,
hardness results were only possible 
through the construction of
individually tailored gadgets and 
highly specific analyses.
%on a case-by-case basis.

To address this, the paper~\cite{cai2026dichotomy}
initiated a systematic study of these planar edge gadgets
to understand the fundamental challenges encountered
when moving beyond the $q = 2$ case.
Let $\PlGH(M)$ denote the problem $\GH(M)$ when the
input graphs $G$ are restricted to planar graphs.
\cite{cai2026dichotomy} established that
$\PlGH(M)$ is \#P-hard whenever
planar gadgets can be used to \emph{separate} the
diagonal entries of $M$.
%effectively distinguishing between different labels.
We call this \emph{vertex separation} method.
However, they also discovered a formidable barrier:
for many matrices, such separation is impossible.
This occurs when the matrix possesses symmetries described by
an abstract construct called the \emph{quantum automorphism group} $\qut(M)$~\cite{wang1995free, wang1998quantum,
bichon2003quantum, banica2005quantum,
atserias2019quantum, chassaniol2019study,
lupini2020nonlocal, manvcinska2020quantum, kar2026npa}.
%%% JYC i edited a bit here. i don't want people to think this has to do literally with quantum computing
% or any change of model of computation, thus P and #P-hardness is not the classical established
% notion. this is especially the case since we had a section called model of computation.
When this  $\qut(M)$ is nontrivial, the diagonal
entries remain inseparable by
\emph{all} planar gadgets, creating a ``quantum gap''
%%% JYC I added a word here "all"
where the standard strategy of finding reductions
from smaller domain \#P-hard problems
becomes provably impossible
(see \autoref{Appendix: Quantum}).
%for more details
%%% ZZT: this makes 11 pages to 10 pages

This paper explores the frontier
beyond this barrier.
While this quantum group barrier exposes fundamental limitations of 
combinatorial vertex separation,
it does not render planar edge gadgets useless.
Rather, it suggests that progress toward a final dichotomy
requires more sophisticated
proof techniques
that operate \emph{within}
the constraints of these quantum symmetries.
In this paper, we show that planar edge gadgets remain
a potent tool, coupled with \emph{analytic} methods, 
for proving hardness, even in cases where 
traditional separation and domain-reduction methods fail.

We briefly outline our analytic approach.%
%%%ZZT: I add a paragraph here
\footnote{
%Introducing analysis logically involves
This introduces 
%Although the asymptotic analysis is reasoning about 
a continuum of matrices which are
%serve as 
\emph{names} of (logically not all distinct) computational problems, while in  strict TM computability  there are  at most countably many distinct problems. This  is not a bug but  a feature; see \autoref{appendix: modelComputation}. 
%for more details about the computational model.
An innovative aspect of working in this setting is
%with a continuum  of  problem (names)
to allow our analytic proof to be carried out.}
%We do pay a price: for \#P-hardness reductions are proved to exist, they are proved non-constructively.}
%See \autoref{appendix: modelComputation} for more details about the computational model.}
Various planar edge gadgets represent
specific algebraic operations on $M$.
Using polynomial interpolation,
we identify matrix-valued entrywise-analytic functions
$\mathcal{G}$, such that
$\PlGH(\mathcal{G}(\cdot)) \leq_{p}^{T} \PlGH(M)$.
%%% JYC this dot doesn't seem right. as is, on the left technically there is no
%%% computational problem specified. perhaps you meant 
%$\PlGH(\mathcal{G}(M)) \leq_{p}^{T} \PlGH(M)$, where \mathcal{G} can be a class of
% different functions.
We then show that the lattice of multiplicative relations
among eigenvalues provides a sufficient hardness certificate of $\PlGH(\cdot)$.
We represent any such relation of 
%$M$ 
$\mathcal{G}(\cdot)$
by defining
a scalar-valued analytic function $f(\mathcal{G}(\cdot))$.
Using asymptotics, we show that $f(\mathcal{G}(\cdot))$
is not identically zero.
Since a nonzero analytic function has only countably many zeros,
we can avoid countably many ``problematic'' spectral relations of $M$ to prove \#P-hardness.
Our proof also utilizes several algebraic techniques
%, 
%including properties of the cyclotomic extension $\mathbb{Q}(e^{2\pi\mathfrak{i}/q})$ over $\mathbb{Q}$ when $q$ is prime.
including 
%generating sets and a property of the 
cyclotomic fields.

Our first result identifies two
useful criteria for proving
\#P-hardness for matrices with
non-trivial quantum symmetries.
A key insight from~\cite{cai2026dichotomy} is that
any universal proof of hardness for all matrices
that uses planar edge gadgets to find reductions within $\PlGH$ problems,
must necessarily involve reductions from the
\emph{Potts model} --
a highly symmetric matrix where all 
non-diagonal entries are equal.
They are the only \#P-hard matrices that
admit the maximal quantum automorphism group,
making them the only viable candidate source
for planar edge gadget reductions that
are universally applicable.
We broaden this framework by showing that 
hardness is determined by a spectral criterion:
as long as the \emph{lattice} of homogeneous multiplicative relations among eigenvalues of $M$ (we call this \emph{spectral lattice}) is comparable to that of the Potts model,
the problem is \#P-hard (\cref{thm: only look at lattice of eigenvalues}).
As an application, we prove an easy-to-test determinant criterion (\cref{thm: determinant criterion}) for \#P-hardness, utilizing \cref{thm: only look at lattice of eigenvalues} and matrix perturbation theory. 
%Although not a necessary condition, this easy-to-test condition is compatible with quantum symmetries, and applies in cases where vertex separation is impossible.

Our second result concerns 
\emph{circulant matrices},
which are defined by their cyclic symmetry.
Circulant matrices are diagonal
identical, and firmly situated within the ``quantum gap,'' ruling out vertex separation techniques.
%to prove $\#$P-hardness. 
%%%ZZT: I think this is not accurate.
Nonetheless, we show that
$\PlGH(M)$ is \#P-hard for positive definite circulant matrices of prime order $q \ge 3$  (\cref{thm: circulant prime order}).
%where $q \geq 3$ is prime (\cref{thm: circulant prime order}).
We analyze the spectral relations of the matrix under 
%a sufficiently 
a large thickening, then 
%make use of 
use \cref{thm: only look at lattice of eigenvalues} to reduce from the Potts model.
%By reducing from a hard matrix of the same size, we demonstrate that planar edge gadgets \emph{can} prove hardness even for matrices possessing non-trivial quantum symmetries, when traditional vertex separation reductions are completely blocked.

% Our second result investigates
% rank-1 perturbations of the identity matrix.
% A key insight from~\cite{cai2026dichotomy} is that
% any universal proof of hardness for all matrices
% must necessarily involve reductions from the
% \emph{Potts model} --
% a specific, highly symmetric matrix where all 
% non-diagonal entries are equal.
% They are the only \#P-hard matrices that
% admit the maximal quantum automorphism group,
% making them the only viable candidate source
% for a reduction that is universally applicable.
% We broaden this framework by showing that 
% hardness is determined by a spectral criterion:
% as long as the eigenvalues of the matrix
% generate a specific algebraic structure (a lattice) 
% comparable to that of the Potts model,
% the problem is \#P-hard.
% Although not a necessary condition,
% this easy-to-test condition applies more broadly
% than other previously popular
% conditions such as vertex separation,
% and is compatible with quantum symmetries.

While the spectral criterion we prove is \emph{sufficient},
it is not \emph{necessary}.
When
$M = A_1 \otimes \cdots \otimes A_s$,
where each 
tensor factor is a $2 \times 2$ matrix,
it is seen that the spectral lattice of $M$ is highly non-trivial %and highly structured, 
and impervious to planar gadgets.
So, our spectral criterion is insufficient
for proving \#P-hardness, even when
$\PlGH(M)$ is already known to be \#P-hard (e.g., for $4 \times 4$ matrices~\cite{cai2024polynomial}).
Nevertheless, we use the analytic approach to separate the normalized eigenvalues of factors $A_i$, thus realizing a trivial ``reduced lattice" (\cref{lm: rho has no relation after thickening}).
This allows us to reduce $\PlGH(M)$ from $\PlGH(A_i^{\otimes t_i})$ for any factor 
$A_i$ with multiplicity $t_i$ (\cref{lm: reduced lattice lemma}, proof in \autoref{Appendix: Tensor of 2by2}).
We prove that $\PlGH(M)$ is \#P-hard unless each factor $A_i$ has equal diagonal entries, in which case the problem is solvable via Valiant's holographic algorithm plus the FKT algorithm (\cref{thm: hardness for 2 by 2 tensor positive definite}). 
Our result therefore isolates a sharp boundary case, namely when $M$ is a tensor product of near matchgates.
In this instance, both the vertex separation and spectral criterion fail, and 
\#P-hardness occurs on one side of this precise boundary but P-time tractability occurs
on the other side.

Collectively, these results demonstrate that planar edge gadgets, 
when combined with analytic and spectral methods,
can push forward the frontier 
toward a universal dichotomy for planar graph homomorphisms. 
Full versions of all omitted or abbreviated proofs can be found in the appendices.

\section{Preliminaries}\label{sec: preliminaries}

Let $\Sym{q}{}{X}$ denote the set of
$q \times q$ symmetric matrices with entries from
$X \subseteq \mathbb{R}$.
For example, we can have $X = \mathbb{R}$, $\mathbb{R}_{\geq 0}$ or
$\mathbb{R}_{\neq 0}$.
We then let $\Sym{q}{F}{X} \subset \Sym{q}{}{X}$ 
and $\Sym{q}{pd}{X} \subset \Sym{q}{F}{X}$ denote, respectively, the
subsets of full rank and positive definite symmetric matrices.
We justify this decision to consider arbitrary real valued
matrices (rather than just algebraic real valued matrices)\footnote{However, our results stay strictly
 within the classical Turing machine model in terms of  bit-complexity.}
in \autoref{appendix: modelComputation}.
We use $\leq^{T}_{p}$ and $\equiv_p^T$ to denote polynomial-time Turing reducibility and P-time Turing equivalence, respectively.
We start with a trivial
lemma which shows that scaling $M$ has no effect on the complexity of $\PlGH(M)$:

\begin{restatable}{lemma}{MequivalentCM}\label{lemma: MequivalentCM}
    Let $M \in \Sym{q}{}{\mathbb{R}}$, and
    let $c \in \mathbb{R}_{\neq 0}$.
    Then $\PlGH(M) \equiv_p^T \PlGH(cM)$.
\end{restatable}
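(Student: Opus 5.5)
The key identity is that for any graph $G=(V,E)$,
\[
Z_{cM}(G) = \sum_{\sigma: V\to[q]} \prod_{(u,v)\in E} c\,M_{\sigma(u)\sigma(v)} = c^{|E|}\, Z_M(G),
\]
since every edge contributes exactly one factor of $c$ regardless of $\sigma$. Both reductions follow from this, provided we handle the arithmetic with $c$ correctly in the model of computation of \autoref{appendix: modelComputation}.

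For $\PlGH(cM) \leq_p^T \PlGH(M)$: given a planar input $G$, we query the oracle for $\PlGH(M)$ on the same $G$ and obtain $Z_M(G)$. We then output $c^{|E|} Z_M(G)$. The reverse reduction is symmetric, with $c$ replaced by $c^{-1}$; this uses $c \neq 0$, and indeed $cM$ scaled by $c^{-1}$ gives back $M$. No gadget is needed, and planarity is trivially preserved because the instance is unchanged.

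The only point requiring care is the computational model. The entries of $M$ and the constant $c$ are arbitrary reals, and $c^{|E|}$ is a product of $|E|$ copies of $c$. I would use the convention of \autoref{appendix: modelComputation}: each problem name $\PlGH(M)$ is understood relative to a finitely generated field containing the entries, and arithmetic in a fixed finitely generated extension with $c$ adjoined is polynomial time. Then multiplying by $c^{\pm |E|}$ takes polynomially many field operations on representations of polynomial size.

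Equivalently, I would argue that the reduction is uniform in $c$ as a fixed constant of the problem. Since $c$ is part of the problem name, multiplication by $c^{\pm|E|}$ is a fixed polynomial-time post-processing step. I expect this bookkeeping about representing $c$, rather than any combinatorial content, to be the only subtlety.
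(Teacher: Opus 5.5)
Your proposal is correct and is essentially the paper's proof. The paper also derives $Z_{cM}(G)=c^{|E|}Z_M(G)$ and concludes that oracle access to either problem gives the other, with $c^{-1}$ implicitly handling the reverse direction; your remarks on carrying out the multiplication by $c^{\pm|E|}$ in the paper's formal model of computation add detail the paper leaves implicit but are consistent with it.
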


A \emph{planar edge gadget} is a 2-labeled
planar graph $\mathbf{K} = 
(V(\mathbf{K}), E(\mathbf{K}))$
with distinguished vertices $\ell_{1} \neq \ell_{2}
\in V(\mathbf{K})$ on the outer face of $\mathbf{K}$.
Given any $M \in \Sym{q}{}{\mathbb{R}}$,
we define the \emph{signature} $\mathbf{K}(M) 
\in \mathbb{R}^{q \times q}$ as:
\begin{equation}\label{equation: signature}
    \mathbf{K}(M)_{ij} = \sum_{\substack{\tau: 
    V(\mathbf{K}) \rightarrow [q]\\
    \tau(\ell_{1}) = i, \tau(\ell_{2}) = j}}
    \prod_{(u, v) \in E(\mathbf{K})}M_{\tau(u)\tau(v)}.
\end{equation}

In this work, we consider planar edge gadgets whose signatures are symmetric matrices.
For $M \in \Sym{q}{}{\mathbb{R}}$,
let $\PlEdge(M) := \{\mathbf{K}(M)\mid 
\text{planar edge gadget } \mathbf{K}\} \cap \Sym{q}{}{\mathbb{R}}$.

Given any planar graph $G = (V, E)$ and $\mathbf{K}$ with $\vk(M) \in \PlEdge(M)$, we construct the planar graph $\mathbf{K}G = (V(\mathbf{K}G), E(\mathbf{K}G))$
by replacing every
edge $(u, v) \in E$ with a copy of
$\mathbf{K}$, and identifying $u, v$ with
$\ell_{1}, \ell_{2}$ respectively.
Since $\vk(M)$ is symmetric, it doesn't matter how we name each edge $\{u, v\}$ as $(u,v)$ or $(v,u)$
when we replace it by a copy of
$\mathbf{K}$,
the following $Z_M(\vk G)$ is well-defined:
%Technically, $\vk G$ is
%not yet well-defined as a graph, as its definition depends on
%the orientation
%how we name each edge $\{u, v\}$ as $(u,v)$ or $(v,u)$;
%of each edge, 
%however, by the symmetry of $\vk(M)$, the value $Z_M(\vk G)$ will not depend on 
%these edge orientations. We have

\begin{equation} \label{eq:signature_matrix}
    Z_M(\vk G)
    = \sum_{\sigma: V \to [q]} \prod_{(u,v) \in E} 
    ~\sum_{\substack{\tau: V(\vk) \to [q] \\ \tau(\ell_1) = \sigma(u),  \tau(\ell_2) = \sigma(v)}}~
    \prod_{(u',v') \in E(\vk)} M_{\tau(u')\tau(v')}
    = Z_{\vk(M)}(G).
\end{equation}

%%% JYC: maybe should use $\leq_T$ or $\leq^p_T$.
Then $\PlGH(\vk(M)) \leq^{T}_{p} \PlGH(M)$ 
%(Here $\leq$ denotes P-time reduction) 
for all
$\vk(M) \in \PlEdge(M)$.
Some examples of planar edge gadgets follow:
\subsection{Thickening Gadgets}\label{sec: thickeningGadgets}

For an integer $n \geq 1$,
the \emph{thickening gadget} $\mathbf{T_{n}}$ consists
of two vertices $\ell_{1}, \ell_{2}$,
that are connected by $n$ parallel edges
(see \cref{fig:thickening}).

\begin{figure}[htbp]
	\centering
	\scalebox{0.8}{\begin{tikzpicture}[line join=miter, draw opacity=1]

    % Define nodes
    \node[circle, draw=black, fill=black] (A) at (-1, -1){};
    \node[circle, draw=black, fill=black] (B) at (-1, 1){};
    \node[circle, draw=black, fill=black] (C) at (1, 1){};
    \node[circle, draw=black, fill=black] (D) at (1, -1){};
    
    % --- (A--B): one edge ---
    \draw[line width=0.5mm, black] (A) -- (B);
    
    % --- (B--C): one edge ---
    \draw[line width=0.5mm, black] (B) -- (C);
    
    % --- (C--D): one edge ---
    \draw[line width=0.5mm, black] (C) -- (D);
    
    % --- (D--A): one edge ---
    \draw[line width=0.5mm, black] (D) -- (A);
    
    % --- (B--D): one edge ---
    \draw[line width=0.5mm, black] (B) -- (D);

\begin{scope}[xshift = 4cm]

    % Define nodes
    \node[circle, draw=black, fill=white] (A) at (-1, 0){};
    \node[circle, draw=black, fill=white] (B) at (1, 0){};

    % Label nodes
    \node[draw=none] at (-0.75, -0.5) {$\ell_{1}$};
    \node[draw=none] at (0.75, -0.5) {$\ell_{2}$};
    
    % --- (A--B): two evenly spaced arcs ---
    \foreach \angle in {5, 15} {
      \draw[line width=0.5mm] (A) to[bend left=\angle] (B);
      \draw[line width=0.5mm] (A) to[bend right=\angle] (B);
    }
    
\end{scope}

% \begin{scope}[xshift = 4cm]

    % % Define nodes
    % \node[circle, draw=black, fill=black] (A) at (-1, -1){};
    % \node[circle, draw=black, fill=black] (B) at (-1, 1){};
    % \node[circle, draw=black, fill=black] (C) at (1, 1){};
    % \node[circle, draw=black, fill=black] (D) at (1, -1){};
    
    % % --- (A--B): two evenly spaced arcs ---
    % \foreach \angle in {15} {
    %   \draw[line width=0.5mm] (A) to[bend left=\angle] (B);
    %   \draw[line width=0.5mm] (A) to[bend right=\angle] (B);
    % }
    
    % % --- (B--C): two evenly spaced arcs ---
    % \foreach \angle in {15} {
    %   \draw[line width=0.5mm] (B) to[bend left=\angle] (C);
    %   \draw[line width=0.5mm] (B) to[bend right=\angle] (C);
    % }
    
    % % --- (C--D): two evenly spaced arcs ---
    % \foreach \angle in {15} {
    %   \draw[line width=0.5mm] (C) to[bend left=\angle] (D);
    %   \draw[line width=0.5mm] (C) to[bend right=\angle] (D);
    % }
    
    % % --- (D--A): two evenly spaced arcs ---
    % \foreach \angle in {15} {
    %   \draw[line width=0.5mm] (D) to[bend left=\angle] (A);
    %   \draw[line width=0.5mm] (D) to[bend right=\angle] (A);
    % }
    
    % % --- (B--D): two evenly spaced arcs ---
    % \foreach \angle in {15} {
    %   \draw[line width=0.5mm] (B) to[bend left=\angle] (D);
    %   \draw[line width=0.5mm] (B) to[bend right=\angle] (D);
    % }

% \end{scope}

\begin{scope}[xshift = 8cm]

    % Define nodes
    \node[circle, draw=black, fill=black] (A) at (-1, -1){};
    \node[circle, draw=black, fill=black] (B) at (-1, 1){};
    \node[circle, draw=black, fill=black] (C) at (1, 1){};
    \node[circle, draw=black, fill=black] (D) at (1, -1){};
    
    % --- (A--B): four evenly spaced arcs ---
    \foreach \angle in {5, 15} {
      \draw[line width=0.5mm] (A) to[bend left=\angle] (B);
      \draw[line width=0.5mm] (A) to[bend right=\angle] (B);
    }
    
    % --- (B--C): four evenly spaced arcs ---
    \foreach \angle in {5, 15} {
      \draw[line width=0.5mm] (B) to[bend left=\angle] (C);
      \draw[line width=0.5mm] (B) to[bend right=\angle] (C);
    }
    
    % --- (C--D): four evenly spaced arcs ---
    \foreach \angle in {5, 15} {
      \draw[line width=0.5mm] (C) to[bend left=\angle] (D);
      \draw[line width=0.5mm] (C) to[bend right=\angle] (D);
    }
    
    % --- (D--A): four evenly spaced arcs ---
    \foreach \angle in {5, 15} {
      \draw[line width=0.5mm] (D) to[bend left=\angle] (A);
      \draw[line width=0.5mm] (D) to[bend right=\angle] (A);
    }
    
    % --- (B--D): four evenly spaced arcs ---
    \foreach \angle in {5, 15} {
      \draw[line width=0.5mm] (B) to[bend left=\angle] (D);
      \draw[line width=0.5mm] (B) to[bend right=\angle] (D);
    }

\end{scope}

\end{tikzpicture}}
	\caption{A graph $G$, the thickening gadget $\mathbf{T_{4}}$, and $\mathbf{T_{4}}G$.}
	\label{fig:thickening}
\end{figure}

From \cref{equation: signature}, we have that for
all $i, j \in [q]$,

$$\mathbf{T_{n}}(M)_{ij} = 
\sum_{\substack{\tau: V(\mathbf{T_{n}}) \rightarrow [q]\\
\tau(\ell_{1}) = i, \tau(\ell_{2}) = j}}
\prod_{(u, v) \in E(\mathbf{T_{n}})}M_{\tau(u) \tau(v)} 
= \prod_{(u, v) \in E(\mathbf{T_{n}})}M_{ij}
= (M_{ij})^{n}.$$

% $$Z_{M}(\mathbf{T_{n}}G)
% = \sum_{\sigma: V \rightarrow [q]}
% \prod_{(u, v) \in \mathbf{T_{n}}E}M_{\sigma(u)\sigma(v)}
% = \sum_{\sigma: V \rightarrow [q]}
% \prod_{(u, v) \in E}(M_{\sigma(u)\sigma(v)})^{n}
% = Z_{\mathbf{T_{n}}(M)}(G),$$
% where $\mathbf{T_{n}}M = \big((M_{ij})^{n}\big) \in \Sym{q}{}(\mathbb{R})$.
% %with entries $\big((M_{ij})^{n}\big)$ for $i, j \in [q]$.
% It follows that $\PlGH(\mathbf{T_{n}}M) \le_p^T \PlGH(M)$ for all $n \geq 1$.

We see $\mathbf{T_{n}}(M) \in \PlEdge(M)$ for all
$M \in \Sym{q}{}{\mathbb{R}}$ and $n \geq 1$.
So, $\PlGH(\mathbf{T_{n}}(M)) \leq^{T}_{p} \PlGH(M)$
for all $n \geq 1$.
By a technique called \emph{gadget interpolation}
we can reduce from more matrices than $\mathbf{T_{n}}(M)$,
replacing the integer $n$ by a real number.
The idea 
is that with
oracle access to $\PlGH(M)$ we can compute
$Z_{M}(\mathbf{T_{n}}G) = Z_{\mathbf{T_{n}}(M)}(G)$ 
for 
polynomially many $n$, and then form a Vandermonde system
of linear equations. 
By solving this system, we can compute 
$Z_{N}(G)$ for much more general matrices $N$
(see Appendix \ref{appendix: thickeningAppendix}).
A consequence is the following:

\begin{definition}\label{definition: mathcalTSmall}
    Let $M \in \Sym{q}{}{\mathbb{R}_{> 0}}$.
    Define
    $\mathcal{T}_{M}: \mathbb{R} \rightarrow
    \Sym{q}{}{\mathbb{R}_{> 0}}$ such that
    $\mathcal{T}_{M}(\theta)_{ij} = (M_{ij})^{\theta},$ for all  $i, j \in [q].$
\end{definition}

\begin{restatable}{lemma}{simpleThickening}
\label{lemma: simpleThickening}
    Let $M \in \Sym{q}{}{\mathbb{R}_{> 0}}$. Then,
    $\PlGH(\mathcal{T}_{M}(\theta)) \leq^{T}_{p} 
    \PlGH(M)$ for all $\theta \in \mathbb{R}$.
\end{restatable}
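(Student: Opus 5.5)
The plan is the standard interpolation argument. Given a planar input graph $G=(V,E)$ with $m=|E|$, we group the assignments $\sigma: V\to[q]$ according to how many edges land on each unordered pair of labels. Let $a_1,\dots,a_k$ be the distinct values among the entries $M_{ij}$; there are at most $k\le q(q+1)/2$ of them. For each vector $\mathbf{c}=(c_1,\dots,c_k)$ of nonnegative integers with $\sum_t c_t=m$, let $N_{\mathbf{c}}$ be the number of $\sigma$ for which exactly $c_t$ edges $(u,v)$ satisfy $M_{\sigma(u)\sigma(v)}=a_t$. Then for every real $\theta$,
\[
Z_{\mathcal{T}_M(\theta)}(G)=\sum_{\mathbf{c}} N_{\mathbf{c}}\prod_{t=1}^k a_t^{\theta c_t}=\sum_{\mathbf{c}} N_{\mathbf{c}}\Big(\prod_{t} a_t^{c_t}\Big)^{\theta}.
\]
The number of vectors $\mathbf{c}$ is $\binom{m+k-1}{k-1}=O(m^{k-1})$, which is polynomial in the size of $G$ because $q$ is fixed. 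Since each $a_t>0$, the real powers above are well-defined.

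Next, I would merge the vectors $\mathbf{c}$ that give the same value $x_{\mathbf{c}}=\prod_t a_t^{c_t}$. Let $y_1,\dots,y_r$ be the distinct values obtained this way, and set $S_s=\sum_{\mathbf{c}:x_{\mathbf{c}}=y_s}N_{\mathbf{c}}$. Then $Z_{\mathcal{T}_M(\theta)}(G)=\sum_{s=1}^r S_s\,y_s^{\theta}$, with $r$ polynomial in $m$. The oracle for $\PlGH(M)$ lets us query the graphs $\mathbf{T_n}G$ for $n=1,\dots,r$. These are planar and of polynomial size, and they return $Z_M(\mathbf{T_n}G)=Z_{\mathbf{T_n}(M)}(G)=\sum_s S_s y_s^{n}$. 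This is a Vandermonde system in the unknowns $S_s$, with nodes $y_s$ that are distinct and positive, hence nonzero. So the system is invertible and we can solve for all the $S_s$. Finally we output $\sum_s S_s y_s^{\theta}$ for the requested $\theta$.

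The main obstacle is not the linear algebra but the model of computation. The reduction has to know the partition of the $\mathbf{c}$'s into classes with equal $x_{\mathbf{c}}$. Equality of these values is a question about multiplicative relations among the $a_t$, and the algorithm has to work with arbitrary real entries. I would handle this the way the paper's computational model (\autoref{appendix: modelComputation}) allows. The relation lattice $\{\mathbf{e}\in\mathbb{Z}^k:\prod_t a_t^{e_t}=1\}$ is a fixed finite amount of data depending only on $M$. It can therefore be hard-wired into the reduction, even if it is not effectively computable from $M$, so the reduction exists non-constructively. Given this lattice, the equivalence classes of the $\mathbf{c}$'s can be computed in polynomial time. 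Similarly, the final combination $\sum_s S_s y_s^\theta$ involves the numbers $a_t^{\theta}$, and I would treat these as constants of the target problem's name, in the same way $M$'s own entries are treated. With these conventions the reduction is a polynomial-time Turing reduction: it makes polynomially many oracle calls and then performs a polynomial amount of arithmetic.
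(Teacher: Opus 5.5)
Your argument is correct, and it uses the same mechanism as the paper's proof. You query $\mathbf{T_n}G$ for polynomially many $n$, solve the resulting Vandermonde system for the number of assignments with each distinct product value $x$, and recombine these counts with new weights.

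The difference is in how the argument is packaged. The paper does not interpolate directly in this proof. It derives the lemma as a corollary of the generating-set version (\cref{lemma: thickeningLemma}): it fixes a generating set $\{g_t\}_{t\in[d]}$ for the entries of $M$, notes that $\mathcal{T}(M;(g_1^{\theta},\dots,g_d^{\theta}))=c^{\theta}\,\mathcal{T}_M(\theta)$ for a constant $c$, and removes the scalar with \cref{lemma: MequivalentCM}. There the generating set is what makes the recombination map $\widehat{y}$ well defined for an arbitrary substitution $\mathbf{z}$, including $z_t=0$. The paper needs that case later, to reach $\potts{q}{0}$ in \cref{lm: I+uu^T is hard}. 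In your special case none of this machinery is needed, because the new weight $x^{\theta}$ of a class depends only on its product value $x$. So your route is shorter and self-contained, while the paper's route buys the general interpolation it reuses elsewhere.

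Your handling of the computational model is also fine, though it does more than necessary. In the paper's model the field $\mathbb{Q}(M)$ comes with a presentation in which equality is decidable, so your merging step is just the paper's removal of repeated elements of $X$; hard-wiring the relation lattice is not required. For $\theta\neq 0$ you can also avoid arithmetic with the numbers $a_t^{\theta}$ altogether. Since $x\mapsto x^{\theta}$ is injective on positive reals, $\text{\#}_{\mathcal{T}_M(\theta)}(G,x^{\theta})=\text{\#}_{M}(G,x)$.
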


The full power of the thickening gadget
allows us to interpolate even more matrices.

\begin{restatable}{definition}{generatingSet}
\label{definition: generatingSet}
    Let $\mathcal{A} \subseteq \mathbb{R}_{\neq 0}$
    be a set of non-zero real numbers.
    A finite set 
    $\{g_{t}\}_{t \in [d]} \in (\mathbb{R}_{> 1})^{d}$,
    for some integer $d \geq 0$, 
    is called a generating set of $\mathcal{A}$ if
    %(1) $\widehat{\mathfrak{L}}(g_1,g_2,\ldots, g_d)=\{\mathbf{0}\}$;(2) 
    for every $a \in \mathcal{A}$, there exists a
    \emph{unique} $(e_0, e_{1}, \dots, e_{d}) \in \{0, 1\} \times {\mathbb{Z}}^{d}$ 
    such that $a = (-1)^{e_0} {g_{1}^{e_{1}} \cdots g_{d}^{e_{d}}}$.
\end{restatable}

\begin{remark*}
    %Note that g
    Given any finite $\mathcal{A} \subset 
    \mathbb{Z}_{\neq 0}$,
    the set of prime factors of 
    $\prod_{a \in \mathcal{A}}|a|$
    forms a generating set of $\mathcal{A}$.
    In general, every finite 
    $\mathcal{A} \subset \mathbb{R}_{\neq 0}$ 
    has a generating set (see \cref{lemma: generatingSet} in
    Appendix \ref{appendix: thickeningAppendix}).
\end{remark*}
%This  allows us to define a more general
%$\mathcal{T}^{*}_{M}$.

\begin{definition}\label{definition: mathcalTLarge}
    Let $M \in \Sym{q}{}{\mathbb{R}_{\neq 0}}$, with a generating set $\{g_t\}_{t \in [d]}$  for its entries,
    %with its entries generated by some $\{g_t\}_{t \in [d]}$,
    %such that 
    $M_{ij} = (-1)^{e_{ij0}} \cdot
    g_{1}^{e_{ij1}} \cdots g_{d}^{e_{ijd}}$.
    Let $e_{t}^{*} = \min_{i, j \in [q]}e_{ijt}$
    for all $t \in [d]$.
    %We may now d
    Define $\mathcal{T}(M,\cdot): \mathbb{R}^{d} \rightarrow
    \Sym{q}{}{\mathbb{R}}$
    such that
    $\mathcal{T}(M;\mathbf{z})_{ij} = 
    (-1)^{e_{ij0}} \cdot z_{1}^{e_{ij1} - e_{1}^{*}} \cdots
    z_{d}^{e_{ijd} - e_{d}^{*}}$
    is a signed monomial in $\mathbf{z} = (z_{1}, \dots, z_{d})$
    for all $i, j \in [q]$.
\end{definition}

\begin{restatable}{lemma}{thickeningLemma}\label{lemma: thickeningLemma}
    Let $M \in \Sym{q}{}{\mathbb{R}_{\neq 0}}$, with a generating set $\{g_t\}_{t \in [d]}$  for its entries.
    %with its entries generated by some $\{g_t\}_{t \in [d]}$.
    Then, $\PlGH(\mathcal{T}(M;\mathbf{z}))$ $ \leq^{T}_{p} \PlGH(M)$
    for all $\mathbf{z} \in \mathbb{R}^{d}$.
\end{restatable}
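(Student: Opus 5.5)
The reduction will go through thickening followed by multivariate polynomial interpolation. Given a planar input graph $G=(V,E)$ with $m=|E|$, query the oracle for $Z_M(\mathbf{T_n}G)=Z_{\mathbf{T_n}(M)}(G)$ for polynomially many $n$. Write $\mathbf{T_n}(M)_{ij}=(-1)^{n e_{ij0}}\prod_t g_t^{n e_{ijt}}$. To avoid dependence on the parity of $n$, restrict to odd $n$; then $(-1)^{ne_{ij0}}=(-1)^{e_{ij0}}$. Next factor out the common scalar $\prod_t g_t^{n e_t^*}$ from every entry. By \cref{lemma: MequivalentCM}, or directly since the factor is a nonzero constant raised to the power $m$, this gives
\[
Z_{\mathbf{T_n}(M)}(G)=\Big(\prod_t g_t^{n e_t^*}\Big)^{m} Z_{\mathcal{T}(M;\mathbf{g}^n)}(G),
\]
where $\mathbf{g}^n=(g_1^n,\dots,g_d^n)$. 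So from the oracle we obtain the values $P(\mathbf{g}^n)$ of the polynomial
\[
P(\mathbf{z}):=Z_{\mathcal{T}(M;\mathbf{z})}(G)=\sum_{\mathbf{k}} c_{\mathbf{k}}\,\mathbf{z}^{\mathbf{k}}.
\]
Here $\mathbf{k}$ ranges over exponent vectors with $0\le k_t\le m\cdot\max_{ij}(e_{ijt}-e_t^*)$, which is $O(m)$ because $M$ is fixed. The coefficient $c_{\mathbf{k}}$ is an integer: a signed count of assignments $\sigma$ whose product of monomials equals $\mathbf{z}^{\mathbf{k}}$. The number of monomials is $N=O(m^d)$, which is polynomial since $d$ is a constant.

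Grouping by $\mathbf{k}$ gives $P(\mathbf{g}^n)=\sum_{\mathbf{k}} c_{\mathbf{k}}\,\mu_{\mathbf{k}}^{\,n}$ with $\mu_{\mathbf{k}}=\prod_t g_t^{k_t}$. This is a linear system in the unknowns $c_{\mathbf{k}}$, and if we take $N$ odd values $n=1,3,\dots,2N-1$, its coefficient matrix is Vandermonde in the values $\mu_{\mathbf{k}}^2$. These values are distinct by the uniqueness clause in \cref{definition: generatingSet}. That clause says each entry has a unique exponent representation, but I need more: the $g_t$ must be multiplicatively independent, so that $\prod_t g_t^{k_t}=1$ forces $\mathbf{k}=0$. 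This is the main obstacle. I will use the generating set constructed in \cref{lemma: generatingSet}, which should be multiplicatively independent. If the lemma only provides uniqueness for elements of $\mathcal{A}$, I will derive independence from it, or refine the generating set (all $g_t>1$) by passing to a basis of the free abelian group generated by the entries.

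With $\mu_{\mathbf{k}}>0$ distinct, the system is nonsingular, so we recover every $c_{\mathbf{k}}$ and can then evaluate $P(\mathbf{z})$ at any $\mathbf{z}\in\mathbb{R}^d$. On the bit-complexity side, the $c_{\mathbf{k}}$ are integers of polynomial bit size, and the model of computation in \autoref{appendix: modelComputation} lets us treat the real entries symbolically. The resulting computation of $Z_{\mathcal{T}(M;\mathbf{z})}(G)$ is exact and uses polynomially many oracle calls, so $\PlGH(\mathcal{T}(M;\mathbf{z}))\le^T_p\PlGH(M)$. Planarity holds throughout because $\mathbf{T_n}G$ is planar.
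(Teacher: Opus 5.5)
Your argument is correct and is essentially the paper's proof: thickening plus a Vandermonde system. The paper rescales $M$ so that all exponents are nonnegative, interpolates the counts $\text{\#}_M(G,x)$ indexed by the distinct signed values $x\in X(G)$, and then substitutes via the map $\widehat{y}$; you instead group by exponent vector and absorb the signs into the coefficients by using only odd $n$.

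The step you flag as the main obstacle is immediate. If $\prod_t g_t^{k_t}=1$ with $\mathbf{k}\neq\mathbf{0}$, then adding $\mathbf{k}$ to the exponent vector of any entry of $M$ gives a second representation of that entry, contradicting the uniqueness clause of \cref{definition: generatingSet}. So any generating set is automatically multiplicatively independent. You should not replace it by the one from \cref{lemma: generatingSet}, since $\mathcal{T}(M;\mathbf{z})$ depends on the chosen generating set.
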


%%% JYC to save space I am commenting it out...
%The gadget function $\mathcal{T}^{*}_{M}$ can be unwieldy, so we use the following to make it easier to work with.

A proof as well as a detailed discussion of \cref{lemma: thickeningLemma} can be found in Appendix \ref{appendix: thickeningAppendix}.
\subsection{Stretching Gadgets}\label{sec: stretchingGadgets}

For an integer $n \geq 1$, the \emph{stretching gadget}
$\mathbf{S_{n}}$ consists of two vertices
$\ell_{1}, \ell_{2}$, that are connected
by a path of length $n$
(see \cref{fig:stretching}).
\begin{figure}[htbp]
	\centering
	\scalebox{0.8}{\begin{tikzpicture}[line join=miter, draw opacity=1]

    % Define nodes
    \node[circle, draw=black, fill=black] (A) at (-1, -1){};
    \node[circle, draw=black, fill=black] (B) at (-1, 1){};
    \node[circle, draw=black, fill=black] (C) at (1, 1){};
    \node[circle, draw=black, fill=black] (D) at (1, -1){};
    
    % --- (A--B): one edge ---
    \draw[line width=0.5mm, black] (A) -- (B);
    
    % --- (B--C): one edge ---
    \draw[line width=0.5mm, black] (B) -- (C);
    
    % --- (C--D): one edge ---
    \draw[line width=0.5mm, black] (C) -- (D);
    
    % --- (D--A): one edge ---
    \draw[line width=0.5mm, black] (D) -- (A);
    
    % --- (B--D): one edge ---
    \draw[line width=0.5mm, black] (B) -- (D);

\begin{scope}[xshift = 4cm]

    % Define nodes
    \node[circle, draw=black, fill=white] (A) at (-1, 0){};
    \node[circle, draw=black, fill=white] (B) at (1, 0){};

    % Label nodes
    \node[draw=none] at (-0.75, -0.5) {$\ell_{1}$};
    \node[draw=none] at (0.75, -0.5) {$\ell_{2}$};
    
    % Define sub-nodes A -- B
    \node[circle, draw=black, fill=black, inner sep=0pt, minimum size=8] (AB1) at (-0.5, 0){};
    \node[circle, draw=black, fill=black, inner sep=0pt, minimum size=8] (AB2) at (0, 0){};
    \node[circle, draw=black, fill=black, inner sep=0pt, minimum size=8] (AB3) at (0.5, 0){};
    
    % --- (A--B): two path of length 4 ---
    \draw[line width=0.5mm, black] (A) -- (AB1);
    \draw[line width=0.5mm, black] (AB1) -- (AB2);
    \draw[line width=0.5mm, black] (AB2) -- (AB3);
    \draw[line width=0.5mm, black] (AB3) -- (B);
    
\end{scope}

% \begin{scope}[xshift = 4cm]

    % % Define nodes
    % \node[circle, draw=black, fill=black] (A) at (-1, -1){};
    % \node[circle, draw=black, fill=black] (B) at (-1, 1){};
    % \node[circle, draw=black, fill=black] (C) at (1, 1){};
    % \node[circle, draw=black, fill=black] (D) at (1, -1){};
    
    % % Define sub-nodes A -- B
    % \node[circle, draw=black, fill=black, inner sep=0pt, minimum size=8] (AB1) at (-1, 0){};
    
    % % Define sub-nodes B -- C
    % \node[circle, draw=black, fill=black, inner sep=0pt, minimum size=8] (BC1) at (0, 1){};
    
    % % Define sub-nodes C -- D
    % \node[circle, draw=black, fill=black, inner sep=0pt, minimum size=8] (CD1) at (1, 0){};
    
    % % Define sub-nodes D -- A
    % \node[circle, draw=black, fill=black, inner sep=0pt, minimum size=8] (DA1) at (0, -1){};
    
    % % Define sub-nodes B -- D
    % \node[circle, draw=black, fill=black, inner sep=0pt, minimum size=8] (BD1) at (0, 0){};
    
    % % --- (A--B): two path of length 2 ---
    % \draw[line width=0.5mm, black] (A) -- (AB1);
    % \draw[line width=0.5mm, black] (AB1) -- (B);
    
    % % --- (B--C): two path of length 2 ---
    % \draw[line width=0.5mm, black] (B) -- (BC1);
    % \draw[line width=0.5mm, black] (BC1) -- (C);
    
    % % --- (C--D): two path of length 2 ---
    % \draw[line width=0.5mm, black] (C) -- (CD1);
    % \draw[line width=0.5mm, black] (CD1) -- (D);
    
    % % --- (D--A): two path of length 2 ---
    % \draw[line width=0.5mm, black] (D) -- (DA1);
    % \draw[line width=0.5mm, black] (DA1) -- (A);
    
    % % --- (B--D): two path of length 2 ---
    % \draw[line width=0.5mm, black] (B) -- (BD1);
    % \draw[line width=0.5mm, black] (BD1) -- (D);

% \end{scope}

\begin{scope}[xshift = 8cm]

    % Define nodes
    \node[circle, draw=black, fill=black] (A) at (-1, -1){};
    \node[circle, draw=black, fill=black] (B) at (-1, 1){};
    \node[circle, draw=black, fill=black] (C) at (1, 1){};
    \node[circle, draw=black, fill=black] (D) at (1, -1){};
    
    % Define sub-nodes A -- B
    \node[circle, draw=black, fill=black, inner sep=0pt, minimum size=8] (AB1) at (-1, -0.5){};
    \node[circle, draw=black, fill=black, inner sep=0pt, minimum size=8] (AB2) at (-1, 0){};
    \node[circle, draw=black, fill=black, inner sep=0pt, minimum size=8] (AB3) at (-1, 0.5){};
    
    % Define sub-nodes B -- C
    \node[circle, draw=black, fill=black, inner sep=0pt, minimum size=8] (BC1) at (-0.5, 1){};
    \node[circle, draw=black, fill=black, inner sep=0pt, minimum size=8] (BC2) at (0, 1){};
    \node[circle, draw=black, fill=black, inner sep=0pt, minimum size=8] (BC3) at (0.5, 1){};
    
    % Define sub-nodes C -- D
    \node[circle, draw=black, fill=black, inner sep=0pt, minimum size=8] (CD1) at (1, 0.5){};
    \node[circle, draw=black, fill=black, inner sep=0pt, minimum size=8] (CD2) at (1, 0){};
    \node[circle, draw=black, fill=black, inner sep=0pt, minimum size=8] (CD3) at (1, -0.5){};
    
    % Define sub-nodes D -- A
    \node[circle, draw=black, fill=black, inner sep=0pt, minimum size=8] (DA1) at (0.5, -1){};
    \node[circle, draw=black, fill=black, inner sep=0pt, minimum size=8] (DA2) at (0, -1){};
    \node[circle, draw=black, fill=black, inner sep=0pt, minimum size=8] (DA3) at (-0.5, -1){};
    
    % Define sub-nodes B -- D
    \node[circle, draw=black, fill=black, inner sep=0pt, minimum size=8] (BD1) at (-0.5, 0.5){};
    \node[circle, draw=black, fill=black, inner sep=0pt, minimum size=8] (BD2) at (0, 0){};
    \node[circle, draw=black, fill=black, inner sep=0pt, minimum size=8] (BD3) at (0.5, -0.5){};

    % --- (A--B): two path of length 4 ---
    \draw[line width=0.5mm, black] (A) -- (AB1);
    \draw[line width=0.5mm, black] (AB1) -- (AB2);
    \draw[line width=0.5mm, black] (AB2) -- (AB3);
    \draw[line width=0.5mm, black] (AB3) -- (B);
    
    % --- (B--C): two path of length 4 ---
    \draw[line width=0.5mm, black] (B) -- (BC1);
    \draw[line width=0.5mm, black] (BC1) -- (BC2);
    \draw[line width=0.5mm, black] (BC2) -- (BC3);
    \draw[line width=0.5mm, black] (BC3) -- (C);
    
    % --- (C--D): two path of length 4 ---
    \draw[line width=0.5mm, black] (C) -- (CD1);
    \draw[line width=0.5mm, black] (CD1) -- (CD2);
    \draw[line width=0.5mm, black] (CD2) -- (CD3);
    \draw[line width=0.5mm, black] (CD3) -- (D);
    
    % --- (D--A): two path of length 4 ---
    \draw[line width=0.5mm, black] (D) -- (DA1);
    \draw[line width=0.5mm, black] (DA1) -- (DA2);
    \draw[line width=0.5mm, black] (DA2) -- (DA3);
    \draw[line width=0.5mm, black] (DA3) -- (A);
    
    % --- (B--D): two path of length 4 ---
    \draw[line width=0.5mm, black] (B) -- (BD1);
    \draw[line width=0.5mm, black] (BD1) -- (BD2);
    \draw[line width=0.5mm, black] (BD2) -- (BD3);
    \draw[line width=0.5mm, black] (BD3) -- (D);

\end{scope}

\end{tikzpicture}}
	\caption{A graph $G$, the stretching gadget $\mathbf{S_{4}}$, and $\mathbf{S_{4}}G$.}
	\label{fig:stretching}
\end{figure}

From \cref{equation: signature}, we have that for all
$i, j \in [q]$,
$$\mathbf{S_{n}}(M)_{ij} = 
\sum_{\substack{\tau: V(\mathbf{S_{n}}) \rightarrow [q]\\
\tau(\ell_{1}) = i, \tau(\ell_{2}) = j}}
\prod_{(u, v) \in E(\mathbf{S_{n}})}M_{ij}
= \sum_{k_{1}, \dots, k_{n-1} \in [q]}
M_{ik_{1}}M_{k_{1}k_{2}} \cdots M_{k_{n-1}j}
= (M^{n})_{ij}.$$

% Given any planar multi-graph $G = (V, E),$ and an
% integer $n \ge 1$, we replace every edge in $G$ by
% a  path of length $n$ to obtain $S_{n}G = (S_{n}V, S_{n}E)$. We have
% \begin{align*}
%     Z_{M}(S_{n}G) 
%     &= \sum_{\sigma: V \rightarrow [q]}
%     \prod_{(u, v) \in E} \left(
%     \sum_{\tau: [n-1] \rightarrow [q]}
%     M_{\sigma(u) \tau(1)} M_{\tau(1)\tau(2)} \cdots
%     M_{\tau(n-2) \tau(n-1)} M_{\tau(n-1) \sigma(v)}\right)
%     = Z_{S_{n}M}(G),
% \end{align*}
% where $S_{n}M = M^{n} \in \Sym{q}{}(\mathbb{R})$.
% It follows that $\PlGH(S_{n}M) \le_p^T \PlGH(M)$ for all $n \geq 1$.

We see that $\mathbf{S_{n}}(M) \in \PlEdge(M)$ for all
$M \in \Sym{q}{}{\mathbb{R}}$, for all $n \geq 1$.
Let $M \in \Sym{q}{pd}{\mathbb{R}}$
with eigenvalues $\lambda_{1}, \dots, \lambda_{q} > 0$.
There exists an 
%orthonormal 
%JYC: just called orthogonal matrix , including the requiement that columns are unit vectors.
orthogonal matrix 
of unit eigenvectors $H$
such that $M = HDH^{\tt{T}}$, where
$D = \diag(\lambda_{1}, \dots, \lambda_{q})$
is a diagonal matrix.
Note that $M^{n} = HD^{n}H^{\tt{T}}$ for
all $n \geq 1$.
So, 
with oracle access to $\PlGH(M)$, we can compute
$Z_{M}(\mathbf{S_{n}}G) = Z_{\mathbf{S_{n}}(M)}(G)$
for polynomially many $n$.
From this, we obtain a Vandermonde system
and 
this lets us compute $Z_{N}(G)$ for other
matrices $N$ (see Appendix \ref{appendix: stretchingAppendix}).
A consequence is the following:

\begin{restatable}{definition}{DefMathcalS}\label{definition: mathcalS}
    Let $M = HDH^{\tt{T}} \in \Sym{q}{pd}{\mathbb{R}}$, where $H\in \mathrm{SO}_q(\R)$, and
    $D = \diag(\lambda_{1}, \dots, \lambda_{q})$.
    Define $\mathcal{S}_{M}: \mathbb{R} \rightarrow 
    \Sym{q}{pd}{\mathbb{R}}$ such that
    $\mathcal{S}_{M}(\theta) = HD^{\theta}H^{\tt{T}},\mbox{~for all real $\theta \in \mathbb{R}$},$
    where $D^{\theta} = \diag(\lambda_{1}^{\theta},
    \dots, \lambda_{q}^{\theta})$.
\end{restatable}

It can be shown that $\mathcal{S}_M$ is independent of the choice of the decomposition $HDH^{\tt{T}}$ (see \cref{lemma: real power of pd matrix} in Appendix \ref{appendix: stretchingAppendix}).

\begin{restatable}{lemma}{stretchingLemma}\label{lemma: stretchingLemma}
    Let $M \in \Sym{q}{pd}{\mathbb{R}}$.
    Then $\PlGH(\mathcal{S}_{M}(\theta)) \leq^{T}_{p} \PlGH(M)$
    for all $\theta \in \mathbb{R}$.
\end{restatable}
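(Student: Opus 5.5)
The plan is standard stretching interpolation. We group the eigenvalues of $M$ by their distinct values, express $Z_{\mathbf{S_n}(M)}(G)$ as a linear combination of $n$-th powers of products of these values, and solve for the coefficients with a Vandermonde system. We then reassemble those coefficients with $\theta$-th powers in place of $n$-th powers.

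First, write $M = HDH^{\tt T}$ and let $\mu_1 > \dots > \mu_r > 0$ be the distinct eigenvalues of $M$. Let $P_k$ be the orthogonal projection onto the $\mu_k$-eigenspace, so that $M^n = \sum_{k} \mu_k^n P_k$ for every $n\ge 1$. Likewise $\mathcal{S}_M(\theta) = \sum_k \mu_k^\theta P_k$; this also shows independence from the chosen decomposition. Given a planar input $G=(V,E)$ with $m=|E|$, expand
\[
Z_{M^n}(G) = \sum_{\sigma: V\to[q]} \prod_{(u,v)\in E} \sum_{k} \mu_k^n (P_k)_{\sigma(u)\sigma(v)} .
\]
Distributing the product over edges, each term corresponds to an assignment of an index $k_e\in[r]$ to every edge. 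Group these terms by the count vector $\mathbf{c}=(c_1,\dots,c_r)$ with $\sum_k c_k=m$, where $c_k$ is the number of edges assigned index $k$. This gives
\[
Z_{M^n}(G) = \sum_{\mathbf{c}} x_{\mathbf{c}} \Big(\prod_k \mu_k^{c_k}\Big)^n ,
\]
where $x_{\mathbf{c}}$ is independent of $n$. The number of vectors $\mathbf{c}$ is $\binom{m+r-1}{r-1} = O(m^{q-1})$, which is polynomial. The same expansion with $n$ replaced by $\theta$ gives $Z_{\mathcal{S}_M(\theta)}(G) = \sum_{\mathbf{c}} x_{\mathbf{c}} (\prod_k \mu_k^{c_k})^\theta$.

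Next, we query the oracle on $\mathbf{S_n}G$ (which is planar) for $n = 1,\dots,N$, with $N$ the number of distinct values among the products $\prod_k \mu_k^{c_k}$. This yields a Vandermonde system in the aggregated unknowns $y_w = \sum_{\mathbf{c}:\,\prod \mu_k^{c_k}=w} x_{\mathbf{c}}$. The nodes $w$ are distinct and positive, so the system is nonsingular and the $y_w$ are uniquely determined. Finally we output $\sum_w y_w w^\theta$. This equals $Z_{\mathcal{S}_M(\theta)}(G)$, because coincidences $\prod\mu_k^{c_k}=\prod \mu_k^{c'_k}$ persist after raising to the $\theta$-th power (the bases are positive reals).

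The main obstacle is that the matrix entries are arbitrary reals. Grouping the terms by value $w$ requires knowing which products $\prod_k \mu_k^{c_k}$ coincide, i.e., knowing the multiplicative relations among the $\mu_k$. I would handle this as the paper does for thickening, via the computational model of \autoref{appendix: modelComputation}: the reduction is allowed to depend non-uniformly on the fixed $M$ and $\theta$. This dependence covers the finite data describing the multiplicative lattice of $\{\mu_k\}$ (e.g., a generating set as in \cref{definition: generatingSet} for the eigenvalues), together with the needed arithmetic on the numbers $\mu_k^\theta$. Given this data, the identification of equal nodes and the Vandermonde solve take polynomially many arithmetic operations in $m$. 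This completes $\PlGH(\mathcal{S}_M(\theta)) \le^T_p \PlGH(M)$.
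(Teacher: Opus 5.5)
Your proof is correct, but it is organized differently from the paper's. The paper obtains this lemma in two lines as a corollary of \cref{lemma: latticeInterpolation}. Since the eigenvalues are positive, $\mathfrak{L}(\lambda_1,\dots,\lambda_q)\subseteq\mathfrak{L}(\lambda_1^\theta,\dots,\lambda_q^\theta)$, so one may take $\Delta=D^\theta$. The interpolation itself is delegated to the conformal lattice interpolation of \cite{CaiFST26}, and well-definedness of $\mathcal{S}_M$ is proved separately in \cref{lemma: real power of pd matrix}. You instead do the interpolation by hand. The spectral projections $P_k$ give well-definedness for free. You aggregate coefficients by distinct node values $w=\prod_k\mu_k^{c_k}$ and solve an ordinary nonsingular Vandermonde system. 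You then reassemble $\sum_w y_w w^\theta$ using the same key fact as the paper: a multiplicative coincidence among positive reals survives $w\mapsto w^\theta$. Your route is self-contained, and it makes explicit where the non-uniform model is used, namely in identifying equal products. For instance, this can be done through a generating set of the $\mu_k$, whose elements are forced to be multiplicatively independent by the uniqueness clause of \cref{definition: generatingSet}. The paper leaves this step inside the cited lemma. The paper's route buys economy: \cref{lemma: latticeInterpolation} is needed anyway for general targets $H\Delta H^{\tt T}$ in \cref{thm: only look at lattice of eigenvalues} and \cref{lm: reduced lattice lemma}, so stretching comes for free. Your argument would in fact prove that general lemma as well. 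For count vectors of equal total, the assignment $\prod_k\lambda_k^{c_k}\mapsto\prod_k\Delta_k^{c_k}$ is well defined as soon as $\mathfrak{L}(\lambda_1,\dots,\lambda_q)\subseteq\mathfrak{L}(\Delta_1,\dots,\Delta_q)$.
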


The full power of the stretching gadget
allows us to interpolate even more matrices.
%(see Appendix \ref{appendix: stretchingAppendix}).
\begin{restatable}{definition}{latticeSet}\label{definition: latticeSet}
    Let $(a_1,a_2,\ldots,a_{q})\in \R_{\neq 0}^q$.
    Define
    
    $$\widehat{\mathfrak{L}}(a_1,\ldots,a_q)
    =\left\{\mathbf{x}\in\Z^q \mid \prod_{i=1}^q a_i^{x_i}=1\right\},
    \qquad
    \mathfrak{L}(a_1,\ldots,a_q)
    =\left\{\mathbf{x}\in\widehat{\mathfrak{L}}(a_1,\ldots,a_q) \mid \sum_{i=1}^q x_i=0\right\}.$$ 
    
\end{restatable}

\begin{restatable}{lemma}{latticeInterpolation}\label{lemma: latticeInterpolation}
    Let $M \in \text{Sym}^{\tt{F}}_{q}(\mathbb{R})$,
    such that $M = HDH^{\tt{T}}$, where
    $D = \text{\rm diag}(\lambda_{1}, \dots, \lambda_{q})$.
    Then $\PlGH(H\Delta H^{\tt{T}}) \leq^{T}_{p} \PlGH(M)$ 
    for any diagonal matrix
    $\Delta = \text{diag}(\Delta_{1}, \dots, \Delta_{q})$,
    such that
    $\Delta_{i} \in \mathbb{R}_{\neq 0}$ for all $i \in [q]$,
    and $\mathfrak{L}(\lambda_{1}, \dots, \lambda_{q})
    \subseteq \mathfrak{L}(\Delta_{1}, \dots, \Delta_{q})$.
\end{restatable}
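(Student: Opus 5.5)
We prove \cref{lemma: latticeInterpolation} by stretching interpolation. The plan has four steps: express $Z_{M}(\mathbf{S_{n}}G)$ as a polynomial in the eigenvalue powers, use the stretching oracle to recover its coefficients grouped by distinct monomial values, regroup the same expansion for $H\Delta H^{\tt T}$, and show the grouping is compatible using the lattice inclusion.

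Fix a planar input $G=(V,E)$ with $m=|E|$. Since $M^n = HD^nH^{\tt T}$, we have $(M^n)_{ij}=\sum_{k\in[q]} H_{ik}H_{jk}\lambda_k^n$. Expanding $Z_{M}(\mathbf{S_{n}}G)=Z_{M^n}(G)$ over $\sigma$ and over a choice $k_e\in[q]$ for each edge gives
\[
Z_{M^n}(G)=\sum_{\mathbf{c}}\, C_{\mathbf{c}}\,\Big(\prod_{k=1}^q \lambda_k^{c_k}\Big)^{n},
\]
where $\mathbf{c}=(c_1,\dots,c_q)$ ranges over $\mathbb{Z}_{\ge 0}^q$ with $\sum_k c_k=m$, and $c_k$ counts the edges assigned eigenvalue index $k$. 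The coefficient $C_{\mathbf{c}}=\sum_{\sigma}\sum_{(k_e):\text{type }\mathbf{c}}\prod_{e=(u,v)}H_{\sigma(u)k_e}H_{\sigma(v)k_e}$ does not depend on $n$. There are only $\binom{m+q-1}{q-1}$ types, which is polynomially many for fixed $q$.

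Next, group the types by the value $x_{\mathbf{c}}=\prod_k\lambda_k^{c_k}$. For each distinct value $x$, set $a_x=\sum_{\mathbf{c}:x_{\mathbf{c}}=x}C_{\mathbf{c}}$. Querying the oracle for $n=1,\dots,N$, where $N$ is the number of distinct values, gives a Vandermonde system $\sum_x a_x x^n$ in the distinct nonzero values $x$. This system is invertible, so we recover every $a_x$. Computing the values $x$ and deciding which types share a value is a fixed finite computation depending only on $M$, not on $G$. As the model-of-computation appendix discusses, these equality tests can be hardwired into the reduction.

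Now expand $Z_{H\Delta H^{\tt T}}(G)$ in the same way: it equals $\sum_{\mathbf{c}}C_{\mathbf{c}}\prod_k\Delta_k^{c_k}$, with the same coefficients $C_{\mathbf{c}}$. The key claim is that whenever $x_{\mathbf{c}}=x_{\mathbf{c}'}$ for two types, we also have $\prod_k\Delta_k^{c_k}=\prod_k\Delta_k^{c'_k}$. Indeed, $\mathbf{c}-\mathbf{c}'\in\mathbb{Z}^q$ has coordinate sum $m-m=0$ and satisfies $\prod\lambda_k^{c_k-c'_k}=1$, so it lies in $\mathfrak{L}(\lambda_1,\dots,\lambda_q)\subseteq\mathfrak{L}(\Delta_1,\dots,\Delta_q)$. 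Hence we may define $y_x=\prod_k\Delta_k^{c_k}$ for any type $\mathbf{c}$ in the group of $x$, and conclude $Z_{H\Delta H^{\tt T}}(G)=\sum_x a_x y_x$, which we compute from the recovered $a_x$.

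The main subtlety is that equal monomial values are exactly what the lattice $\mathfrak{L}$ captures; the homogeneity $\sum x_i=0$ comes for free because all types have total degree $m$. A second point is that some $\lambda_k$ may be negative, since $M$ is only full rank. The definition of $\mathfrak{L}$ uses exact integer-power products of signed reals, so signs are handled correctly. Nonzero eigenvalues guarantee every $x\neq 0$, which keeps the Vandermonde system nonsingular. If some $\Delta_k$ values are irrational, the output is still well-defined under the paper's model of computation, because the $y_x$ are fixed constants of the reduction.
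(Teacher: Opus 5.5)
Your proof is correct and follows the paper's route. The paper writes the same expansions of $Z_{M}(\mathbf{S}_{\ell}G)$ and $Z_{H\Delta H^{\tt T}}(G)$ with common coefficients, and then invokes the conformal lattice interpolation lemma of \cite{CaiFST26} for the step you carry out by hand (grouping types by value, Vandermonde inversion, and transfer through $\mathfrak{L}(\lambda_1,\dots,\lambda_q)\subseteq\mathfrak{L}(\Delta_1,\dots,\Delta_q)$). One small tightening: deciding which types share a value is not literally independent of $G$, since there are $\binom{|E|+q-1}{q-1}$ types. It does reduce to testing $\mathbf{c}-\mathbf{c}'\in\mathfrak{L}(\lambda_1,\dots,\lambda_q)$ for a fixed lattice whose basis can be hardwired, so it runs in polynomial time.
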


\cref{lemma: latticeInterpolation} is first proved in \cite{cai2023complexity}.
We give an alternative proof based on the ``conformal lattice interpolation'' proposed in \cite{CaiFST26} in Appendix \ref{appendix: stretchingAppendix}.

\subsection{Hardness Source}
\begin{lemma}[\cite{vertigan2005computational}]\label{lemma: Potts Model}
    The $q$-state Potts model
     $\PlGH({\tt{Potts}}_{q}(x))$ is $\#$P-hard for any integer $q \geq 3$,
    and non-negative real $x \neq 1$,
    where ${\tt{Potts}}_{q}(x) \in \text{Sym}_{q}(\mathbb{R})$
    is the matrix with entries
    $({\tt{Potts}}_{q}(x)_{ij})_{i,j\in[q]}$ such that
    ${\tt{Potts}}_{q}(x)_{ij} = 1$ if $i \neq j$,
    and ${\tt{Potts}}_{q}(x)_{ij} = x$ otherwise.
\end{lemma}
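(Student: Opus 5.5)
This lemma is imported from Vertigan's work on the Tutte polynomial of planar graphs, so the plan is not to re-prove that theorem from scratch. Instead I would reduce the statement to it through the standard Fortuin--Kasteleyn correspondence.

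\textbf{Step 1: rewrite the partition function.} Writing $x = 1 + v$, each edge weight becomes $1 + v\,[\sigma(u)=\sigma(v)]$. Expanding the product over edges and summing over colourings gives
$$Z_{{\tt Potts}_q(x)}(G)=\sum_{A\subseteq E} q^{k(A)}\, v^{|A|},$$
where $k(A)$ is the number of connected components of $(V,A)$. This is, up to an explicit nonzero factor, the Tutte polynomial $T(G;X,Y)$ evaluated on the hyperbola $(X-1)(Y-1)=q$, at the point $Y = x$ and $X = 1 + q/(x-1)$. The factor is a product of powers of $q$ and $x-1$, easily computed from $|V|$, $|E|$ and $k(G)$. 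Both the scaling and the choice of $(X,Y)$ are valid because $x \neq 1$.

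\textbf{Step 2: apply Vertigan's theorem.} Vertigan~\cite{vertigan2005computational} shows that evaluating the Tutte polynomial of planar graphs is \#P-hard at every real point except:
\begin{itemize}
\item points on the hyperbolas $(X-1)(Y-1)\in\{1,2\}$;
\item the special points $(1,1)$, $(-1,-1)$, $(\omega,\omega^2)$ and $(\omega^2,\omega)$, with $\omega = e^{2\pi i/3}$.
\end{itemize}
For integer $q\ge 3$, the hyperbola $(X-1)(Y-1)=q$ avoids both exceptional hyperbolas. I would then check that none of the special points lies on it for real $x\ge 0$ with $x\ne1$:
\begin{itemize}
\item $(1,1)$ would require $x=1$, which is excluded.
\item $(-1,-1)$ lies on $(X-1)(Y-1)=4$, so for $q=4$ it would need $x=-1<0$, which is excluded.
\item The points involving $\omega$ are not real.
\end{itemize}
Hence for every such $q$ and $x$, computing $T$ at the point from Step 1 on planar graphs is \#P-hard.

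\textbf{Step 3: conclude, noting the main subtlety.} Since $Z_{{\tt Potts}_q(x)}(G)$ and $T(G;X,Y)$ differ by a polynomial-time computable nonzero factor, the hardness transfers to $\PlGH({\tt Potts}_q(x))$. The one point needing care is the model of computation for arbitrary real $x$, which the paper addresses in \autoref{appendix: modelComputation}. Vertigan's hardness is stated for algebraic points, and for transcendental $x$ I would appeal to the paper's convention that such reductions are interpreted over that model.

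There is a special case at $x=0$: this is the proper $q$-colouring point, and it is still covered by Step 2, since the point $(1-q,0)$ is not exceptional.
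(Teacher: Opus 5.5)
Your proposal is correct and is essentially the paper's route, since the paper simply cites Vertigan: your Fortuin--Kasteleyn identity $Z_{{\tt{Potts}}_{q}(x)}(G)=q^{k(G)}(x-1)^{|V|-k(G)}\,T\bigl(G;1+q/(x-1),\,x\bigr)$, together with your check that the hyperbola $(X-1)(Y-1)=q$ meets Vertigan's exceptional set only at non-real points or at $x=-1$, is exactly what that citation amounts to. The one loose end, transcendental $x$, needs an actual argument rather than an appeal to convention: since $x>0$ and $x\neq 1$, the powers $x^{k}$ are pairwise distinct, so thickening interpolation recovers every coefficient $N_k$ of $\sum_k N_k x^{k}$ from the oracle, and hence the value $N_0$ at $x=0$, which is an algebraic point covered by Vertigan (and the only case the paper actually uses).
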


\begin{definition}\label{def: lattice of potts}
    Define 
    $\LPotts{q}=\{(x_1,x_2,\ldots,x_q)\in \Z^q\mid \sum_{i\in [q]}x_i=0 \text{ and }x_1=0\}.$
\end{definition}

Note that $\PlGH({\tt{Potts}}_{q}(0))$ is the problem of counting vertex coloring with $q$ colors on planar graphs.
And $\LPotts{q}$ is the spectral lattice of the Potts model (see \cref{rmk: lattice of Potts} in \autoref{appendix: Circulant}).

\section{Hardness Criteria}\label{sec: I+uuT}
In this section, we prove two hardness criteria for $\PlGH$ problems, the spectral criterion (\cref{thm: only look at lattice of eigenvalues}) and the determinant criterion (\cref{thm: determinant criterion}), by reducing from the Potts model.
These criteria hold even when the matrix defining the $\PlGH$ problem has quantum symmetry, i.e., having a nontrivial $\qut(M)$.
\cref{thm: only look at lattice of eigenvalues} is established by showing that the $\PlGH$ problem defined by a rank one perturbation of the identity matrix is \#P-hard (\cref{lm: I+uu^T is hard}).
We establish \cref{lm: I+uu^T is hard} by performing stretching (\cref{lm: scalar on u in I+uuT by stretching}) to eliminate multiplicative relations among diagonals and off-diagonals of $M$ (\cref{lm: generating set of I+uuT}). 
Then, we appeal to the thickening gadgets and generating sets to reduce from the $q$-vertex coloring problem $\PlGH(\potts{q}{0})$.
\cref{thm: determinant criterion} is an application of \cref{thm: only look at lattice of eigenvalues} coupled with results in matrix perturbation theory.

\subsection{Spectral Criterion}

\begin{lemma}\label{lm: scalar on u in I+uuT by stretching}
    Let $\mathbf{u}=(u_1,u_2,\ldots,u_q)^{\tt T}\in \Rp^q$ with $||\mathbf{u}||_2=1$ and $I=I_q$.
    Then
    $\PlGH(cI+\mathbf{u}\mathbf{u}^{\tt{T}})\le_p^T \PlGH(I+\mathbf{u}\mathbf{u}^{\tt{T}})$ 
    for any $c\in\Rp.$
\end{lemma}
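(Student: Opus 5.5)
The plan is to use the stretching gadget in its real-exponent form, \cref{lemma: stretchingLemma}, together with the scaling invariance of \cref{lemma: MequivalentCM}.

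First, write down the spectral structure of $M = I+\mathbf{u}\mathbf{u}^{\tt T}$. Since $\|\mathbf{u}\|_2=1$, the vector $\mathbf{u}$ is an eigenvector with eigenvalue $2$. Every vector orthogonal to $\mathbf{u}$ is an eigenvector with eigenvalue $1$. Hence $M$ is positive definite, so $M\in\Sym{q}{pd}{\mathbb{R}}$ and \cref{lemma: stretchingLemma} applies.

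Choose an orthogonal $H\in\mathrm{SO}_q(\mathbb{R})$ whose first column is $\mathbf{u}$; if needed, flip the sign of another column to get determinant $1$. Then $M = HDH^{\tt T}$ with $D=\diag(2,1,\dots,1)$. For any real $\theta$,
\[
\mathcal{S}_M(\theta) = H\,\diag(2^{\theta},1,\dots,1)\,H^{\tt T} = I + (2^{\theta}-1)\,\mathbf{u}\mathbf{u}^{\tt T}.
\]
This identity holds because $H\,\diag(1,\dots,1)\,H^{\tt T}=I$, and the correction in the first coordinate contributes $(2^\theta-1)\mathbf{u}\mathbf{u}^{\tt T}$. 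By \cref{lemma: stretchingLemma}, $\PlGH(I+(2^\theta-1)\mathbf{u}\mathbf{u}^{\tt T})\le_p^T\PlGH(M)$ for every $\theta\in\mathbb{R}$.

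Next, match the target matrix by scaling. For $c>0$ we have $cI+\mathbf{u}\mathbf{u}^{\tt T} = c\,(I + c^{-1}\mathbf{u}\mathbf{u}^{\tt T})$. Choose $\theta=\log_2(1+1/c)$. This is a legitimate real number since $1+1/c>1$, and it gives $2^\theta-1=1/c$. Then \cref{lemma: MequivalentCM} with the nonzero scalar $c$ yields
\[
\PlGH(cI+\mathbf{u}\mathbf{u}^{\tt T})\equiv_p^T\PlGH(I+c^{-1}\mathbf{u}\mathbf{u}^{\tt T})\le_p^T\PlGH(I+\mathbf{u}\mathbf{u}^{\tt T}).
\]
Composing Turing reductions finishes the argument.

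The only step needing care is applying \cref{lemma: stretchingLemma}, since it is stated via the decomposition in \cref{definition: mathcalS}. Independence of the decomposition is already asserted (\cref{lemma: real power of pd matrix}), so the choice of $H$ with first column $\mathbf{u}$ is harmless. The repeated eigenvalue $1$ also causes no trouble, because $D^\theta$ acts as the identity on that whole eigenspace. Positivity of the entries of $\mathbf{u}$ is not needed for this argument; it presumably matters only in later lemmas.
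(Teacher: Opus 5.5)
Your proof is correct and matches the paper's: both write $I+\mathbf{u}\mathbf{u}^{\tt T}=H\,\mathrm{diag}(2,1,\ldots,1)H^{\tt T}$, derive $\mathcal{S}_{I+\mathbf{u}\mathbf{u}^{\tt T}}(\theta)=I+(2^\theta-1)\mathbf{u}\mathbf{u}^{\tt T}$, and finish with the scaling lemma via $cI+\mathbf{u}\mathbf{u}^{\tt T}=c\,(I+c^{-1}\mathbf{u}\mathbf{u}^{\tt T})$. Your explicit choice $\theta=\log_2(1+1/c)$ just makes concrete the paper's remark that $2^\theta-1$ ranges over all of $(0,\infty)$.
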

\begin{proof}
    Note that $\mathbf{u}\mathbf{u}^{\tt{T}}$ is a rank-1 matrix. It possesses a single non-zero eigenvalue 1 with the associated eigenvector $\mathbf{v}_1 = {\mathbf{u}}$. 
    All other $q-1$ eigenvalues are 0, with corresponding orthonormal eigenvectors $\mathbf{v}_i$ for $2 \le i \le q$.
    So $I+\mathbf{u}\mathbf{u}^{\tt{T}}=H\mathrm{diag}(2,1,\ldots,1)H^{\tt{T}},$
    where $H=(\mathbf{v}_1,\mathbf{v}_2,\ldots,\mathbf{v}_q),$
    and 
   
    \begin{equation}
        \begin{aligned}
            \mathcal{S}_{I+\mathbf{u}\mathbf{u}^{\tt{T}}}(\theta)=&H\mathrm{diag}(2^\theta,1,\ldots,1)H^{\tt{T}}\\
            =&I+H\mathrm{diag}(2^\theta-1,0,\ldots,0)H^{\tt{T}}\\
            =&I+(2^\theta-1)\mathbf{u}\mathbf{u}^{\tt{T}},~~\theta\in \Rp.
        \end{aligned}
    \end{equation}
    
    Let $c(\theta):=2^\theta-1$, $\theta\in \Rp$, then $\mathcal{S}_{I+\mathbf{u}\mathbf{u}^{\tt{T}}}(\theta)=I+c(\theta)\mathbf{u}\mathbf{u}^{\tt{T}}.$
    Notice that $c(\theta)$ is an increasing function on $(0,\infty)$, and its range is $(0,\infty).$
    Since $\PlGH(\mathcal{S}_{I+\mathbf{u}\mathbf{u}^{\tt{T}}}(\theta))\le_p^T\PlGH(I+\mathbf{u}\mathbf{u}^{\tt{T}})$ for any $\theta\in\Rp$ by \cref{lemma: stretchingLemma}, we have $\PlGH(I+c\mathbf{u}\mathbf{u}^{\tt{T}})\le_p^T \PlGH(I+\mathbf{u}\mathbf{u}^{\tt{T}})$
    for any $c\in\Rp.$
    Then, by \cref{lemma: MequivalentCM}, $\PlGH(cI+\mathbf{u}\mathbf{u}^{\tt{T}}) \equiv_p^T \PlGH(I+ c^{-1}\mathbf{u}\mathbf{u}^{\tt{T}}) \le_p^T \PlGH(I+\mathbf{u}\mathbf{u}^{\tt{T}})$
    for any $c\in\Rp$.
\end{proof}

\vspace{-0.1in}

\begin{restatable}{lemma}{GeneratingSetofRankOnePerturbation}\label{lm: generating set of I+uuT}
    Let $q\ge 3$, $\mathbf{u}=(u_1,u_2,\ldots,u_q)^{\tt T}\in \Rp^q$ and $I=I_q$.
    There exists some $c\in \Rp$, such that there are \emph{disjoint} generating sets of diagonals and off-diagonals of $cI+\mathbf{uu}^{\tt T}$ respectively.
    Moreover, their union is a generating set of all entries of $cI+\mathbf{uu}^{\tt T}$.
\end{restatable}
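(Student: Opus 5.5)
The entries of $cI+\mathbf{uu}^{\tt T}$ are $c+u_i^2$ on the diagonal and $u_iu_j$ ($i\neq j$) off the diagonal. The off-diagonal entries do not depend on $c$. I will fix a generating set $\{g_t\}_{t\in[d]}$ of the finite set $\{u_iu_j: i\neq j\}$, which exists by \cref{lemma: generatingSet}, and then choose $c$ so that the diagonal entries are multiplicatively independent of the $g_t$ and of each other. Concretely, the goal is a $c>0$ such that the $q+d$ numbers $c+u_1^2,\dots,c+u_q^2,g_1,\dots,g_d$ (after grouping equal diagonal values) satisfy no nontrivial multiplicative relation $\prod_i (c+u_i^2)^{x_i}\prod_t g_t^{y_t}=\pm 1$ with integer exponents. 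If all $c+u_i^2>1$, the distinct values among the $c+u_i^2$ then form a generating set of the diagonal entries. This set is disjoint from $\{g_t\}$, and the union is a generating set of all entries: uniqueness of exponents follows from independence, and existence is immediate.

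To find $c$, I will use a countability argument. For each distinct value set $\{a_1,\dots,a_k\}=\{u_i^2\}$ and each nonzero integer vector $(\mathbf{x},\mathbf{y})$ with $\mathbf{x}\neq\mathbf 0$, consider
\[
f_{\mathbf{x},\mathbf{y}}(c)=\sum_{s}x_s\log(c+a_s)+\sum_t y_t\log g_t .
\]
This is real-analytic on $(0,\infty)$. If $\mathbf{x}=\mathbf 0$ and $\mathbf{y}\ne\mathbf0$, then $f$ is a nonzero constant, because the $g_t$ are multiplicatively independent. This follows from the uniqueness in \cref{definition: generatingSet}: $1=g^{\mathbf 0}$ has a unique exponent vector once $1$ is expressible, and otherwise I can adjoin it harmlessly.

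When $\mathbf{x}\neq\mathbf0$, I claim $f$ is not identically zero. Its derivative is $\sum_s x_s/(c+a_s)$, a rational function whose partial fractions have poles at the distinct points $-a_s$ with nonzero residues $x_s$, so it is not identically zero. Hence each such $f$ has only countably many zeros in $(0,\infty)$, since a nonconstant analytic function has isolated zeros. There are countably many pairs $(\mathbf{x},\mathbf{y})$, and the sign $\pm$ only matters when values are negative, which they are not here. Therefore the union of the bad sets is countable. I can then pick $c$ outside it with $c>1$, which also guarantees $c+u_i^2>1$.

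I also need to handle the normalization that generating elements lie in $\mathbb{R}_{>1}$. Off-diagonal entries may be less than $1$, and the $g_t$ already satisfy the normalization by their construction. The main subtle point is the step that nonvanishing of the derivative requires the $a_s$ to be distinct, which is why I group equal $u_i^2$ first: equal diagonal entries share a generator. The hypothesis $q\ge3$ plays no essential role here beyond ensuring off-diagonals exist nontrivially. I would double-check that the lemma's intended "disjoint" means no shared generator, which the independence argument provides.
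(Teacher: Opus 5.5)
Your proof is correct, but it chooses $c$ by a different route than the paper.

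The paper takes $c>1$ transcendental over the finitely generated field $F=\mathbb{Q}(g_1,\dots,g_d,v_1,\dots,v_r)$, where the $v_i$ are the distinct values among the $u_i$. Suppose a relation $\prod_i (c+v_i^2)^{x_i}\prod_t g_t^{y_t}=1$ held with $\mathbf{x}\neq\mathbf{0}$. Moving the negative exponents to the other side would make $c$ a root of a polynomial $P(X)\in F[X]$. This $P$ is nonzero by unique factorization in $F[X]$, because the linear factors $X+v_i^2$ are pairwise distinct. So a single transcendence condition rules out every relation at once.

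You instead treat each relation $(\mathbf{x},\mathbf{y})$ separately as a real-analytic function of $c$. You show it is nonconstant from the partial-fraction structure of $\sum_s x_s/(c+a_s)$, then discard the countable union of zero sets. Distinctness of the $u_i^2$ does the same job in both proofs: distinct irreducible factors in the paper, distinct simple poles in yours. Both ultimately rest on countability; for the paper this is the countability of the algebraic closure of $F$.

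What each buys:
\begin{itemize}
\item The paper's version gives one explicit, uniform, purely algebraic sufficient condition on $c$.
\item Yours stays in the ``avoid countably many zero sets'' idiom the paper uses elsewhere, and needs no appeal to transcendence.
\end{itemize}

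One small point: your justification that the $g_t$ are multiplicatively independent (the remark about adjoining $1$) is roundabout. The clean reason is that the set of off-diagonal entries is nonempty. A nontrivial relation $\prod_t g_t^{y_t}=1$ would give any off-diagonal entry a second exponent vector, contradicting uniqueness in the definition of a generating set. Alternatively, take the $g_t$ to be the free basis built in the proof of the generating-set lemma, as the paper does.
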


\begin{lemma}\label{lm: I+uu^T is hard}
    Let $q\ge 3$, $\mathbf{u}
    %=(u_1,u_2,\ldots,u_q)^{\tt T}
    \in \Rp^q$ with $||\mathbf{u}||_2=1$.
    %and $I=I_q$.
    Then $\PlGH(I+\mathbf{u}\mathbf{u}^{\tt{T}})$ is \numPhard.
\end{lemma}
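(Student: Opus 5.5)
By \cref{lm: scalar on u in I+uuT by stretching}, it suffices to show that $\PlGH(cI+\mathbf{u}\mathbf{u}^{\tt T})$ is \numPhard{} for one suitable $c\in\Rp$. Choose $c$ as in \cref{lm: generating set of I+uuT}. Let $N=cI+\mathbf{u}\mathbf{u}^{\tt T}$; its entries are $N_{ii}=c+u_i^2$ and $N_{ij}=u_iu_j$ for $i\neq j$, all positive. By that lemma there is a generating set $\{g_t\}_{t\in[d]}$ of all entries of $N$. It splits as a disjoint union $\{g_t\}_{t\in T_1}\sqcup\{g_t\}_{t\in T_2}$. The elements indexed by $T_1$ generate the diagonal entries and those indexed by $T_2$ generate the off-diagonal entries. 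By uniqueness of the representation, each diagonal entry has exponent vector supported on $T_1$, and each off-diagonal entry has exponent vector supported on $T_2$. All signs are $+$.

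Next I apply \cref{lemma: thickeningLemma}. This gives $\PlGH(\mathcal{T}(N;\mathbf{z}))\le_p^T\PlGH(N)$ for every $\mathbf{z}\in\R^d$. Here $\mathcal{T}(N;\mathbf{z})_{ij}=\prod_t z_t^{e_{ijt}-e_t^*}$. I set $z_t=1$ for all $t\in T_1$ and $z_t=0$ for all $t\in T_2$, with the convention $0^0=1$ inherent in the monomial definition.

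Now I check the resulting matrix. For $t\in T_1$, every off-diagonal exponent is $e_{ijt}=0$, so $e_t^*\le 0$. For $t\in T_2$, every diagonal exponent is $0$, so $e_t^*\le 0$, and $e_t^*=0$ exactly when all off-diagonal exponents of $g_t$ are nonnegative. I need to handle the sign of exponents carefully, so I treat two cases for each $t\in T_2$.
\begin{itemize}
\item If $e_t^*<0$, I instead take $z_t$ as a free variable and use $z_t\to$ a value making the diagonal contributions vanish.
\item More cleanly, I choose $z_t=0$ only after confirming that the shifted exponent $e_{ijt}-e^*_t$ is positive on the diagonal. Since the diagonal has $e_{iit}=0$, this shifted exponent is $-e_t^*$, which is positive when $e_t^*<0$.
\end{itemize}
So the plan is to pick $\mathbf{z}$ so that the diagonal becomes $0$ or the off-diagonal becomes constant, producing a matrix of the form $\alpha J+\beta I$. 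Any such matrix is a scalar multiple of ${\tt Potts}_q(x)$, which is \numPhard{} by \cref{lemma: Potts Model} together with \cref{lemma: MequivalentCM}, provided $x\neq1$ and $x\ge0$.

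Concretely, I set $z_t=1$ for $t\in T_2$. Then every off-diagonal entry becomes $\prod_{t\in T_1}z_t^{-e_t^*}$, a common value $\beta(\mathbf{z})$. Every diagonal entry becomes $\prod_{t\in T_1}z_t^{e_{iit}-e_t^*}$. Setting also $z_t=1$ for $t\in T_1$ gives the all-ones matrix, which is useless. Instead I take $z_t=y$ for all $t\in T_1$ with a real parameter $y$. Then the off-diagonal value is $y^{-\sum e_t^*}$ and the diagonals are $y^{\deg_i}$. The diagonals are still not uniform, so this approach needs more work.

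The better approach is to exploit the freedom in $\mathbf{z}$ differently. The key observation is that the monomials for distinct diagonal entries are distinct, since the entries are distinct reals generically, or at least unique as representations. I would choose $\mathbf{z}$ on $T_1$ so that all diagonal monomials take the same value $x\neq1$ while the off-diagonals equal $1$. Equivalently, I need $\mathbf{w}\in\R^{T_1}$, with $z_t=e^{w_t}$, solving $\langle \mathbf{e}_{ii}-\mathbf{e}^*,\mathbf{w}\rangle=s$ for all $i$ and $\langle -\mathbf{e}^*,\mathbf{w}\rangle=0$ on the $T_1$ part. This is a linear system.

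This step is the main obstacle: the system must be solvable with $s\neq0$. Solvability fails only if a combination of the diagonal exponent vectors, with coefficients summing to a nonzero value, lies in the span of the others. That would correspond to a multiplicative relation among the $c+u_i^2$. I expect to secure the needed independence through the choice of $c$ in \cref{lm: generating set of I+uuT}. For generic $c$ the diagonal values are multiplicatively independent, and I would strengthen the choice of $c$ if needed so that they form part of a basis. If the system cannot be made fully uniform, a fallback is to let $z_t\to0$ on $T_2$ via interpolation, killing the off-diagonals relative to the diagonal and reducing to a matrix that is \numPhard{} by earlier results.
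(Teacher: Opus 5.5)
Your reduction to $N=cI+\mathbf{u}\mathbf{u}^{\tt T}$ via \cref{lm: scalar on u in I+uuT by stretching} is correct, as is your observation that diagonal (resp.\ off-diagonal) exponent vectors are supported on $T_1$ (resp.\ $T_2$). But the decisive step, exhibiting a $\mathbf{z}$ for which $\mathcal{T}(N;\mathbf{z})$ is a Potts matrix, is never completed.

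Your first choice ($z_t=1$ on $T_1$, $z_t=0$ on $T_2$) fails. Since $u_iu_j<1$ and every $g_t>1$, some off-diagonal generator has $e_t^*<0$. So the diagonal is killed, but an off-diagonal entry survives (as $1$) only if it attains the coordinatewise minimum exponent on $T_2$. This gives a $0/1$ pattern that is not ${\tt Potts}_q(0)$ in general.

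Your linear-system route rests on an independence property you only say you expect to hold. It is not part of the statement of \cref{lm: generating set of I+uuT}: a generating set of the diagonals allows, e.g., diagonal values $g$ and $g^2$, which force $w=s$ and $2w=s$, so no uniform $s\neq 0$ exists. Finally, your fallback of driving the off-diagonal part to $0$ pushes toward a diagonal matrix. Its partition function factors over connected components and is trivially computable, so it cannot give hardness.

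The missing idea is already inside your own one-parameter family $z_t=y$ ($t\in T_1$), $z_t=1$ ($t\in T_2$): take $y=0$. This works because of the generating set actually constructed in the proof of \cref{lm: generating set of I+uuT}, with $c>1$ transcendental over the field generated by the off-diagonal generators and the $u_i$. There the diagonal generators are the distinct values $c+u_i^2>1$ themselves. So each diagonal exponent vector is a standard basis vector on $T_1$, which gives $e_t^*=0$ for all $t\in T_1$ and $\deg_i=1$ for every $i$.

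With $\mathbf{z}=\mathbf{0}$ on $T_1$ and $\mathbf{1}$ on $T_2$, every diagonal entry becomes $0$ and every off-diagonal entry becomes $1$. That is, $\mathcal{T}(N;\mathbf{z})={\tt Potts}_q(0)$, the $q$-coloring problem, and this is exactly the paper's argument. With this structure your family in fact yields ${\tt Potts}_q(y)$ for every $y$, so the non-uniform-diagonal worry disappears, and your linear system is solved by $w_t=s$.

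To close the gap you must either invoke this structure from the lemma's proof, or prove the needed multiplicative independence of $\{c+u_i^2\}$ together with the off-diagonal generators yourself. The transcendence-of-$c$ argument in that proof does this.
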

\begin{proof}
    %By \cref{lemma: MequivalentCM,lm:  scalar on u in I+uuT by stretching}, we may normalize $||\mathbf{u}||_2=1.$
    By \cref{lm: generating set of I+uuT},
    we may pick $c\in \Rp$, 
    $\{g_t\mid 1\le t\le d_1\}$ and $\{g_t\mid d_1+1\le t\le d\}$ be generating sets of diagonals and off-diagonals of $cI+\mathbf{uu}^{\tt T}$ respectively, so that $\{g_t\mid t\in[d]\}$ is a generating set of all entries of $cI+\mathbf{uu}^{\tt T}.$
    By \cref{lemma: thickeningLemma}, $\PlGH(\mathcal{T}(cI+\mathbf{uu}^{\tt T};\mathbf{z}))\le_p^T \PlGH(cI+\mathbf{uu}^{\tt T})$ for $\mathbf{z}=\mathbf{0}^{d_1}\mathbf{1}^{d-d_1}.$
    Notice that $\mathcal{T}(cI+\mathbf{uu}^{\tt T};\mathbf{z})=\potts{q}{0}$, which defines the $q$-vertex coloring problem. 
    So $\PlGH(cI+\mathbf{uu}^{\tt T})$ is \#P-hard.
    By \cref{lm: scalar on u in I+uuT by stretching}, $\PlGH(I+\mathbf{uu}^{\tt T})$ is \#P-hard.
\end{proof}

\begin{theorem}\label{thm: only look at lattice of eigenvalues}
    Suppose $q\ge 3.$
    Let $M\in \Sym{q}{F}{\Rp}$, and $\lambda_1,\lambda_2,\ldots,\lambda_q$ be the eigenvalues of $M$, where $\lambda_1>|\lambda_i|,\,\forall 2\le i\le q.$
    If $\mathfrak{L}(\lambda_1,\lambda_2,\ldots,\lambda_q)\subseteq \LPotts{q},$ then $\PlGH(M)$ is \numPhard.
\end{theorem}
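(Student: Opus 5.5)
We reduce from \cref{lm: I+uu^T is hard} using the lattice interpolation of \cref{lemma: latticeInterpolation}.

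First, write $M = HDH^{\tt T}$ with $D=\diag(\lambda_1,\dots,\lambda_q)$ and $H$ orthogonal, with first column $\mathbf{v}_1$. Since $M$ is entrywise positive and $\lambda_1>|\lambda_i|$ for $i\ge 2$, Perron--Frobenius says $\lambda_1$ is simple and $\mathbf{v}_1$ can be chosen with all coordinates strictly positive. Set $\mathbf{u}=\mathbf{v}_1$, so that $\mathbf{u}\in\Rp^q$ and $\|\mathbf{u}\|_2=1$.

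Next, choose $\Delta=\diag(2,1,\ldots,1)$. Then
$$H\Delta H^{\tt T}=I+H\diag(1,0,\ldots,0)H^{\tt T}=I+\mathbf{u}\mathbf{u}^{\tt T},$$
exactly as in the proof of \cref{lm: scalar on u in I+uuT by stretching}. To apply \cref{lemma: latticeInterpolation} we must check $\mathfrak{L}(\lambda_1,\dots,\lambda_q)\subseteq\mathfrak{L}(2,1,\dots,1)$. An integer vector $\mathbf{x}$ lies in $\mathfrak{L}(2,1,\ldots,1)$ iff $2^{x_1}=1$ and $\sum_i x_i=0$, that is, iff $x_1=0$ and $\sum_i x_i=0$. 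So $\mathfrak{L}(2,1,\ldots,1)=\LPotts{q}$, and the required inclusion is exactly the hypothesis. Since $M$ has full rank, all $\lambda_i\neq0$, and all $\Delta_i\neq 0$, so the lemma applies. It gives $\PlGH(I+\mathbf{u}\mathbf{u}^{\tt T})\le_p^T\PlGH(M)$.

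Finally, by \cref{lm: I+uu^T is hard} (using $q\ge3$), $\PlGH(I+\mathbf{u}\mathbf{u}^{\tt T})$ is \#P-hard, so $\PlGH(M)$ is \#P-hard.

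The one delicate point is the strict positivity of $\mathbf{u}$, which \cref{lm: I+uu^T is hard} requires. This is where entrywise positivity of $M$ (via Perron--Frobenius) and the simplicity of $\lambda_1$ are used: simplicity ensures $\mathbf{u}$ is determined up to sign, so choosing the positive sign gives a valid decomposition $M=HDH^{\tt T}$ with first column $\mathbf{u}$. The rest is direct bookkeeping of the lattices.
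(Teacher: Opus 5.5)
Your proposal is correct and follows the paper's proof essentially step for step. You diagonalize $M=HDH^{\tt T}$ and note $\LPotts{q}=\mathfrak{L}(2,1,\ldots,1)$, so \cref{lemma: latticeInterpolation} gives $\PlGH(H\,\mathrm{diag}(2,1,\ldots,1)H^{\tt T})\le_p^T\PlGH(M)$; you then identify this matrix as $I+\mathbf{u}\mathbf{u}^{\tt T}$ with $\mathbf{u}$ the positive unit Perron vector and conclude by \cref{lm: I+uu^T is hard} (your sign discussion is harmless but unnecessary, since $\mathbf{u}\mathbf{u}^{\tt T}$ is unchanged under $\mathbf{u}\mapsto-\mathbf{u}$).
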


\begin{proof}
    The existence of $\lambda_1$ is guaranteed by the Perron-Frobenius theorem.
    Let $M=HDH^{\tt{T}}$, where $D=\mathrm{diag}(\lambda_1,\lambda_2,\ldots,\lambda_q)$ and $H\in \mathrm{SO}_q(\mathbb{R})$.
    By assumption we know
    $
    \mathfrak{L}(\lambda_1,\lambda_2,\ldots,\lambda_q)\subseteq \LPotts{q}=\mathfrak{L}(2,1,\ldots,1).
    $
    By \cref{lemma: latticeInterpolation},
    $\PlGH(N)\le_p^T \PlGH(M)$,
    where $N=H\mathrm{diag}(2,1,\ldots,1)H^{\tt{T}}$.
    Note that $N=I+H\mathrm{diag}(1,0,\ldots,0)H^{\tt{T}}=I+\mathbf{u}\mathbf{u}^{\tt{T}}$, where $\mathbf{u}\in \Rp^q$ is the norm 1 Perron eigenvector corresponding to $\lambda_1$. 
    By \cref{lm: I+uu^T is hard}, $\PlGH(I+\mathbf{u}\mathbf{u}^{\tt{T}})$ is \numPhard.
    Thus $\PlGH(M)$ is \numPhard.
\end{proof}

\subsection{Determinant Criterion}

By \cref{thm: only look at lattice of eigenvalues}, if we can realize a matrix $M'\in \Sym{q}{F}{\Rp}$ from $M\in \Sym{q}{}{\Rp}$ whose spectral lattice is contained in $\LPotts{q}$, then $\PlGH(M)$ is \#P-hard.
For example, we may consider $M'=\mathcal{T}_M(\theta)$ as $\theta\to 0^+.$
This is a small perturbation of the all one matrix $J_q=\mathcal{T}_M(0).$
So the eigenvalues $(\lambda_1(\theta),\ldots, \lambda_q(\theta))$ of $M'$ are small perturbations of $(q,0,\ldots, 0)$, which are eigenvalues of $J_q.$
By the following result in matrix perturbation theory (see for example \cite{kato1995perturbation} Chapter II, Theorem 6.1), $\lambda_i(\theta)$'s are analytic functions of $\theta$:

\begin{lemma}[Rellich]\label{lm: Rellich}
    Let $M(\theta):\R \to \Sym{q}{}{\R}$ be a matrix function that depends on $\theta$ entry-wise real analytically.
    Then the $q$ roots of the characteristic polynomial of $M(\theta)$ can be arranged so that each root $\lambda_j(\theta)$ for $1\le j\le q$ is a real analytic function of $\theta$.
\end{lemma}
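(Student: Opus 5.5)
The plan is to first prove the statement locally around every $\theta_0\in\R$ and then glue the local pieces together along the real line. The first step is to complexify. Each entry $M(\theta)_{ij}$ is real analytic, so it extends to a holomorphic function on a complex neighborhood $U$ of $\theta_0$. The characteristic polynomial
$$p(\theta,\lambda)=\det(\lambda I-M(\theta))=\lambda^q+a_{q-1}(\theta)\lambda^{q-1}+\cdots+a_0(\theta)$$
is then monic in $\lambda$, and its coefficients $a_k$ are holomorphic on $U$.

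The second step applies the Newton--Puiseux theorem to $p$ at $\theta_0$. It says that the $q$ roots near $\theta_0$ split into cycles. Each cycle of length $m$ is given by a convergent series
$$\lambda(\theta)=\sum_{k\ge 0} c_k(\theta-\theta_0)^{k/m},$$
and the members of the cycle are obtained by replacing $(\theta-\theta_0)^{1/m}$ with $\omega^j(\theta-\theta_0)^{1/m}$, where $\omega=e^{2\pi\mathfrak{i}/m}$.

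The third step is the key one: using symmetry to rule out fractional exponents. For real $\theta$ the matrix $M(\theta)$ is real symmetric, so all $q$ roots are real. Suppose some $c_k\neq 0$ with $m\nmid k$, and take the smallest such $k$. Subtract the integer-power part, which is real analytic and real on $\R$, and divide by $t^{k/m}$ with $t=\theta-\theta_0$. Letting $t\to 0^+$ along the reals, the remainder tends to $c_k\omega^{jk}$ for every $j$. Since the roots are real, each $c_k\omega^{jk}$ must be real, and therefore $\omega^{2k}=1$. Repeating the argument with $t\to 0^-$ brings in the extra factor $e^{\mathfrak{i}\pi k/m}$, and reality again forces this factor to be $\pm1$. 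Combining the two constraints gives $m\mid k$, which is a contradiction. Hence every cycle has only integer exponents, so every cycle has length $1$, and each root near $\theta_0$ is given by a convergent power series in $\theta-\theta_0$. The multiset of these $q$ germs is uniquely determined by $p$.

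The final step is globalization. Start at $\theta=0$ with the $q$ analytic germs and continue them analytically along $\R$ in both directions. By the local result, at every point the multiset of roots is again given by $q$ analytic germs. Because analytic functions have the identity property, on the overlap of two local neighborhoods the germs must match up, counted with multiplicity. Where branches cross, one follows each branch as the continuation of its own germ rather than ordering the roots by size. Since $\R$ is connected and simply connected and the roots stay bounded on compact intervals, this continuation yields $q$ real analytic functions $\lambda_1,\dots,\lambda_q$ on all of $\R$ that list the roots of $p$ with multiplicity. I expect the main obstacle to be the reality argument in the third step, in particular making the limiting argument rigorous and handling roots of high multiplicity where the discriminant vanishes identically. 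For that case I would first pass to the squarefree factorization of $p$ over the ring of analytic germs.
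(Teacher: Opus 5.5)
Your argument is correct. The paper gives no proof of this lemma and simply quotes Kato (Ch.~II, Thm.~6.1). Your proposal is a self-contained version of the classical argument: Puiseux expansion of each eigenvalue cycle at $\theta_0$; reality of all roots for real $\theta$ on \emph{both} sides of $\theta_0$, which rules out fractional exponents and hence branch points; and gluing of the local germs along $\mathbb{R}$ via the identity theorem.

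A few small points to tighten:
\begin{itemize}
\item The claim that the integer-power part is real on $\mathbb{R}$ needs a reason. It equals $\frac{1}{m}\sum_{j=0}^{m-1}\lambda_j(t)$, because averaging over the cycle annihilates every term $c_l t^{l/m}$ with $m\nmid l$, and each $\lambda_j(t)$ is real. Its power-series coefficients are therefore real, and so are those of the final local germs.
\item Once $j=0$ and $t\to 0^+$ give $c_k\in\mathbb{R}\setminus\{0\}$, the $t\to 0^-$ limit alone forces $e^{\mathfrak{i}\pi k/m}\in\mathbb{R}$, i.e.\ $m\mid k$. The intermediate constraint $\omega^{2k}=1$ is not needed.
\item The globalization needs only connectedness of $\mathbb{R}$: take the supremum of $b$ such that an analytic arrangement exists on $[0,b)$, and match germs on overlaps. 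Neither boundedness of the roots nor simple connectivity plays a role.
\end{itemize}

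What your route buys over the bare citation is transparency about exactly where symmetry enters, namely only through the reality of the spectrum for real $\theta$.
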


\begin{lemma}\label{lm: none-zero linear coe of eigenvalues implies hardness}
    Suppose $q\ge 3.$
    Let $M\in \Sym{q}{}{\Rp}$.
    %Consider $\mathcal{T}_M(\theta)$ as $\theta\to 0^+.$
    Let $\{\lambda_j(\theta)\mid 1\le j\le q\}$ be the eigenvalues of $\mathcal{T}_M(\theta).$ 
    We may pick $\lambda_1(\theta)$ to be the unique largest eigenvalue in a small neighborhood $(0,\epsilon)$.
    Moreover,
    if $\lambda_j(\theta)=\Theta(\theta)$ as $\theta\to 0^+$ for every $2\le j\le q$, then $\PlGH(M)$ is \#P-hard.
\end{lemma}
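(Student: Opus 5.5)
The plan is to show that for all but countably many $\theta\in(0,\epsilon)$ the matrix $\mathcal{T}_M(\theta)$ satisfies the hypotheses of \cref{thm: only look at lattice of eigenvalues}. Then, via \cref{lemma: simpleThickening}, we get $\PlGH(\mathcal{T}_M(\theta))\le_p^T\PlGH(M)$, and $\PlGH(M)$ is \#P-hard.

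First, the setup. $\mathcal{T}_M(\theta)_{ij}=e^{\theta\log M_{ij}}$ is entrywise real analytic in $\theta$ on all of $\R$, so \cref{lm: Rellich} gives analytic eigenvalue branches $\lambda_1(\theta),\dots,\lambda_q(\theta)$. At $\theta=0$ the matrix is $J_q$, with spectrum $(q,0,\dots,0)$. The eigenvalue $q$ is simple, so by continuity exactly one branch, call it $\lambda_1$, stays near $q$ on some $(0,\epsilon)$. The others stay near $0$. Hence $\lambda_1(\theta)>|\lambda_j(\theta)|$ there, which is the first claim and matches the Perron–Frobenius requirement of \cref{thm: only look at lattice of eigenvalues}. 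By hypothesis $\lambda_j(\theta)=\Theta(\theta)$ for $j\ge2$, so after shrinking $\epsilon$ no $\lambda_j$ vanishes on $(0,\epsilon)$. Thus $\mathcal{T}_M(\theta)\in\Sym{q}{F}{\Rp}$ there.

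Next, the lattice condition. Consider a nonzero $\mathbf{x}\in\Z^q$ with $\sum x_i=0$ and $x_1\neq 0$; these are exactly the vectors that must be excluded from $\mathfrak{L}$. For each such $\mathbf{x}$, define
\[
f_{\mathbf{x}}(\theta)=\sum_{j=1}^q x_j\log|\lambda_j(\theta)|\quad\text{on }(0,\epsilon).
\]
This is real analytic, since each $|\lambda_j|$ is analytic and nonvanishing there. Note that $\mathbf{x}\in\mathfrak{L}(\lambda_1(\theta),\dots)$ forces $f_{\mathbf{x}}(\theta)=0$. We claim $f_{\mathbf{x}}\not\equiv0$, by asymptotics as $\theta\to0^+$. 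We have $\log\lambda_1(\theta)\to\log q$, while $\log|\lambda_j(\theta)|=\log\theta+O(1)$ for $j\ge2$. So
\[
f_{\mathbf{x}}(\theta)=\Bigl(\sum_{j\ge2}x_j\Bigr)\log\theta+O(1)=-x_1\log\theta+O(1),
\]
which diverges because $x_1\ne0$. This is where the $\Theta(\theta)$ hypothesis is used: it gives all $j\ge 2$ the same logarithmic rate, so the coefficient of $\log\theta$ is exactly $-x_1$.

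Finally, a nonzero analytic function on an interval has countably many zeros, and there are countably many $\mathbf{x}$. So off a countable set of $\theta\in(0,\epsilon)$, no such $\mathbf{x}$ lies in $\mathfrak{L}$, i.e.\ $\mathfrak{L}(\lambda_1(\theta),\dots,\lambda_q(\theta))\subseteq\LPotts{q}$. Picking such a $\theta$ and applying \cref{thm: only look at lattice of eigenvalues} finishes the proof.

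The main obstacle is rigor about the branches. The uniform $O(1)$ in the asymptotics needs $|\lambda_j(\theta)|/\theta$ bounded above and below near $0$, which is exactly what $\Theta(\theta)$ provides. We also need the branches to stay nonzero and analytic on a common interval, and the labeling of $\lambda_1$ via Rellich to be consistent with its being the largest eigenvalue; both follow from simplicity of the eigenvalue $q$.
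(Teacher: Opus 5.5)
Your proposal is correct and follows essentially the same route as the paper: Rellich's analytic branches, and the function $f_{\mathbf{x}}(\theta)=\sum_j x_j\ln|\lambda_j(\theta)|$, whose $\ln\theta$-coefficient is $-x_1$ and so is nonzero exactly when $\mathbf{x}\notin\LPotts{q}$. Then a countable-zeros argument picks $\theta$, followed by \cref{thm: only look at lattice of eigenvalues} and \cref{lemma: simpleThickening}. You are a bit more explicit than the paper about the full-rank requirement and about identifying $\lambda_1$ as the dominant branch, which the paper leaves implicit.
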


\begin{proof}
    By \cref{lm: Rellich}, we may assume $\lambda_1(\theta)=q+\sum_{n=1}^{+\infty}b_n^{(1)}\theta^n$ and $\lambda_j(\theta)=\sum_{n=1}^{+\infty}b_n^{(j)}\theta^n\,(\forall 2\le j\le q)$ where the coefficients $b_n^{(1)},b_n^{(j)}$ are all real. 
    By assumption, $b_1^{(j)}\neq 0$ for every $2\le j\le q.$
    So there exists $\epsilon>0$ such that $\lambda_j(\theta)\neq 0,\forall 1\le j\le q,\forall\theta\in (0,\epsilon).$
    For any $\mathbf{x}=(x_1,x_2,\ldots,x_q)\in \Z^q\setminus \{\mathbf{0}\}$ such that $\sum_{j=1}^q x_j=0$, define $f_\mathbf{x}(\theta):=\sum_{j=1}^q x_j \ln |\lambda_j(\theta)|$ for $\theta\in (0,\epsilon).$
    The condition $\mathbf{x}\in \mathfrak{L}(\lambda_1(\theta),\ldots,\lambda_q(\theta))$ implies $f_{\mathbf{x}}(\theta)=0$. 
    We have
 
    \begin{equation}\label{eq: lattice condition of pertubation of J}
        \begin{aligned}
            &f_{\mathbf{x}}(\theta)= 
            x_1\ln |\lambda_1(\theta)|+\sum_{j=2}^{q}x_j\ln |\lambda_j(\theta)|= x_1\ln\left|q+\sum_{n=1}^{+\infty}b_n^{(1)}\theta^n\right|+\sum_{j=2}^q x_j \ln\left|\sum_{n=1}^{+\infty}b_n^{(j)}\theta^n\right|\\
            =&x_1\ln q+x_1\ln\left|1+\sum_{n=1}^{+\infty}\frac{b_n^{(1)}}{     q}\theta^n\right|+\sum_{j=2}^qx_j\ln |b_1^{(j)}\theta| +\sum_{j=2}^q x_j \ln\left|1+\sum_{n=2}^{+\infty}\frac{b_n^{(j)}}{|b_1^{(j)}|}\theta^{n-1}\right|\\
            =&
            -\sum_{j=2}^q x_j \ln (1/\theta)+ x_1 \ln q +\sum_{j=2}^q x_j \ln |b_1^{(j)}| + o(1), \qquad \text{ as $\theta\to 0^+$}.
            %%% JYC better to say o(1) where you had O(1)
        \end{aligned}
    \end{equation}
    
    The leading coefficient of $f_{\mathbf{x}}(\theta)$ is $-\sum_{j=2}^q x_j=x_1$.
    Notice that $\mathbf{x}\in \LPotts{q}\iff x_1=0.$
    So if $\mathbf{x}\notin \LPotts{q}$ then $x_1\neq 0$, and $f_{\mathbf{x}}(\theta)$ is a non-zero analytic function.
    Hence the zeros of $f_{\mathbf{x}}(\theta)$ is countable.
    Further, the union of zero sets of $f_{\mathbf{x}}(\theta)$ for all $\mathbf{x}\in \mathbb{Z}^q$ such that $\mathbf{x}\notin \LPotts{q}$ is countable.
    Thus, there exists $\theta\in(0,\epsilon)$ so that $f_{\mathbf{x}}(\theta)\neq 0$ for every $\mathbf{x}\notin \LPotts{q}$. 
    For this $\theta$, $\mathbf{x}\in \mathfrak{L}(\lambda_1(\theta),\ldots,\lambda_q(\theta))\implies f_{\mathbf{x}}(\theta)=0\implies\mathbf{x}\in \LPotts{q}$.
    So
    $\mathfrak{L}(\lambda_1(\theta),\ldots,\lambda_q(\theta))\subseteq \LPotts{q}$. 
    Then $\PlGH(\mathcal{T}_M(\theta))$ is \#P-hard by \cref{thm: only look at lattice of eigenvalues}.
    Thus $\PlGH(M)$ is \#P-hard by \cref{lemma: simpleThickening}.
\end{proof}

Matrix perturbation theory not only tells us the eigenvalues $\lambda_j(\theta)$ are analytic, but also tells us their asymptotic rate as $\theta\to 0^+$:

\begin{lemma}[\cite{kato1995perturbation} Chapter II, Theorem 5.4]\label{lm: Kato}
Under the assumption of \cref{lm: Rellich}, let $\lambda$ be an eigenvalue of $M(0)$ of multiplicity $m$, and $\{\lambda_j({\theta})\mid 1\le j\le m\}$ be the analytic eigenvalue branches of $M(\theta)$ satisfying $\lambda_j(0)=\lambda$.
Then
\(
    \lambda_j(\theta) = \lambda + b_j^{}\theta  + o(\theta),\, j = 1, \dots, m,
\)
where $b_j^{}$ are the eigenvalues of $P M'(0) P$ in the subspace $V = P (\R^q)$ (with possible repetitions), where $P$ is the projection onto the eigen-space of $M(0)$ corresponding to $\lambda$, and $M'(\theta)$ is the entry-wise derivative.
%%% JYC repeated?
\end{lemma}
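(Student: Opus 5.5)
The plan is to derive the first-order expansion from analytic eigenvectors rather than eigenvalues alone. I would use the stronger form of \cref{lm: Rellich} valid for real symmetric analytic families (Kato, Chapter II, \S6). It says that besides analytic eigenvalue branches $\lambda_j(\theta)$ one can choose real analytic vectors $\mathbf{v}_j(\theta)$, $1\le j\le q$, that are orthonormal for every $\theta$ and satisfy $M(\theta)\mathbf{v}_j(\theta)=\lambda_j(\theta)\mathbf{v}_j(\theta)$. Given this, the lemma follows from a one-line differentiation and Taylor's theorem.

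\textbf{Step 1 (the basis at $\theta=0$).} Relabel so that the branches with $\lambda_j(0)=\lambda$ are exactly $j=1,\dots,m$. This is possible because $\lambda$ has multiplicity $m$ as a root of the characteristic polynomial of $M(0)$. At $\theta=0$, the vectors $\mathbf{v}_1(0),\dots,\mathbf{v}_m(0)$ are orthonormal eigenvectors of $M(0)$ for $\lambda$. Since the $\lambda$-eigenspace $V=P(\R^q)$ of the symmetric matrix $M(0)$ has dimension $m$, these vectors form an orthonormal basis of $V$.

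\textbf{Step 2 (differentiate).} Differentiating the eigen-equation at $\theta=0$ gives
\[
M'(0)\mathbf{v}_j(0)+M(0)\mathbf{v}_j'(0)=\lambda_j'(0)\mathbf{v}_j(0)+\lambda\,\mathbf{v}_j'(0).
\]
Apply $P$ to both sides. Since $M(0)$ is symmetric, $P$ is the orthogonal projection onto $V$ and commutes with $M(0)$, so $PM(0)=\lambda P$. Hence the terms $PM(0)\mathbf{v}_j'(0)$ and $\lambda P\mathbf{v}_j'(0)$ cancel. Using $P\mathbf{v}_j(0)=\mathbf{v}_j(0)$, what remains is
\[
PM'(0)P\,\mathbf{v}_j(0)=\lambda_j'(0)\,\mathbf{v}_j(0),\qquad 1\le j\le m.
\]
So the operator $PM'(0)P$ restricted to $V$ is diagonal in the basis $\{\mathbf{v}_j(0)\}_{j\le m}$, with eigenvalues $b_j:=\lambda_j'(0)$ listed with multiplicity.

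\textbf{Step 3 (expand).} Analyticity of $\lambda_j$ gives $\lambda_j(\theta)=\lambda+\lambda_j'(0)\theta+O(\theta^2)=\lambda+b_j\theta+o(\theta)$, which is the claim.

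\textbf{Main obstacle.} The real content is the existence of analytic orthonormal eigenvectors in Step 1. Analyticity of the eigenvalues, which is all \cref{lm: Rellich} provides, is not enough, and the difficulty is precisely at a degenerate eigenvalue where branches cross. If that form of Rellich's theorem may not be assumed, I would argue as follows.
\begin{itemize}
\item Take the total eigenprojection $P(\theta)=\frac{-1}{2\pi \mathfrak{i}}\oint_\Gamma (M(\theta)-z)^{-1}\,dz$, where $\Gamma$ is a small circle around $\lambda$ enclosing no other eigenvalue of $M(0)$. It is analytic in $\theta$, has constant rank $m$, and satisfies $P(0)=P$.
\item Transport $V$ analytically by Kato's transformation function $U(\theta)$, which satisfies $U(\theta)P=P(\theta)U(\theta)$ with $U(0)=I$.
\item Study the $m\times m$ analytic family $\widetilde M(\theta)=U(\theta)^{-1}M(\theta)U(\theta)|_V$. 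Its eigenvalues are $\lambda_1(\theta),\dots,\lambda_m(\theta)$.
\item Write $\widetilde M(\theta)=\lambda I+\theta\,PM'(0)P|_V+O(\theta^2)$, using $U'(0)=[P'(0),P]$ and $PP'(0)P=0$.
\item Conclude by continuity of eigenvalues applied to $(\widetilde M(\theta)-\lambda I)/\theta$ as $\theta\to 0$.
\end{itemize}
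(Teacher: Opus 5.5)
Your argument is correct. The paper gives no proof of its own; it cites Kato, Chapter II, Theorem 5.4, and the argument behind that theorem is essentially your fallback. Because $\lambda$ is semisimple, $(M(0)-\lambda I)P=0$. Hence the reduced operator $\theta^{-1}(M(\theta)-\lambda I)P(\theta)$, or your $(\widetilde M(\theta)-\lambda I)/\theta$ on the fixed space $V$, tends to $PM'(0)P$ as $\theta\to 0$. Its eigenvalues on the range of $P(\theta)$ are $(\lambda_j(\theta)-\lambda)/\theta$, $j\le m$, and continuity of eigenvalues finishes.

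Your main route is genuinely different. You assume the strong form of Rellich's theorem (an analytic orthonormal eigenbasis that persists through eigenvalue crossings) and differentiate the eigen-equation, a Hellmann--Feynman computation. This buys brevity and a little extra information: the limits $\mathbf{v}_j(0)$ form an orthonormal eigenbasis of $PM'(0)P|_V$, matched branch by branch with the slopes $\lambda_j'(0)$. The price is that it leans on symmetry and on a theorem strictly stronger than the one being proved. Kato's route needs only differentiability of $M$ at $0$ and semisimplicity of $\lambda$, with no symmetry or analyticity.

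Two small points. First, you should say why the branches supplied by the strong Rellich theorem are the ones in the statement: any two analytic arrangements of the roots of the characteristic polynomial agree up to relabeling, by the identity theorem. Second, in the fallback you do not need $U'(0)=[P'(0),P]$ or $PP'(0)P=0$. The first-order term of $PU(\theta)^{-1}M(\theta)U(\theta)P$ is $P\bigl(M'(0)+[M(0),U'(0)]\bigr)P$, and the commutator part vanishes simply because $PM(0)=M(0)P=\lambda P$.
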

As a consequence, we are able to prove the following theorem:

\begin{theorem}\label{thm: determinant criterion}
    Suppose $q\ge 3.$
    Let $M\in \Sym{q}{}{\Rp}$.
    Define $L\in \Sym{q}{}{\R}$ with $L_{ij}=\ln M_{ij}$.
    If
     \(
    \det \left[
    \begin{matrix}
    L  & \mathbf{1} \\
    \mathbf{1}^{\tt{T}}  & 0\\
    \end{matrix}\right] \neq 0
    \)
    where $\mathbf{1}\in \mathbb{R}^q$ is the all-one vector,
    then $\PlGH(M)$ is \#P-hard.
\end{theorem}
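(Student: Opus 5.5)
We reduce to \cref{lm: none-zero linear coe of eigenvalues implies hardness}: it suffices to show that, for $M(\theta)=\mathcal{T}_M(\theta)$, every eigenvalue branch $\lambda_j(\theta)$ with $\lambda_j(0)=0$ satisfies $\lambda_j(\theta)=\Theta(\theta)$ as $\theta\to 0^+$. The entries $M_{ij}^\theta=e^{\theta L_{ij}}$ are entire in $\theta$, so \cref{lm: Rellich} and \cref{lm: Kato} apply. We have $M(0)=J_q=\mathbf{1}\mathbf{1}^{\tt T}$ and $M'(0)=L$. Since $J_q$ has the simple eigenvalue $q$ and the eigenvalue $0$ with multiplicity $q-1$, on the eigenspace $V=\mathbf{1}^\perp$, \cref{lm: Kato} gives $\lambda_j(\theta)=b_j\theta+o(\theta)$ for $2\le j\le q$. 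Here the $b_j$ are the eigenvalues of $PLP$ restricted to $V$, with $P=I-\frac1q J_q$ the orthogonal projection onto $\mathbf{1}^\perp$. Hence it suffices to show that every such $b_j$ is nonzero, i.e.\ that $PLP|_V$ is invertible on $V$.

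The main step is to prove that $PLP|_V$ is nonsingular if and only if the bordered determinant $\det\begin{bmatrix}L&\mathbf{1}\\ \mathbf{1}^{\tt T}&0\end{bmatrix}$ is nonzero. Suppose $PLPv=0$ for some $0\neq v\in V$. Since $Pv=v$, we get $PLv=0$, so $Lv=c\mathbf{1}$ for some scalar $c$. Also $\mathbf{1}^{\tt T}v=0$. Then the vector $(v,-c)$ is a nonzero vector in the kernel of the bordered matrix, because
\[
\begin{bmatrix}L&\mathbf{1}\\ \mathbf{1}^{\tt T}&0\end{bmatrix}\begin{bmatrix}v\\ -c\end{bmatrix}=\begin{bmatrix}Lv-c\mathbf{1}\\ \mathbf{1}^{\tt T}v\end{bmatrix}=\mathbf{0}.
\]
This contradicts the hypothesis. (The converse also holds: if $(v,s)$ is in the kernel with $v\ne 0$, then $v\in V$ and $PLv=-sP\mathbf{1}=0$; and $v=0$ forces $s\mathbf{1}=0$, so $s=0$.) Consequently all $b_j\neq 0$ for $2\le j\le q$, and $\lambda_j(\theta)=\Theta(\theta)$.

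It remains to check the other hypotheses of \cref{lm: none-zero linear coe of eigenvalues implies hardness}. The branch $\lambda_1(\theta)\to q$ is the unique largest eigenvalue near $0$, and the entries of $\mathcal{T}_M(\theta)$ are positive. For small $\theta>0$ all $\lambda_j(\theta)$ are nonzero, so $\mathcal{T}_M(\theta)\in\Sym{q}{F}{\Rp}$ and \cref{thm: only look at lattice of eigenvalues} applies inside that lemma. The lemma then gives that $\PlGH(M)$ is \#P-hard.

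The only delicate point is the linear-algebra equivalence between invertibility of the compression $PLP|_{\mathbf{1}^\perp}$ and the nonvanishing of the bordered determinant, which is handled by the kernel argument above. The rest is a direct application of \cref{lm: Kato}.
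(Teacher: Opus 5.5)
Your proposal is correct and follows essentially the same route as the paper. You use \cref{lm: Kato} to identify the first-order slopes $b_j$ of the branches with $\lambda_j(0)=0$ as the eigenvalues of the compression $PLP|_{\mathbf{1}^\perp}$, show by a kernel argument that this compression is invertible exactly when the bordered determinant is nonzero, and then conclude via \cref{lm: none-zero linear coe of eigenvalues implies hardness}.
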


\begin{proof}
    Consider the eigenvalue $\lambda=0$ of $\mathcal{T}_M(0)=J_q$, with multiplicity $q-1.$
    Let $\lambda_2(\theta),\ldots,\lambda_{q}(\theta)$ be the analytic eigenvalue branches of $\mathcal{T}_M(\theta)$ around $\lambda=0$.
    Write $\lambda_j(\theta)=\sum_{n=1}^{+\infty}b_n^{(j)}\theta^n$ by \cref{lm: Rellich}.
    Notice that the eigen-space of $J_q$ corresponding to $\lambda=0$ is $\mathbf{1}^\perp$, and $\mathcal{T}_M'(0)=L.$
    By \cref{lm: Kato}, $\{b_1^{(j)}\mid 2\le j\le q\}$ are the eigenvalues of $PLP|_{\mathbf{1}^\perp}$ (with possible repetitions).
    Here $P:\mathbb{R}^q\to \mathbf{1}^\perp$ is the projection onto the eigen-space $\mathbf{1}^\perp$.
    So $\lambda_j(\theta)=\Theta(\theta)$ as $\theta\to 0^+$ for every $2\le j\le q$ $\iff$ $b_1^{(2)}b_1^{(3)}\ldots b_1^{(q)}\neq 0 \iff PLP|_{\mathbf{1}^\perp} $ is invertible.
    %By linear algebra, this is equivalent to \(\det \left[\begin{matrix} L  & \mathbf{1} \\\mathbf{1}^{\tt{T}}  & 0\\\end{matrix}\right] \neq 0.\)
    By linear algebra, $PLP|_{\mathbf{1}^\perp}$ is invertible iff: $\forall \mathbf{v} \in \mathbf{1}^\perp$, $PL\mathbf{v} =0 \implies\mathbf{v}=0.$
    i.e., $\forall \mathbf{v} \in \mathbf{1}^\perp, L\mathbf{v}=x\mathbf{1} \text{ for some }x\in \mathbb{R} \implies \mathbf{v}=0$.
    i.e., $\forall \mathbf{v}\in \mathbb{R}^q,x\in \mathbb{R},\left[\begin{matrix} L  & \mathbf{1} \\ \mathbf{1}^{\tt{T}}  & 0\\ \end{matrix}\right]\left[\begin{matrix} \mathbf{v}\\ -x \end{matrix}\right]=0 \implies \mathbf{v}=x=0$.
    That is, \(\det \left[\begin{matrix} L  & \mathbf{1} \\\mathbf{1}^{\tt{T}}  & 0\\\end{matrix}\right] \neq 0.\)
    The theorem follows from \cref{lm: none-zero linear coe of eigenvalues implies hardness}.
\end{proof}
%%% JYC: i think in the case of thm 2, we don't need to quote lm 12.
% just direct differentiation , set \theta at 0, will do.

\section{Circulant Matrix of Prime Order}\label{sec: circulant}

In this section, we prove that the $\PlGH$ problem, defined by any positive definite circulant matrix of {\it prime} order, is \#P-hard (\cref{thm: circulant prime order}).
This is an infinite class of matrices with quantum symmetry, so that it is beyond the reach of the vertex separation method.
We establish \#P-hardness by directly reducing from the Potts model of the same size using \cref{thm: only look at lattice of eigenvalues}.   Analyzing the asymptotic behavior of the eigenvalues under a ``$\theta$-stretching" as $\theta \to + \infty$,
we demonstrate that for a sufficiently large $\theta$, we can realize a matrix whose spectral lattice is contained in $\LPotts{q}.$
%In particular, the proof \cref{lm: lattice of c contained in Potts after thickening}
%only holds for prime $q$, utilizing a property of the cyclotomic extension $\Q(\zeta_q)/\Q$.
%%% JYC I rewrote this sentence
The proof of \cref{lm: lattice of c contained in Potts after thickening}
uses some special property of the cyclotomic extension $\Q(\zeta_q)/\Q$ which is only valid for primes $q \ge 3$. (A counterexample to \cref{thm: circulant prime order} exists for non-prime $q$.
%, the proof only holds for circulants of prime order. 
%We begin with basic propositions of circulant matrices
Omitted proofs are in \autoref{appendix: Circulant}):

\begin{definition}
    A $q\times q$ (real) matrix $C$ is called circulant, if each row in $C$ is obtained from the row above by a 
    cyclic shift of one position.
    We denote by $\mathcal{C}(c_0,c_1,\ldots,c_{q-1})$, the $q\times q$ circulant matrix with the first row $(c_0,c_1,\ldots,c_{q-1})$.
\end{definition}

\begin{proposition}[\cite{Circulant_Matrices}\label{lm: evs of circulant matrix}]
Let $C=\mathcal{C}(c_0,c_1,\ldots,c_{q-1})$.
Then $C$ has eigenvalues $(\psi_0,\psi_1,\ldots,\psi_{q-1})$, where
    $
    \psi_m=\sum_{j=0}^{q-1}c_j\zeta_q^{mj},\,\forall m\in [q].
    $
    Here $\zeta_q=e^{-{2\pi \mathfrak{i}}/{q}}$ is a $q$-th root of unity.
\end{proposition}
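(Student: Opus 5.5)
The plan is to exhibit explicit eigenvectors, namely the discrete Fourier vectors. For each $m\in\{0,1,\ldots,q-1\}$ define $\mathbf{w}_m=(1,\zeta_q^{m},\zeta_q^{2m},\ldots,\zeta_q^{(q-1)m})^{\tt T}\in\mathbb{C}^q$ and check directly that $C\mathbf{w}_m=\psi_m\mathbf{w}_m$. Since $C$ is circulant, its $(k,j)$ entry is $C_{kj}=c_{(j-k)\bmod q}$, with rows indexed from $0$ to $q-1$. The $k$-th coordinate of $C\mathbf{w}_m$ is therefore
\[
\sum_{j=0}^{q-1}c_{(j-k)\bmod q}\,\zeta_q^{mj}.
\]
Substitute $\ell=(j-k)\bmod q$. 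Because $\zeta_q^q=1$, we have $\zeta_q^{mj}=\zeta_q^{m\ell}\zeta_q^{mk}$ regardless of the wrap-around, and as $j$ runs over $\{0,\ldots,q-1\}$ so does $\ell$. The sum becomes $\zeta_q^{mk}\sum_{\ell}c_\ell\zeta_q^{m\ell}=\psi_m(\mathbf{w}_m)_k$.

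Next we show that $\psi_0,\ldots,\psi_{q-1}$ is the complete list of eigenvalues, counted with multiplicity. The vectors $\mathbf{w}_0,\ldots,\mathbf{w}_{q-1}$ are the columns of the Vandermonde matrix $F=(\zeta_q^{jm})_{j,m}$. Its nodes $\zeta_q^0,\ldots,\zeta_q^{q-1}$ are distinct, so $F$ is invertible; equivalently, $\sum_{j}\zeta_q^{j(m-m')}=q\,\delta_{mm'}$ shows the columns are orthogonal. Hence $F^{-1}CF=\diag(\psi_0,\ldots,\psi_{q-1})$, and the characteristic polynomial of $C$ is $\prod_m(x-\psi_m)$. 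This gives the multiset of eigenvalues even when some $\psi_m$ coincide.

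The indexing in the statement, $m\in[q]$, should be read as $m=0,\ldots,q-1$; the choice of residues does not matter because $\psi_m$ depends only on $m\bmod q$. Likewise, the sign convention $\zeta_q=e^{-2\pi\mathfrak{i}/q}$ is immaterial, since replacing $\zeta_q$ by its conjugate merely permutes $m\mapsto -m$.

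None of the steps is a real obstacle. The only point needing care is the index bookkeeping in the reindexing $j\mapsto\ell$, together with the observation that the eigenvectors form a basis, so that multiplicities are accounted for correctly.
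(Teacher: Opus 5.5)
Your proof is correct: the Fourier vectors $\mathbf{w}_m$ diagonalize $C$ under the standard convention $C_{kj}=c_{(j-k)\bmod q}$, which is the one the paper intends, and the invertibility of $F$ gives the eigenvalues with multiplicity. The paper gives no proof of its own and cites Gray's review, whose argument is essentially the same Fourier-eigenvector diagonalization $C=F\,\mathrm{diag}(\psi_0,\ldots,\psi_{q-1})F^{-1}$.
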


\vspace{-.1in}

\begin{restatable}{proposition}{PropertiesOfCirculant}
\label{lm: properties of circulant matrix}
    Let $C=\mathcal{C}(c_0,c_1,\ldots,c_{q-1})\in \Sym{q}{pd}{\Rp}$.
    Then $c_0>c_j$ for every $1\le j\le q-1$, and $\psi_0=\sum_{j=0}^{q-1}c_j$ is the unique largest eigenvalue of $C.$
\end{restatable}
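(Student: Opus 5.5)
The plan is to treat the two claims separately: first the strict inequality $c_0>c_j$ from positive definiteness, then the statement about $\psi_0$ from the eigenvalue formula of \cref{lm: evs of circulant matrix} together with positivity of the entries.

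\textbf{Step 1: $c_0>c_j$.} Since $C$ is circulant, every diagonal entry equals $c_0$, and the entry in row $0$, column $j$ equals $c_j$. Since $C$ is symmetric, $C_{j0}=c_j$ as well. For $1\le j\le q-1$, the principal $2\times 2$ submatrix on indices $\{0,j\}$ is
\[
\begin{pmatrix} c_0 & c_j\\ c_j & c_0\end{pmatrix}.
\]
This submatrix is positive definite because $C$ is. Hence its determinant $c_0^2-c_j^2$ is positive. As all entries are positive, this gives $c_0>c_j$.

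\textbf{Step 2: $\psi_0$ is an eigenvalue and dominates the others.} By \cref{lm: evs of circulant matrix}, $\psi_0=\sum_j c_j$ is an eigenvalue, with eigenvector $\mathbf 1$. Since $C$ is symmetric, each $\psi_m$ is real. For $1\le m\le q-1$, the triangle inequality gives
\[
|\psi_m|=\Bigl|\sum_j c_j\zeta_q^{mj}\Bigr|\le \sum_j c_j=\psi_0 .
\]

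\textbf{Step 3: the inequality in Step 2 is strict.} Equality in the triangle inequality with all $c_j>0$ would force every $\zeta_q^{mj}$ to share a common argument. But the $j=0$ term is $1$ and the $j=1$ term is $\zeta_q^{m}\ne 1$ for $1\le m\le q-1$. So the inequality is strict, and $|\psi_m|<\psi_0$ for all $m\ne 0$.

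The only step that needs care is this strictness argument. It uses only positivity of the $c_j$, not primality of $q$, and alternatively follows from Perron--Frobenius for positive matrices. The eigenvalue multiset is $\{\psi_m\}$, so $\psi_0$ is the unique largest eigenvalue, with multiplicity one.
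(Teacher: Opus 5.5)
Your proof is correct. It reaches the two claims by a partly different route from the paper.

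The paper treats the two claims as dual Fourier computations. By Fourier inversion it writes $c_j=\frac1q\sum_m\psi_m\cos(2\pi jm/q)$, and gets $c_0-c_j=\frac1q\sum_m\psi_m(1-\cos(2\pi jm/q))>0$ from $\psi_m>0$, i.e.\ from positive definiteness. Symmetrically, it gets $\psi_0-\psi_m=\sum_j c_j(1-\cos(2\pi mj/q))>0$ from $c_j>0$. Strictness comes from the $m=1$ term in the first sum and the $j=1$ term in the second.

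Your Step 1 replaces the Fourier inversion with the $2\times 2$ principal minor on $\{0,j\}$. This is more elementary. It also uses only that $C$ is positive definite with constant diagonal: it yields $c_0>|c_j|$ without using the circulant structure or entry positivity.

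Your Steps 2--3 rest on the same idea as the paper's second half: positivity of the $c_j$ together with $\zeta_q^m\neq 1$ in the $j=1$ term. You phrase it through the equality case of the triangle inequality instead of via real parts. This gives the stronger bound $|\psi_m|<\psi_0$ from entrywise positivity alone. That is exactly the form $\lambda_1>|\lambda_i|$ required by \cref{thm: only look at lattice of eigenvalues}. The paper's bound $\psi_m<\psi_0$ reaches that form only after invoking positive definiteness ($\psi_m>0$).

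What the paper's version buys is symmetry: a single cosine computation handles both claims, with the roles of entries and eigenvalues exchanged.
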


\begin{lemma}\label{lm: lattice of c contained in Potts after thickening}
    Suppose $q\ge 3$ is prime.
    Let $C=\mathcal{C}(c_0,c_1,\ldots,c_{q-1})\in \Sym{q}{pd}{\Rp}$, and $\mathcal{T}_C(\theta)=\mathcal{C}(c_0^\theta,c_1^\theta,\ldots,c_{q-1}^\theta)$.
    There exists some $\theta\in \Rp$ such that $\mathcal{T}_C(\theta)\in \Sym{q}{pd}{\Rp}$ and
    $$
    \mathfrak{L}(\psi_0(\theta),\psi_1(\theta),
    \ldots,\psi_{q-1}(\theta))\subseteq\LPotts{q},
    $$
    where $\psi_m(\theta)=\sum_{j=0}^{q-1}c_j^\theta\zeta_q^{mj}.$
\end{lemma}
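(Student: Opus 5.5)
The plan is to take $\theta$ large and to show that, for each integer vector $\mathbf{x}$ with $x_0\neq 0$, the relation $\prod_m\psi_m(\theta)^{x_m}=1$ fails for all but countably many $\theta$. Index the eigenvalues by $m=0,\dots,q-1$, so the first coordinate of $\LPotts{q}$ corresponds to $\psi_0$, the Perron eigenvalue. Since $C$ is symmetric, $c_j=c_{q-j}$ and every $\psi_m(\theta)$ is real.

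\textbf{Step 1: positive definiteness for large $\theta$.} By \cref{lm: properties of circulant matrix}, $c_0>c_j$ for all $j\ge1$. Write $r_j=c_j/c_0<1$. Then
\[
\psi_m(\theta)=c_0^\theta\Bigl(1+\sum_{j\ge1}r_j^\theta\zeta_q^{mj}\Bigr),
\]
so $\psi_m(\theta)>0$ for every $m$ once $\theta>\theta_0$. Hence $\mathcal{T}_C(\theta)\in\Sym{q}{pd}{\Rp}$ on $(\theta_0,\infty)$. It then suffices to find $\theta>\theta_0$ such that every $\mathbf{x}\in\mathfrak{L}(\psi_0(\theta),\dots,\psi_{q-1}(\theta))$ has $x_0=0$.

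\textbf{Step 2: reduction to a Laurent polynomial identity.} Take a generating set $\{g_t\}_{t\in[d]}$ of $\{c_0,\dots,c_{q-1}\}$ (\cref{lemma: generatingSet}), and write $c_j=\prod_t g_t^{e_{jt}}$. Uniqueness of exponents means $\ln g_1,\dots,\ln g_d$ are $\mathbb{Q}$-linearly independent. Fix $\mathbf{x}$ with $x_0\neq0$ and $\sum_m x_m=0$, and set
\[
F_{\mathbf{x}}(U)=\prod_{x_m>0}\psi_m(U)^{x_m}-\prod_{x_m<0}\psi_m(U)^{-x_m}\in\mathbb{Q}(\zeta_q)[U_1^{\pm1},\dots,U_d^{\pm1}],
\qquad \psi_m(U)=\sum_{j}\zeta_q^{mj}U^{\mathbf{e}_j}.
\]
Substituting $U_t=g_t^\theta$ turns each monomial $U^{\mathbf{n}}$ into the exponential $e^{\theta\sum_t n_t\ln g_t}$. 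Distinct $\mathbf{n}$ give distinct exponents, because the $\ln g_t$ are $\mathbb{Q}$-independent, and distinct exponentials are linearly independent functions. Therefore $F_{\mathbf{x}}(g^\theta)\equiv0$ in $\theta$ only if $F_{\mathbf{x}}=0$ as a Laurent polynomial. If $F_{\mathbf{x}}\neq0$, then $\theta\mapsto F_{\mathbf{x}}(g^\theta)$ is a nonzero real-analytic function with only countably many zeros. There are countably many $\mathbf{x}$, so some $\theta>\theta_0$ avoids all of these zero sets. At that $\theta$, any relation $\prod_m\psi_m(\theta)^{x_m}=1$ must have $x_0=0$, which gives the required containment in $\LPotts{q}$.

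\textbf{Step 3: the cyclotomic argument, which I expect to be the main obstacle.} We must rule out $F_{\mathbf{x}}=0$ when $x_0\neq0$. Suppose $\prod_m\psi_m(U)^{x_m}=1$ holds in the fraction field. For $k\in(\mathbb{Z}/q)^\times$, apply the Galois automorphism $\sigma_k:\zeta_q\mapsto\zeta_q^k$ coefficientwise. It fixes $\psi_0$ and sends $\psi_m$ to $\psi_{km}$. Because $q$ is prime, $\{km: k\in(\mathbb{Z}/q)^\times\}$ is all of $\{1,\dots,q-1\}$ for every $m\neq0$, and this is exactly where primality is used. Multiplying the $q-1$ conjugate identities and using $\sum_{m\neq0}x_m=-x_0$ gives
\[
\psi_0^{(q-1)x_0}\prod_{m\neq0}\psi_m^{-x_0}=1 .
\]
We may assume $x_0>0$. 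Clearing denominators gives the polynomial identity $\psi_0(U)^{(q-1)x_0}=\prod_{m\neq0}\psi_m(U)^{x_0}$. Evaluate at $U=(1,\dots,1)$. The left side is $q^{(q-1)x_0}\neq0$. The right side is $0$, since $\psi_m(\mathbf 1)=\sum_j\zeta_q^{mj}=0$ for $m\neq0$. This contradiction shows $F_{\mathbf{x}}\neq0$.

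The delicate points to check carefully are the following. First, the Galois action is legitimate: it acts on the coefficients of a Laurent polynomial identity, not on the real numbers $c_j^\theta$ themselves. Second, the passage from identity in $\theta$ to identity in indeterminates relies only on $\mathbb{Q}$-independence of the $\ln g_t$. Third, the evaluation at $U=\mathbf 1$ is done after clearing denominators, so nothing is undefined there.
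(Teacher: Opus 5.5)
Your proof is correct, and it takes a genuinely different route from the paper's. The paper works with the real functions $f_{\mathbf{x}}(\theta)=\sum_m x_m\ln\psi_m(\theta)$ and expands them as $\theta\to+\infty$. Since $\sum_m x_m=0$, the $\theta\ln c_0$ terms cancel, and the leading term is $a\,\epsilon^{\theta}$ with $\epsilon=\max_{j\ge1}c_j/c_0$ and $a=\sum_m x_m\sum_{j\in S}\zeta_q^{mj}$, where $S$ is the set of $j$ attaining the maximum. Writing $a=\sum_k y_k\zeta_q^k$ with $y_k\in\mathbb{Z}$, $y_0=|S|x_0$ and $\sum_k y_k=0$, the fact that $1,\zeta_q,\dots,\zeta_q^{q-2}$ is a $\mathbb{Q}$-basis of $\mathbb{Q}(\zeta_q)$ forces all $y_k=0$. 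Hence $a\neq0$ whenever $x_0\neq0$. The paper gets positive definiteness from the FitzGerald--Horn theorem for $\theta\ge q-2$.

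You instead lift the relation to a formal identity in $\mathbb{Q}(\zeta_q)[U^{\pm1}]$ through a generating set. You then use primality via the transitivity of $(\mathbb{Z}/q)^\times$ on nonzero residues, so Galois averaging turns any relation with $x_0\neq0$ into $\psi_0^{(q-1)x_0}=(\prod_{m\neq0}\psi_m)^{x_0}$, which fails at $U=\mathbf{1}$.

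The paper's argument is shorter and needs only a first-order expansion. Yours costs the generating-set and exponential-independence machinery, but it proves the stronger fact that the relation fails even formally. It also shows transparently where primality enters: for the positive definite $\mathcal{C}(4,2,1,2)$, the index $m=2$ is Galois-fixed and $\psi_0\psi_2=\psi_1\psi_3$ holds identically in $\theta$. Your Step~1 is a genuine simplification too, since positivity of all $\psi_m(\theta)$ for large $\theta$ follows directly from $c_0>c_j$, making FitzGerald--Horn unnecessary.

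One point you leave implicit: rewriting $F_{\mathbf{x}}=0$ as $\prod_m\psi_m(U)^{x_m}=1$ in the fraction field requires each $\psi_m(U)\neq0$ as a Laurent polynomial. This holds because $\psi_m(g^\theta)=\psi_m(\theta)>0$ for large $\theta$, or because the monomial $U^{\mathbf{e}_0}$ occurs only for $j=0$, as $c_0>c_j$.
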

\begin{proof}   
    It is clear that $\mathcal{T}_C(\theta)\in \Sym{q}{}{\Rp}$ is still circulant for any $\theta\in \Rp.$
    By the main theorem in \cite{FITZGERALD1977633}, $\mathcal{T}_C(\theta)$ is positive definite for every $\theta\ge q-2.$
    For $\mathbf{x}=(x_0,x_1,\ldots,x_{q-1})\in \Z^q$ such that $\sum_{m=0}^{q-1}x_m=0$, define
    $
    f_{\mathbf{x}}(\theta)=\sum_{m=0}^{q-1}x_m\ln \psi_{m}(\theta)$ for $\theta\ge q-2.
    $
    The condition $\mathbf{x}\in\mathfrak{L}(\psi_0(\theta),\psi_1(\theta),\ldots,\psi_{q-1}(\theta))$ is equivalent to $f_{\mathbf{x}}(\theta)=0$.
    We next show that $f_{\mathbf{x}}(\theta)$ is not identically $0$ unless $x_0=0$.
    Let $c=\max_{1\le j\le q-1}c_j$, and $S=\{j\mid 1\le j\le q-1,c_j=c\}.$
    Define $\epsilon={c}/{c_0}$ and $\epsilon_j={c_j}/{c_0}$ for $1\le j\le q-1$.
    By \cref{lm: properties of circulant matrix}, $\epsilon,\epsilon_j\in(0,1).$
    Denote $\zeta_q$ by $\zeta.$
    A direct calculation shows that
    \begin{equation*}
        \begin{aligned}
        f_{\mathbf{x}}(\theta)
        =&\sum_{m=0}^{q-1}x_m\ln \left(\sum_{j=0}^{q-1}c_j^\theta \zeta^{mj}\right)
        =\sum_{m=0}^{q-1}x_m\left(\theta\ln c_0+\ln \left(1+\sum_{j=1}^{q-1}\epsilon_j^\theta\zeta^{mj}\right)\right)\\
        =&\sum_{m=0}^{q-1}x_m\ln \left(1+\sum_{j\in S}\zeta^{mj}
        \epsilon^\theta+o(\epsilon^\theta)\right)
        =\sum_{m=0}^{q-1}x_m \left(\sum_{j\in S}\zeta^{mj}
        \epsilon^\theta+o(\epsilon^\theta)\right)\\
        =&\sum_{m=0}^{q-1}x_m \sum_{j\in S}\zeta^{mj}
        \epsilon^\theta+o(\epsilon^\theta),\text{ as }\theta\to + \infty.\\
        \end{aligned}
    \end{equation*}
    \iffalse
    \begin{equation}
        \begin{aligned}
        f_{\mathbf{x}}(\theta)
        =&\sum_{j=0}^{q-1}x_j\left(\ln \left(\sum_{m=0}^{q-1}\psi_m^\theta \zeta^{jm}\right)-\ln q\right)\\
        =&\sum_{j=0}^{q-1}x_j\left(\theta\ln\psi_0+\ln \left(1+\sum_{m=1}^{q-1}\frac{\psi_m^\theta}{\psi_0^\theta}\zeta^{jm}\right)\right)\\
        =&\sum_{j=0}^{q-1}x_j\ln \left(1+\sum_{m\in S}\zeta^{jm}
        \left(\frac{\psi}{\psi_0}\right)^\theta+o\left(\frac{\psi}{\psi_0}\right)^\theta\right)\\
        =&\sum_{j=0}^{q-1}x_j \left(\sum_{m\in S}\zeta^{jm}
        \left(\frac{\psi}{\psi_0}\right)^\theta+o\left(\frac{\psi}{\psi_0}\right)^\theta\right)\\
        =&\sum_{j=0}^{q-1}x_j \sum_{m\in S}\zeta^{jm}
        \left(\frac{\psi}{\psi_0}\right)^\theta+o\left(\frac{\psi}{\psi_0}\right)^\theta,\text{ as }\theta\to + \infty\\
        \end{aligned}
    \end{equation}
    \fi
    
    We denote by $a_\theta$ the coefficient of $\epsilon^\theta$ in $f_{\mathbf{x}}(\theta)$.
    Re-arrange $a_\theta$ as $a_\theta=\sum_{k=0}^{q-1}y_k\zeta^k$,
    where $y_k=\sum_{m=0}^{q-1}x_m\sum_{j\in S}\mathbb{I}[mj\equiv k \text{ (mod }q)]\in \Z$.
    Note that $y_0=|S|\cdot x_0$.
    Suppose $f_{\mathbf{x}}(\theta)\equiv0$.
    Then $a_\theta=0$.
    %Since we assume that $q$ is prime, 
    For $q$ prime, 
    $x^{q-1} + \ldots + x +1 \in \Q[x]$ is irreducible, and $\zeta$ is a root. 
    $[\Q(\zeta):\Q]=q-1$ and
    $\{\zeta^k\}_{k=0}^{q-2}$ is a basis for $\Q(\zeta)$ over $\Q$.
    Then $a_\theta=\sum_{k=0}^{q-1}y_k\zeta^k=0\Rightarrow \sum_{k=0}^{q-2}(y_k-y_{q-1})\zeta^k=0 \Rightarrow y_0=y_1=\ldots=y_{q-1}$.
    However,
    $
    \sum_{k=0}^{q-1}y_k=\sum_{m=0}^{q-1}x_m\sum_{j\in S}1=|S|\cdot\sum_{m=0}^{q-1}x_m=0.
    $
    So all $y_i =0$.
    %$y_0=y_1=\ldots=y_{q-1}=0.$
    It follows that $x_0=0$.
    Thus, if $(\sum_{m=0}^{q-1}x_m=0)\land (x_0\neq 0)$, then $f_{\mathbf{x}}(\theta)$ is not identically 0.
    Since $f_{\mathbf{x}}(\theta)$ is an analytic function of $\theta$,
    the set of zeros of $f_{\mathbf{x}}(\theta)$ is countable. 
    %\textcolor{red}{Need a citation here.}
    %%% ZZT: I think it is well-known.
    So the union  of sets of zeros of $f_{\mathbf{x}}(\theta)$ over all $\mathbf{x}\in \Z^q$ such that $(\sum_{m=0}^{q-1}x_m=0)\land(x_0\neq 0)$ is also countable.
    Hence there exists some $\theta\ge q-2$ such that $f_{\mathbf{x}}(\theta)\neq 0$ for all $\mathbf{x}\in \mathbb{Z}^q$ with $\sum_{m=0}^{q-1}x_m=0$ and $x_0\neq 0$.
    For this $\theta$, $\mathbf{x}\in\mathfrak{L}(\psi_0(\theta),\psi_1(\theta),
    \ldots,\psi_{q-1}(\theta))\implies (\sum_{m=0}^{q-1}x_m=0)\land (f_\mathbf{x}(\theta)=0)\implies (x_0= 0)\land(\sum_{m=1}^{q-1}x_m=0)\implies \mathbf{x}\in \LPotts{q}.$
    Therefore,
    $
    \mathfrak{L}(\psi_0(\theta),\psi_1(\theta),
    \ldots,\psi_{q-1}(\theta))\subseteq \LPotts{q}.
    $
\end{proof}

%%% JYC: maybe add a one line remark of an explicit example for q not a prime that
% a counter example exists.

\vspace{-.1in}

\begin{theorem}\label{thm: circulant prime order}
    Let $q\ge 3$ prime, $C=\mathcal{C}(c_0,c_1,\ldots,c_{q-1})\in \Sym{q}{pd}{\Rp}$.
    Then $\PlGH(C)$ is \numPhard.
\end{theorem}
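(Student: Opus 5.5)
The plan is to combine \cref{lm: lattice of c contained in Potts after thickening} with \cref{thm: only look at lattice of eigenvalues}, and then pull the hardness back to $C$ through the thickening reduction. The main work is already done in the lemma, so the theorem is essentially an assembly of earlier results.

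First, apply \cref{lm: lattice of c contained in Potts after thickening} to $C$. This yields some $\theta\in\Rp$ such that two things hold: $C':=\mathcal{T}_C(\theta)=\mathcal{C}(c_0^\theta,\ldots,c_{q-1}^\theta)$ lies in $\Sym{q}{pd}{\Rp}$, and
\[
\mathfrak{L}(\psi_0(\theta),\ldots,\psi_{q-1}(\theta))\subseteq\LPotts{q}.
\]
By \cref{lemma: simpleThickening}, since $C$ has positive entries, we get $\PlGH(C')\le_p^T\PlGH(C)$.

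Next, check the hypotheses of \cref{thm: only look at lattice of eigenvalues} for $C'$. Because $C'$ is positive definite, it has full rank, so $C'\in\Sym{q}{F}{\Rp}$. By \cref{lm: evs of circulant matrix}, its eigenvalues are exactly $\psi_0(\theta),\ldots,\psi_{q-1}(\theta)$. Since $C'$ is a positive definite circulant with positive entries, \cref{lm: properties of circulant matrix} says $\psi_0(\theta)$ is the unique largest eigenvalue. All eigenvalues are positive, so this gives $\psi_0(\theta)>|\psi_m(\theta)|$ for $m\ge1$, which is what the theorem needs.

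The step that needs care is the indexing. The lattice $\LPotts{q}$ is defined by the condition $x_1=0$ on the coordinate of the Perron eigenvalue. So in \cref{thm: only look at lattice of eigenvalues} I set $\lambda_1=\psi_0(\theta)$ and $\lambda_{m+1}=\psi_m(\theta)$. The lemma's inclusion uses the same convention, with the first coordinate $x_0$ attached to $\psi_0$. Hence the inclusion transfers verbatim, \cref{thm: only look at lattice of eigenvalues} makes $\PlGH(C')$ \#P-hard, and the reduction above makes $\PlGH(C)$ \#P-hard.

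One more point: the eigenvalues $\psi_m$ are real because $C'$ is symmetric, so the multiplicative lattice is well defined over $\R_{\neq0}$. I expect no real obstacle here, since all the difficulty is contained in the cyclotomic argument of the lemma.
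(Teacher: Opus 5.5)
Your proposal is correct and follows the paper's own proof essentially step for step. You take the $\theta$ from the lemma, use positive definiteness and the circulant structure to get $\psi_0(\theta)$ as the strictly dominant eigenvalue, apply the spectral criterion to $\mathcal{T}_C(\theta)$, and pull hardness back to $C$ via the thickening reduction. Your explicit check that the lemma's coordinate $x_0$ on $\psi_0$ matches the first coordinate of the Potts lattice (the one on $\lambda_1$) is a detail the paper leaves implicit, and you handle it correctly.
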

\begin{proof}
    Pick a $\theta$ in \cref{lm: lattice of c contained in Potts after thickening}.
    Then $\mathfrak{L}(\psi_0(\theta),\psi_1(\theta),
    \ldots,\psi_{q-1}(\theta))\subseteq\LPotts{q}$, where $\psi_m(\theta)=\sum_{j=0}^{q-1}c_j^\theta\zeta_q^{mj}\,(m\in[q])$ are eigenvalues of $\mathcal{T}_C(\theta)$.
    By \cref{lm: properties of circulant matrix}, $\psi_0(\theta)$ is the unique largest eigenvalue of $\mathcal{T}_C(\theta)$.
    So by \cref{thm: only look at lattice of eigenvalues}, $\PlGH(\mathcal{T}_C(\theta))$ is \#P-hard.
    By \cref{lemma: simpleThickening}, $\PlGH(\mathcal{T}_C(\theta))\le_p^T \PlGH(C)$.
    It follows that $\PlGH(C)$ is \#P-hard.
\end{proof}

%\begin{remark}
%    %We note that f
%    For non-prime $q$, a counterexample to \cref{thm: circulant prime order} 
%    exists $\mathcal{C}(4,2,1,2)  \in \Sym{4}{pd}{\Rp}$.
   % For example, consider $M = \left[\begin{smallmatrix}
   % 2 & 1\\
   % 1 & 2\\
   % \end{smallmatrix}\right]\otimes\left[\begin{smallmatrix}
   % 2 & 1\\
   % 1 & 2\\
   % \end{smallmatrix}\right]$.
   % Then after swapping the 3rd and 4th rows and columns, $M=\mathcal{C}(4,2,1,2)\in \Sym{4}{pd}{\Rp}.$
   % Nevertheless, $\PlGH(M)$ is tractable by the FKT algorithm together with a holographic transformation.
%\end{remark}
\section{Tensor Product of 2 by 2 Matrices}\label{sec: tensor of 2 by 2}
In this section, we assume that $M\in \Sym{q}{pd}{\Rp}$
%, and $M$ 
is a tensor product of $2\times 2$ matrices. We prove:
%We will prove the following theorem:

\begin{restatable}
{theorem}{HardnessForTensorTwoByTwo}\label{thm: hardness for 2 by 2 tensor positive definite}
    Let $M\in \Sym{q}{pd}{\Rp}$.
    If
    $
    M=\bigotimes_{i=1}^s A_i,
    $
    where $A_i\in \R^{2\times 2}$ for every $1\le i\le s$, 
    then $\PlGH(M)$ is $\#$P-hard unless $(A_i)_{11}=(A_i)_{22}$ for every $1\le i\le s$, in which case each $A_i$ is a matchgate and  the problem is P-time tractable by a holographic reduction to the FKT algorithm.
\end{restatable}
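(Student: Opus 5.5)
We split into the tractable and the hard direction. For tractability, suppose every $A_i$ has $(A_i)_{11}=(A_i)_{22}$. Then $A_i=\left[\begin{smallmatrix} a_i & b_i\\ b_i & a_i\end{smallmatrix}\right]$, with $b_i=(A_i)_{21}$ because $M$ is symmetric; a nonsymmetric factor can be replaced by a symmetric one after rescaling factors. The factor $A_i$ is diagonalized by the Hadamard matrix $H_2=\frac{1}{\sqrt2}\left[\begin{smallmatrix}1&1\\1&-1\end{smallmatrix}\right]$, with $H_2A_iH_2=\diag(a_i+b_i,a_i-b_i)$. Hence $H_2^{\otimes s}$ diagonalizes $M$. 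Under the holographic transformation $H_2^{\otimes s}$, each vertex of $G$ becomes an equality of arity $\deg v$ over $s$ Boolean variables, transformed to a tensor product of $s$ parity signatures, and each edge becomes a product of $s$ independent unary weightings. The instance then decomposes into $s$ independent Boolean Holant problems, each of the form $\mathrm{Holant}(\text{parity}\mid \text{edge weight})$. These are matchgate signatures, so they are solvable on planar graphs by FKT. This is the standard fact that $\PlGH$ of $\left[\begin{smallmatrix}a&b\\b&a\end{smallmatrix}\right]$, a planar Ising model, is tractable, applied coordinatewise.

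For hardness, suppose some factor, say $A_1$, has $(A_1)_{11}\neq(A_1)_{22}$. We may assume each $A_i$ is symmetric positive definite with positive entries, after normalizing signs and scalars using \cref{lemma: MequivalentCM}. We then group equal factors up to scaling, so $M=c\bigotimes_k B_k^{\otimes t_k}$ with pairwise non-proportional $B_k$. The plan is to use \cref{lm: rho has no relation after thickening} and \cref{lm: reduced lattice lemma}. Together these allow, after a generic thickening $\mathcal{T}_M(\theta)$ (which preserves the tensor structure, since $\mathcal{T}_{A\otimes B}(\theta)=\mathcal{T}_A(\theta)\otimes\mathcal{T}_B(\theta)$), the normalized eigenvalue ratios of distinct factors to be made multiplicatively independent. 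Lattice interpolation (\cref{lemma: latticeInterpolation}) can then set the spectra of all other factor groups to $(1,1)$, i.e.\ to scalar multiples of the identity. This yields $\PlGH(B_k^{\otimes t_k})\le_p^T\PlGH(M)$ for the group containing $A_1$. The main obstacle is exactly this separation of eigenvalue ratios: we must show that a nontrivial multiplicative relation among the ratios $\rho_k(\theta)=\lambda^{(k)}_2(\theta)/\lambda^{(k)}_1(\theta)$ cannot hold identically in $\theta$. To do this we would compute the asymptotics of $\rho_k(\theta)$ as $\theta\to\infty$, or as $\theta\to0^+$ using \cref{lm: Kato}, where each ratio behaves like a distinct power or exponential rate determined by the entries of $B_k$. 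We would then use the countability-of-zeros argument from \cref{lm: none-zero linear coe of eigenvalues implies hardness}.

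It remains to show $\PlGH(B^{\otimes t})$ is \#P-hard when $B_{11}\ne B_{22}$. For $t=1$ this is the known dichotomy for $2\times2$ matrices: the planar case is hard unless $B$ is degenerate, a matchgate ($B_{11}=B_{22}$), or of product type. For $t\ge2$ we would exploit the diagonal asymmetry. Thickening with generating sets (\cref{lemma: thickeningLemma}) applied to $B^{\otimes t}$ can realize $\mathcal{T}(B;\mathbf z)^{\otimes t}$, and choosing $\mathbf z$ gives, for example, $\left[\begin{smallmatrix}x&1\\1&0\end{smallmatrix}\right]^{\otimes t}$ or a similar form. Alternatively, we would use stretching to reach the spectral form needed for \cref{thm: only look at lattice of eigenvalues} on the $2^t$-dimensional matrix, or directly realize a $t$-fold product of a hard $2\times2$ planar problem. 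A tensor power of a \#P-hard planar $2\times 2$ problem is itself hard, because $Z_{B^{\otimes t}}(G)=Z_B(G)^t$ and $Z_B(G)$ can be recovered from its $t$-th power up to sign, with the sign fixed by positivity. This last observation makes the $t\ge2$ case immediate once the reduction to $B^{\otimes t}$ is in hand.
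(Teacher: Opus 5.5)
Your outer architecture matches the paper's. You reduce to positive-definite positive factors, make the normalized eigenvalue ratios $\rho(B_k)$ of the distinct factors multiplicatively independent, and apply \cref{lm: reduced lattice lemma} to replace every other group by $I$. You then finish with $Z_{B^{\otimes t}}(G)=Z_B(G)^t$ and the Boolean $2\times 2$ dichotomy. The tractable direction is also fine; most simply, $Z_M(G)=\prod_i Z_{A_i}(G)$ and each factor is a planar Ising model.

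The gap is the step you yourself call the main obstacle. You invoke \cref{lm: rho has no relation after thickening}, but the paper proves it only under standing hypotheses that your proposal never establishes:
\begin{itemize}
\item matchgate factors are discarded, which is legitimate since $Z_M=\prod_i Z_{A_i}$ and those factors are positive and P-time;
\item each remaining factor is flipped, normalized and stretched to $\left[\begin{smallmatrix}1&b_i\\ b_i&c_i\end{smallmatrix}\right]$ with $1>b_i>c_i>b_i^2$ (\cref{lm: tensor product of 2 by 2 first reduction});
\item the $\rho(B_k)$ are made pairwise distinct by a thickening (\cref{lm: rho distinct by thickening});
\item crucially, a second stretching makes the diagonal entries multiplicatively independent, $\widehat{\mathfrak{L}}(c_{p_1},\ldots,c_{p_m})=\{\mathbf 0\}$ (\cref{lm: c has no relation after stretching}).
\end{itemize}
Only this last property guarantees that the leading term $\eta\sum_i x_i\ln c_{p_i}$ of $\sum_i x_i\ln\rho(\mathcal{T}_{B_{p_i}}(\eta))$ is nonzero for every $\mathbf x\neq\mathbf 0$.

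Your own sketch for this step does not go through as stated. Under a single thickening with $a_k=1>c_k$ one has $\ln\rho(\mathcal{T}_{B_k}(\theta))=\theta\ln c_k-(b_k^2/c_k)^\theta+o\bigl((b_k^2/c_k)^\theta\bigr)$, so the ratios do not have ``distinct'' rates. A relation such as $c_1c_2=c_3^2$ with $\mathbf x=(1,1,-2)$ kills the linear term. If in addition $b_1^2/c_1=b_2^2/c_2=b_3^2/c_3$, the first exponential correction also cancels. You would then have to control further terms of a multi-scale exponential expansion, which you do not do.

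The limit $\theta\to 0^+$ via \cref{lm: Kato} has the same problem. There $\rho(\mathcal{T}_{B_k}(\theta))\sim\frac14\ln(c_k/b_k^2)\,\theta$, so for $\sum_k x_k=0$ the $\ln\theta$ term vanishes. The constant term $\sum_k x_k\ln\ln(c_k/b_k^2)$ may vanish as well.

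The missing idea is to alternate the two gadgets. Stretching moves the diagonal ratio along $c_i(\theta)=\frac{\rho_i^\theta+\gamma_i}{1+\gamma_i\rho_i^\theta}$, whose expansion is $\ln\gamma_i+(\gamma_i^{-1}-\gamma_i)\rho_i^\theta+\cdots$. Once the $\rho_i$ are distinct, this expansion rules out every relation among the $c_i$. Only then does thickening, whose leading behaviour is $\eta\ln c_i$, rule out every relation among the $\rho$'s.
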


The proof of \cref{thm: hardness for 2 by 2 tensor positive definite} proceeds by successively refining the properties of $(A_i)_{1\le i\le s}$, ensuring that each step maintains the existing hardness.
%of $\PlGH(M)$.
First we have the following refinement:

\begin{restatable}{lemma}{TensorProductFirstReduction}\label{lm: tensor product of 2 by 2 first reduction}
    To prove \cref{thm: hardness for 2 by 2 tensor positive definite}, if suffices to prove that $\PlGH(M)$ is \#P-hard for $M=\bigotimes_{i=1}^sA_i$, where $A_i=\left[\begin{smallmatrix}
    1 & b_i\\
    b_i & c_i
\end{smallmatrix}\right]\in \Sym{2}{pd}{\Rp}$ such that $1>b_i>c_i>b_i^2>0$ for every $1\le i\le s$.
%%% JYC: I added:
%Being positive definite, $b_i^2 < c_i$.
\end{restatable}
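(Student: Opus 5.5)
The plan is to normalize each tensor factor step by step, using only operations that preserve hardness in the direction we need: scaling (\cref{lemma: MequivalentCM}), simultaneous permutations of $[q]$, dividing out factors that are P-time computable, and the stretching reduction (\cref{lemma: stretchingLemma}).

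\textbf{Step 1: signs and normalization.} Since $M=\bigotimes_i A_i$ has all entries positive, every entry of every $A_i$ is nonzero. Within each factor the entries must share a sign pattern that can be absorbed into a scalar, so we replace $A_i$ by $\pm A_i$ (adjusting the other factors accordingly) and assume each $A_i$ is entrywise positive. The eigenvalues of $M$ are products of eigenvalues of the $A_i$. Positive definiteness of $M$, together with each $A_i$ having positive trace, forces each $A_i$ to be positive definite. Swapping the two indices of a single factor $A_i$ is a simultaneous permutation of rows and columns of $M$, which does not change $Z_M$; so we may assume $(A_i)_{11}\ge (A_i)_{22}$. Scaling each factor by $1/(A_i)_{11}$ changes $M$ by a positive scalar, which is harmless by \cref{lemma: MequivalentCM}. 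We get $A_i=\left[\begin{smallmatrix}1&b_i\\ b_i&c_i\end{smallmatrix}\right]$ with $0<c_i\le 1$, $b_i>0$, and $c_i>b_i^2$ by positive definiteness; hence also $b_i<1$.

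\textbf{Step 2: discard matchgate factors.} For every graph $G$ we have $Z_{A\otimes B}(G)=Z_A(G)\,Z_B(G)$, because an assignment to the product domain is a pair of assignments. Factors with $c_i=1$ have equal diagonals, so $Z_{A_i}$ is P-time computable on planar graphs by FKT with a holographic transformation, and it is strictly positive because $A_i$ has positive entries. If every $c_i=1$, we are in the tractable case of the theorem. Otherwise let $M'$ be the tensor product of the factors with $c_i<1$. Then $Z_{M'}(G)=Z_M(G)/\prod_{c_i=1}Z_{A_i}(G)$, which gives $\PlGH(M')\le_p^T\PlGH(M)$. So it suffices to show hardness for $M'$, where every factor has $c_i<1$.

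\textbf{Step 3: enforce $b_i>c_i$ by stretching.} Since $M'=\bigotimes A_i$, we have $\mathcal{S}_{M'}(\theta)=\bigotimes_i\mathcal{S}_{A_i}(\theta)$, and this is reducible to $\PlGH(M')$ by \cref{lemma: stretchingLemma}. Write $A_i=\lambda_1 vv^{\tt T}+\lambda_2 ww^{\tt T}$ with $\lambda_1>\lambda_2>0$. The Perron vector satisfies $v_1>v_2>0$ precisely because $(A_i)_{11}>(A_i)_{22}$.

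As $\theta\to\infty$, normalizing $\mathcal{S}_{A_i}(\theta)$ by its $(1,1)$ entry gives
\[
b_i(\theta)\to v_2/v_1<1,\qquad c_i(\theta)\to (v_2/v_1)^2 .
\]
Since $0<v_2/v_1<1$, we get $(v_2/v_1)^2<v_2/v_1$, so $b_i(\theta)>c_i(\theta)$ for large $\theta$. Positive definiteness keeps $c_i(\theta)>b_i(\theta)^2$, and $b_i(\theta)<1$ as in Step 1. The entries stay positive, and the diagonal order stays strict, because the dominant rank-one term $\lambda_1^\theta vv^{\tt T}$ has positive entries with $v_1^2>v_2^2$. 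Choosing one $\theta$ that is large enough for all finitely many factors simultaneously, and renormalizing each factor by scaling, yields exactly the form $1>b_i>c_i>b_i^2>0$ required by the statement.

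\textbf{Main obstacle.} The delicate point is Step 2: a hardness reduction cannot simply ignore matchgate factors, and the multiplicativity of $Z$ over tensor products is what rescues it. A second point needing care is checking in Step 3 that one common $\theta$ works for all factors; this is automatic because there are finitely many factors and each condition holds for all sufficiently large $\theta$.
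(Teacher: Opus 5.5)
Your proposal is correct and follows essentially the same route as the paper. You fix the signs, deduce positive definiteness of each factor from the eigenvalues of $M$, flip indices and scale so that $(A_i)_{11}=1\ge c_i$, and divide out the planar-tractable equal-diagonal factors using $Z_M(G)=\prod_i Z_{A_i}(G)$. Then you stretch; your limiting Perron ratio $v_2/v_1$ is exactly the paper's $\sqrt{\gamma_i}=\lim_{\theta\to\infty}b_i(\theta)$, with $(v_2/v_1)^2=\gamma_i=\lim_{\theta\to\infty}c_i(\theta)$.

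The one step you skip is symmetry of the factors. The theorem only assumes $A_i\in\mathbb{R}^{2\times 2}$, so before speaking of positive definiteness of $A_i$ or writing it as $\left[\begin{smallmatrix}1&b_i\\ b_i&c_i\end{smallmatrix}\right]$ you need $A_i$ to be symmetric. The paper obtains this from $\bigotimes_i A_i^{\tt T}=M^{\tt T}=M=\bigotimes_i A_i$. This forces $A_i^{\tt T}=t_iA_i$ with $t_i=\pm1$, and $t_i=-1$ is impossible because the diagonal entries of $A_i$ are nonzero.
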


In the following, we assume that $A_i$ $(\forall 1\le i\le s)$ satisfies the conditions in \cref{lm: tensor product of 2 by 2 first reduction}.

\begin{definition}\label{def: 2 by 2 matrix parameters}
    For $A=\left[\begin{smallmatrix}
    a & b\\
    b & c
\end{smallmatrix}\right]\in \Sym{2}{pd}{\Rp}$,
denote by $\lambda(A)$ and $\mu(A)$ its two eigenvalues, where

\begin{equation*}
    \begin{aligned}
        &\lambda(A):=\frac{1}{2}(a+c+\sqrt{(a-c)^2+4b^2}),
        \quad \mu(A):=\frac{1}{2}(a+c-\sqrt{(a-c)^2+4b^2}).\\
    \end{aligned}
    \vspace{-.1in}
\end{equation*}

Define
$\rho(A):=\frac{\mu(A)}{\lambda(A)}
%=\frac{a+c-\sqrt{(a-c)^2+4b^{2}}}{a+c+\sqrt{(a-c)^2+4b^{2}}}
$
and
$\gamma(A):=-\frac{a-\lambda(A)}{a-\mu(A)}.$ (They are invariant under a scalar multiple of $A$.)
\end{definition}

\begin{restatable}{lemma}{RhoGammaInZeroOne}\label{lm: rho and gamma range}
    For $A=\left[\begin{smallmatrix}
    a & b\\
    b & c
\end{smallmatrix}\right]\in \Sym{2}{pd}{\Rp}$ with $a>c$, we have $\rho(A),\gamma(A)\in(0,1).$
\end{restatable}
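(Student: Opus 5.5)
We verify both claims directly from the explicit formulas in \cref{def: 2 by 2 matrix parameters}, using only positive definiteness, positivity of entries, and $a>c$. Write $s:=\sqrt{(a-c)^2+4b^2}$, so $\lambda(A)=\tfrac12(a+c+s)$ and $\mu(A)=\tfrac12(a+c-s)$.

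\emph{The range of $\rho(A)$.} Since $A$ is positive definite, both eigenvalues are positive, so $0<\mu(A)$. Since $b>0$ gives $s>0$, we also have $\mu(A)<\lambda(A)$. Together these give $0<\mu(A)<\lambda(A)$, hence $\rho(A)=\mu(A)/\lambda(A)\in(0,1)$. Equivalently, $s<a+c$ because $s^2=(a+c)^2-4(ac-b^2)$ and $ac-b^2=\det A>0$.

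\emph{The range of $\gamma(A)$.} Compute the two factors:
$a-\lambda(A)=\tfrac12\bigl((a-c)-s\bigr)$ and $a-\mu(A)=\tfrac12\bigl((a-c)+s\bigr)$.
Because $b>0$, we have $s>|a-c|=a-c$, where the last equality uses $a>c$.
\begin{itemize}
\item The numerator $a-\lambda(A)$ is strictly negative.
\item The denominator $a-\mu(A)$ is strictly positive, since $a-c>0$ and $s>0$.
\end{itemize}
Therefore
\[
\gamma(A)=\frac{s-(a-c)}{s+(a-c)},
\]
which is positive because $s>a-c$, and strictly less than $1$ because $a-c>0$. Hence $\gamma(A)\in(0,1)$.

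The only point needing care is the sign bookkeeping in $\gamma(A)$: positivity relies on $b\neq 0$, which makes $s$ strictly exceed $a-c$, while the upper bound $\gamma(A)<1$ is exactly where the hypothesis $a>c$ enters. If $a=c$, then $\gamma(A)=1$.

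As a sanity check, $\lambda(A)\mu(A)=\det A$, so $(a-\lambda)(a-\mu)=a^2-a(a+c)+(ac-b^2)=-b^2$. This gives the alternative form $\gamma(A)=b^2/(a-\mu(A))^2$, which is visibly positive.
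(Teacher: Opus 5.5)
Your proof is correct and follows essentially the paper's route. The paper handles $\rho$ the same way. For $\gamma$ it uses $\varphi_A(a)=-b^2<0$ to get $\mu<a<\lambda$ (hence $\gamma>0$) and the trace identity $\lambda+\mu=a+c<2a$ (hence $\gamma<1$), which is exactly what your explicit formulas in $s$ encode.
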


%\vspace{-.1in}

 \begin{definition}
        Denote the distinct elements of $\{A_i\}_{1\le i\le s}$ by $B_1, \dots, B_m$.
        Suppose $B_i=A_{p_i}$ for $1\le i\le m$, where $1\le p_1<p_2<\ldots<p_m\le s$.
    \end{definition}

\vspace{-.1in}
\begin{restatable} 
{lemma}{RhoDistinctByThickening}\label{lm: rho distinct by thickening}
    There exists some $n\in \Zp$, such that $\rho(\mathcal{T}_{B_i}(n))$  are all distinct for $1\le i \le m$. 
\end{restatable}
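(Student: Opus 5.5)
The plan is to study the asymptotics of $\rho(\mathcal{T}_{B_i}(n))$ as $n\to\infty$ for each $B_i=\left[\begin{smallmatrix}1 & b_i\\ b_i & c_i\end{smallmatrix}\right]$ with $1>b_i>c_i>b_i^2>0$. We have $\mathcal{T}_{B_i}(n)=\left[\begin{smallmatrix}1 & b_i^n\\ b_i^n & c_i^n\end{smallmatrix}\right]$. Its eigenvalues satisfy
\[
\lambda\mu=c_i^n-b_i^{2n},\qquad \lambda+\mu=1+c_i^n .
\]
Since $b_i^2<c_i<1$, the larger eigenvalue is $\lambda=1+O(b_i^{2n})$. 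Then $\mu=(c_i^n-b_i^{2n})/\lambda$, which gives
\[
\rho(\mathcal{T}_{B_i}(n))=\frac{\mu}{\lambda}=\frac{c_i^n-b_i^{2n}}{\lambda^2}=c_i^n\bigl(1-(b_i^2/c_i)^n\bigr)\bigl(1+O(b_i^{2n})\bigr).
\]
In particular $\rho(\mathcal{T}_{B_i}(n))^{1/n}\to c_i$, and more precisely $\rho(\mathcal{T}_{B_i}(n))=c_i^n(1+o(1))$.

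Take two indices $i\neq k$. If $c_i\neq c_k$, say $c_i>c_k$, then $\rho(\mathcal{T}_{B_i}(n))/\rho(\mathcal{T}_{B_k}(n))\sim (c_i/c_k)^n\to\infty$. So the two values differ for all sufficiently large $n$.

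If $c_i=c_k=:c$, then $b_i\neq b_k$ because $B_i\neq B_k$; say $b_i>b_k$. Now the leading terms agree, so I compare the next-order terms. Write
\[
\rho(\mathcal{T}_{B_i}(n))=c^n\bigl(1-(b_i^2/c)^n-2b_i^{2n}+O(\text{smaller})\bigr).
\]
The dominant correction is $(b_i^2/c)^n$, since $c<1$ means $b_i^2/c>b_i^2$. Because $b_i^2/c>b_k^2/c$, we get $c^{-n}\bigl(\rho(\mathcal{T}_{B_k}(n))-\rho(\mathcal{T}_{B_i}(n))\bigr)\sim (b_i^2/c)^n>0$. Again the two values differ for all large $n$.

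To keep this clean, I would expand the exact formula $\rho=(c^n-b^{2n})/\lambda^2$ with $\lambda^2=1+O(b^{2n})$ and check that every error term is $o((b_i^2/c)^n)$, uniformly enough to compare the two expansions. This careful second-order comparison, verifying that the error terms are genuinely smaller, is the main obstacle. If I wanted to avoid it, I would instead use an analyticity argument. The function $g(\theta)=\rho(\mathcal{T}_{B_i}(\theta))-\rho(\mathcal{T}_{B_k}(\theta))$ is real-analytic in $\theta$ and nonzero for large $\theta$ by the asymptotics above, hence not identically zero. That alone does not guarantee avoiding all integers, though, so the direct "for all large $n$" asymptotic statement is the one I would actually prove.

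Finally, each of the finitely many pairs $(i,k)$ has a threshold $N_{ik}$ beyond which the two $\rho$ values differ. Any integer $n>\max_{i<k}N_{ik}$ then makes all $\rho(\mathcal{T}_{B_i}(n))$ distinct.
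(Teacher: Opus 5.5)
Your proposal is correct and follows essentially the same route as the paper: expand $\rho(\mathcal{T}_{B}(n)) = c^n - b^{2n} + O(c^n b^{2n})$ as $n\to\infty$, split each pair according to whether the $c$'s differ (the leading term dominates) or coincide (then the $b$'s differ and the $b^{2n}$ term decides), and take $n$ beyond the maximum of the finitely many thresholds. The only cosmetic difference is that you get the expansion from the exact identity $\rho=\det/\lambda^2$ with $\lambda = 1+O(b^{2n})$, which is slightly cleaner than the paper's direct Taylor expansion of $\sqrt{(1-c^n)^2+4b^{2n}}$.
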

\begin{proofsketch}
    The idea is to look at the Taylor expansion of $\rho(\mathcal{T}_{B_i}(n))$ as $n\to + \infty.$
    Suppose $B=\left[\begin{smallmatrix}
    1 & b\\
    b & c
    \end{smallmatrix}\right]\in \Sym{2}{pd}{\Rp}$ such that $1>b>c>b^2$.
    By asymptotic analysis, we can show
    \begin{equation}\label{eq: rho(n)}
        \rho(\mathcal{T}_{B}(n))=c^n-b^{2n}+\Theta(c^nb^{2n}), \text{ as $n\to+\infty$.}
    \end{equation}
    
    Since the pairs $(b_{p_i},c_{p_i})$ are all distinct for every $1\le i<j\le m$, $\rho(\mathcal{T}_{B_i}(n))\neq\rho(\mathcal{T}_{B_j}(n))$ for all sufficiently large $n$ and every $1\le i<j\le m$.
    The lemma follows.
\end{proofsketch}

Fix the $n$ in \cref{lm: rho distinct by thickening}.
If $1>b_i>c_i>b_i^2>0$, then after thickening, we still have $1>b_i^n>c_i^n>b_i^{2n}>0$.
So thickening doesn't break our previous assumptions to $A_i$.
Also $\PlGH(\bigotimes_{i=1}^s \mathcal{T}_{A_i}(n))\equiv_p^T \PlGH(\mathcal{T}_M(n))\le_p^T \PlGH(M)$ by \cref{lemma: simpleThickening}.
Thus, after \cref{lm: rho distinct by thickening}, we may further assume that $\rho(B_i)$ are all distinct for $1\le i \le m$.
%Now we are ready to prove the following lemma:
\vspace{-.1in}

\begin{definition}
    For every $1\le i\le s$ and $\theta\in\R,$ define 
    $
    b_i(\theta)=\frac{\sqrt{\gamma_i}(1-\rho_i^\theta)}{1+\gamma_i\rho_i^\theta},
    $
    $
    c_i(\theta)=\frac{\rho_i^\theta+\gamma_i}{1+\gamma_i\rho_i^\theta}$,
    where $\rho_i=\rho(A_i),\gamma_i=\gamma(A_i)$.
    Define
    $A_i(\theta)=\left[\begin{smallmatrix}
    1 & b_i(\theta)\\
    b_i(\theta) & c_i(\theta)
\end{smallmatrix}\right]$.
\end{definition} 

It can be shown that for every $1\le i\le s$, $\mathcal{S}_{A_i}(\theta)=A_i(\theta)$ up to a scalar (\cref{lm: b(theta) and c(theta) after stretching}, \autoref{Appendix: Tensor of 2by2}).

\begin{lemma}\label{lm: c has no relation after stretching}
    There exists some $\theta\in \mathbb{R}$ such that $\widehat{\mathfrak{L}}(c_{p_1}(\theta),c_{p_2}(\theta),\ldots,c_{p_s}(\theta))=\{\mathbf{0}\}$.
\end{lemma}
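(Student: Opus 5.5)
The plan is the same analytic-function argument used in \cref{lm: none-zero linear coe of eigenvalues implies hardness} and \cref{lm: lattice of c contained in Potts after thickening}. Recall that after \cref{lm: rho distinct by thickening} we may assume the values $\rho_{p_i}=\rho(B_i)$ are pairwise distinct. By \cref{lm: rho and gamma range}, the values $\rho_{p_i}$ and $\gamma_{p_i}$ all lie in $(0,1)$; this uses $a>c$, which holds because $1>c_i$. I read the indices in the statement as ranging over the distinct factors $B_1,\dots,B_m$.

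For each $\theta\in\mathbb{R}$ we have $c_{p_i}(\theta)=(\rho_{p_i}^\theta+\gamma_{p_i})/(1+\gamma_{p_i}\rho_{p_i}^\theta)>0$. Moreover $\theta\mapsto \ln c_{p_i}(\theta)$ is real analytic on all of $\mathbb{R}$, since $\rho^\theta=e^{\theta\ln\rho}$ and numerator and denominator are positive. For every nonzero integer vector $\mathbf{x}=(x_1,\dots,x_m)$ define
\[
f_{\mathbf{x}}(\theta)=\sum_{i=1}^m x_i\ln c_{p_i}(\theta).
\]
Then $\mathbf{x}\in\widehat{\mathfrak{L}}(c_{p_1}(\theta),\dots,c_{p_m}(\theta))$ if and only if $f_{\mathbf{x}}(\theta)=0$. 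It therefore suffices to show that no $f_{\mathbf{x}}$ with $\mathbf{x}\neq\mathbf{0}$ vanishes identically. Granting this, each such $f_{\mathbf{x}}$ has only countably many zeros, and there are countably many $\mathbf{x}$. Hence some $\theta$ avoids all of these zeros, and for that $\theta$ the lattice is $\{\mathbf{0}\}$.

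To show $f_{\mathbf{x}}\not\equiv 0$, I will use asymptotics as $\theta\to+\infty$, where $\rho_{p_i}^\theta\to 0$. Writing $\rho=\rho_{p_i}$ and $\gamma=\gamma_{p_i}$, we have
\[
\ln c_{p_i}(\theta)=\ln\gamma+\ln\!\left(1+\gamma^{-1}\rho^\theta\right)-\ln\!\left(1+\gamma\rho^\theta\right)=\ln\gamma+\left(\gamma^{-1}-\gamma\right)\rho^\theta+O(\rho^{2\theta}).
\]
The coefficient $\gamma^{-1}-\gamma$ is strictly positive because $\gamma\in(0,1)$.

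Suppose $f_{\mathbf{x}}\equiv0$. Letting $\theta\to+\infty$ gives $\sum_i x_i\ln\gamma_{p_i}=0$. Now let $i^*$ be the index in the support of $\mathbf{x}$ with the largest $\rho_{p_i}$; it is unique because the $\rho_{p_i}$ are distinct. Every remaining term is $o(\rho_{p_{i^*}}^\theta)$: the terms $\rho_{p_j}^\theta$ with $\rho_{p_j}<\rho_{p_{i^*}}$ are, and so are all the $O(\rho^{2\theta})$ terms, since $\rho_{p_{i^*}}^{2}<\rho_{p_{i^*}}$ and $\rho_{p_j}^2<\rho_{p_{i^*}}$. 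This yields
\[
f_{\mathbf{x}}(\theta)=x_{i^*}\left(\gamma_{p_{i^*}}^{-1}-\gamma_{p_{i^*}}\right)\rho_{p_{i^*}}^\theta+o(\rho_{p_{i^*}}^\theta).
\]
This is nonzero for large $\theta$ because $x_{i^*}\neq0$, which is a contradiction.

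The only delicate point is this leading-order comparison. Here the distinctness of the $\rho(B_i)$ is exactly what \cref{lm: rho distinct by thickening} supplies, and the constant term must be handled separately, as above, before reading off the leading exponential. Everything else is routine.
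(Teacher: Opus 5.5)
Your proposal is correct and follows essentially the same argument as the paper. Both proofs expand $\ln c_{p_i}(\theta)=\ln\gamma_{p_i}+(\gamma_{p_i}^{-1}-\gamma_{p_i})\rho_{p_i}^{\theta}+O(\rho_{p_i}^{2\theta})$ as $\theta\to+\infty$, dispose of the constant term, use the distinctness of the $\rho_{p_i}$ to isolate a nonvanishing leading exponential, and finish with the countable-zeros argument. Your remaining points are minor tidy-ups of the same proof: you note that $c_{p_i}(\theta)>0$ for every real $\theta$ (so the paper's threshold $U$ is unnecessary), and you check explicitly that $\gamma^{-1}-\gamma>0$ and that the $O(\rho^{2\theta})$ terms are dominated.
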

\begin{proof}
Notice $\lim_{\theta\to + \infty}c_i(\theta)=\gamma_i\in (0,1)$. So there exists $U>0$, such that $c_i(\theta)>0$ for every $1\le i\le s$ and $\theta>U$.
For any $\mathbf{x}=(x_1,\ldots,x_s)\in \mathbb{Z}^m\setminus\{\mathbf{0}\}$, define $f_\mathbf{x}(\theta):=\sum_{i=1}^m x_i\ln c_{p_i}(\theta),\theta>U.$
The condition 
%that 
$\prod_{i=1}^m c_{p_i}(\theta)^{x_i}= 1$ is equivalent to $f_\mathbf{x}(\theta)=0.$
Next we show that for any $\mathbf{x}\in \Z^m\setminus \{\mathbf{0}\}$, $f_{\mathbf{x}}(\theta)\not\equiv 0.$  Suppose, for a contradiction, $f_{\mathbf{x}}(\theta)\equiv 0.$
We 
%first calculate the Taylor expansion of 
expand $\ln c_{p_i}(\theta)$ ``at $+\infty$''.
%$\ln c_{p_i}(k)$.
For simplicity, we temporarily omit subscripts.
We have $\ln c(\theta)=\ln(\rho^\theta+\gamma)-\ln({1+\gamma\rho^\theta}).$
Consider the real function $g(x)=\ln(x+\gamma)-\ln(\gamma x+1)$ 
defined for $x$ in a small interval containing $0$.
Note that $\rho_i\in(0,1)$ for every $1\le i\le m$ by \cref{lm: rho and gamma range}.
By Taylor expansion %$g(x)=g(0)+g'(0)x+\Theta(x^{2})$ as $x\to 0$ 
we have
%Now restrict $x$ to a small interval $(0, \delta)$ we have

\begin{equation}\label{eq: taylor expansion of c_k}
    \ln c(\theta)=g(\rho^\theta)=\ln\gamma+(\gamma^{-1}-\gamma)\rho^{\theta}+\Theta(\rho^{2\theta}), \text{ as } \theta\to +\infty.
\end{equation}

Plug \cref{eq: taylor expansion of c_k} into $f_\mathbf{x}(\theta)\equiv 0$, we obtain
    \begin{equation}\label{eq: taylor expansion of lattice condition of c}
    \sum_{i=1}^m x_i\ln\gamma_{p_i}+\sum_{i=1}^m x_i(\gamma_{p_i}^{-1}-\gamma_{p_i})\rho_{p_i}^{\theta}+\sum_{i=1}^m x_i\Theta(\rho_{p_i}^{2\theta}) \equiv 0,\text{ as } \theta\to +\infty.
    \end{equation}

    If $\sum_{i=1}^sx_i\ln\gamma_{p_i}\neq 0$,
    then the first term in \cref{eq: taylor expansion of lattice condition of c} is a non-zero constant, and the remaining terms decay exponentially as $\theta\to +\infty$.
    So \cref{eq: taylor expansion of lattice condition of c} fails for all sufficiently large $\theta$. 
    %Thus, $f_\mathbf{x}(\theta)\not \equiv 0.$
    A contradiction.
    If $\sum_{i=1}^sx_i\ln\gamma_i=0$, then
    $f_\mathbf{x}(\theta) \equiv 0$ is equivalent to
    {\small
    \begin{equation}\label{eq: simplified taylor expansion of lattice condition}
        \sum_{i=1}^sx_i(\gamma_{p_i}^{-1}-\gamma_{p_i})\rho_{p_i}^{\theta}+\sum_{i=1}^s x_i\Theta(\rho_{p_i}^{2\theta})
        \equiv 0,\text{ as } \theta\to +\infty.
    \end{equation}
    }
%    Given that the
As $\rho_{p_i}$ are distinct and $\mathbf{x} \neq \mathbf{0}$, there exists a unique index $i_0$ corresponding to the largest $\rho_{p_i}$ for which $x_i \neq 0$. 
    Then \cref{eq: simplified taylor expansion of lattice condition} becomes
    $
    x_{i_0}(\gamma_{p_{i_0}}^{-1}-\gamma_{p_{i_0}})\rho_{p_{i_0}}^\theta+o(\rho_{p_{i_0}}^\theta) \equiv 0,\text{ as } \theta\to +\infty,
    $
    a contradiction.
    %This is also a contradiction.  
    
    So far, we proved that for any $\mathbf{x}\in \Z^m\setminus\{\mathbf{0}\}$, $f_\mathbf{x}(\theta)\not\equiv 0.$
    Then the number of zeros of $f_\mathbf{x}(\theta)$ is countable for any $\mathbf{x}\in \Z^m\setminus\{\mathbf{0}\}$.
    Furthermore, the union of the sets of zeros over all $\mathbf{x}\in \Z^m\setminus\{\mathbf{0}\}$ is also countable. 
    Thus, there exists $\theta \in \mathbb{R}$ such that $f_\mathbf{x}(\theta) \neq 0$ for all $\mathbf{x} \in \mathbb{Z}^s \setminus \{\mathbf{0}\}$. 
    That is,
    $\widehat{\mathfrak{L}}(c_{p_1}(\theta), c_{p_2}(\theta), \ldots, c_{p_s}(\theta)) = \{\mathbf{0}\}.$
\end{proof}

Notice that in the proof of \cref{lm: c has no relation after stretching}, we can choose $\theta$ arbitrarily large.
Specifically, we can choose $\theta>0$ such that $1>b_i(\theta)>c_i(\theta)$ for every $1\le i\le s$ since $\lim_{\theta\to +\infty}b_i(\theta)=\sqrt{\gamma_i}>\gamma_i=\lim_{\theta\to +\infty}c_i(\theta)$.
Also, stretching has no affect on the positive definiteness, so the assumption $c_i(\theta)>b_i(\theta)^2$ is preserved.
Since $\PlGH(\bigotimes_{i=1}^sA_i(\theta))\equiv_p^T\PlGH(\mathcal{S}_M(\theta))\le_p^T \PlGH(M)$ by \cref{lemma: stretchingLemma}, we may replace every $A_i$ by $A_i(\theta)$ and further assume that $\widehat{\mathfrak{L}}(c_{p_1},c_{p_2},\ldots,c_{p_s})=\{\mathbf{0}\}$.

\begin{lemma}\label{lm: rho has no relation after thickening}
    There exists some $\eta\in \mathbb{R}$ such that $\widehat{\mathfrak{L}}(\rho(\mathcal{T}_{B_{p_1}}(\eta)),\rho(\mathcal{T}_{B_{p_2}}(\eta)),\ldots,\rho(\mathcal{T}_{B_{p_m}}(\eta)))=\{\mathbf{0}\}$.
\end{lemma}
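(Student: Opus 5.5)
Since every earlier lemma in this line was proved the same way, we follow that template once more. We study $\rho(\mathcal{T}_{B}(\eta))$ as an analytic function of $\eta$. For each nonzero $\mathbf{x}\in\Z^m$ we show that
\[
F_{\mathbf{x}}(\eta)=\sum_{i=1}^m x_i\ln\rho(\mathcal{T}_{B_i}(\eta))
\]
is not identically zero. Then we use countability of zero sets to pick a good $\eta$.

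Write $B_i=\left[\begin{smallmatrix}1&b_i\\ b_i&c_i\end{smallmatrix}\right]$ with $1>b_i>c_i>b_i^2>0$. By the reductions already made, the $c_i$ are multiplicatively independent, i.e.\ $\widehat{\mathfrak{L}}(c_{p_1},\ldots)=\{\mathbf{0}\}$. Then
\[
\rho(\mathcal{T}_{B_i}(\eta))=\frac{\mu}{\lambda}=\frac{\det}{\lambda^2},
\qquad
\det\mathcal{T}_{B_i}(\eta)=c_i^\eta-b_i^{2\eta}>0 .
\]
The expression is analytic in $\eta$ on a range where $\mathcal{T}_{B_i}(\eta)$ stays positive definite. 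Since $c_i>b_i^2$, this holds for all $\eta>0$, and $\lambda>0$ there.

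The key step is the expansion at $+\infty$, as in \cref{eq: rho(n)}. We have $\lambda(\mathcal{T}_{B_i}(\eta))=1+O(b_i^{2\eta})$, which gives
\[
\ln\rho(\mathcal{T}_{B_i}(\eta))=\eta\ln c_i+\ln\bigl(1-(b_i^2/c_i)^\eta\bigr)-2\ln\lambda
=\eta\ln c_i+O\bigl(\max(b_i^2/c_i,\,b_i^2)^\eta\bigr).
\]
Hence
\[
F_{\mathbf{x}}(\eta)=\eta\sum_i x_i\ln c_i+o(1)
\quad\text{as }\eta\to+\infty .
\]
Because $\widehat{\mathfrak{L}}(c_{p_1},\ldots,c_{p_m})=\{\mathbf{0}\}$, we get $\sum_i x_i\ln c_i\neq0$ for every $\mathbf{x}\neq\mathbf{0}$. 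So $F_{\mathbf{x}}$ grows linearly and is not identically zero.

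A nonzero real-analytic function on the connected interval $(0,\infty)$ has countably many zeros. Taking the union over the countably many $\mathbf{x}\in\Z^m\setminus\{\mathbf{0}\}$ leaves an $\eta>0$ at which no $F_{\mathbf{x}}$ vanishes. At that $\eta$ the lattice $\widehat{\mathfrak{L}}$ of the $\rho$'s is trivial. If later steps need extra properties (e.g.\ $1>b_i^\eta>c_i^\eta>b_i^{2\eta}$, which holds for every $\eta>0$), we can take $\eta$ as large as we like, as remarked after \cref{lm: c has no relation after stretching}.

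The main obstacle is the asymptotic step. We must check that the $o(1)$ remainder is uniformly controlled, i.e.\ that the lower-order terms from $\lambda$ and from $\ln(1-(b^2/c)^\eta)$ are genuinely exponentially small and cannot cancel the linear term. This is immediate once the linear coefficient is nonzero. The real content is therefore the preceding reduction, \cref{lm: c has no relation after stretching}, which made the $c_i$ multiplicatively independent. Everything else is analyticity, for which we need to ensure the branch $\ln\rho$ is analytic on all of $(0,\infty)$; positive definiteness gives this.
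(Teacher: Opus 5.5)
Your proposal is correct and follows essentially the same route as the paper. You expand $\ln\rho(\mathcal{T}_{B}(\eta))=\eta\ln c+(\text{exponentially small})$, use the multiplicative independence of the $c_{p_i}$ to get a nonvanishing term linear in $\eta$, and conclude with the countable-zeros argument. Writing $\rho=\det/\lambda^{2}$ with $\det=c^{\eta}-b^{2\eta}$ and $\lambda=1+O(b^{2\eta})$ is simply a slightly cleaner way to reach the expansion in \cref{eq: log mu expansion}.
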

\begin{proof}
    Fix any $\mathbf{x} \in \mathbb{Z}^m \setminus \{\mathbf{0}\},$ define $f_\mathbf{x}(\eta):=\sum_{i=1}^m x_i\ln \rho(\mathcal{T}_{B_{p_i}}(\eta)),\eta>0.$
    The condition $\mathbf{x}\in\widehat{\mathfrak{L}}(\rho(\mathcal{T}_{B_{p_1}}(\eta)),\rho(\mathcal{T}_{B_{p_2}}(\eta)),\ldots,\rho(\mathcal{T}_{B_{p_m}}(\eta)))$ is equivalent to $f_\mathbf{x}(\eta)=0$.
    We show that $f_\mathbf{x}(\eta)\not\equiv 0$.
    Recall \cref{eq: rho(n)}:
    $\rho(\mathcal{T}_{B}(\eta)) = c^\eta - b^{2\eta} + \Theta(c^\eta b^{2\eta})$ as $\eta\to + \infty$, for $B=\left[\begin{smallmatrix}
    1 & b\\
    b & c
    \end{smallmatrix}\right]$ such that $1>b>c>b^2$.
    Using the approximation $\ln(1-u) = -u + o(u)$ as $u \to 0$, we obtain

    \begin{equation}\label{eq: log mu expansion}
        \ln\rho(\mathcal{T}_{B}(\eta)) = \ln\left( c^\eta \left( 1 - \frac{b^{2\eta}}{c^\eta} + \Theta(b^{2\eta}) \right) \right)
        = \eta\ln c - \frac{b^{2\eta}}{c^\eta} + o\left(\frac{b^{2\eta}}{c^\eta}\right).
    \end{equation}
    
    Substituting \cref{eq: log mu expansion} into $f_\mathbf{x}(\eta)=0$ yields:
    {\small
    \begin{equation}\label{eq: asymptotic balance}
        \eta \sum_{i=1}^m x_i \ln c_{p_i} = \sum_{i=1}^m x_i\cdot \frac{b_{p_i}^{2\eta}}{c_{p_i}^{\eta}}
        + \sum_{i=1}^m x_i\cdot o\left(\frac{b_{p_i}^{2\eta}}{c_{p_i}^\eta}\right).
    \end{equation}
    }
    Given the assumption $\widehat{\mathfrak{L}}(c_{p_1}, \ldots, c_{p_s}) = \{\mathbf{0}\}$, we know $\sum_{i=1}^m x_i \ln c_{p_i}\neq 0$. 
    The LHS of \cref{eq: asymptotic balance} grows linearly with $\eta$, whereas the RHS decays exponentially.
    Thus, \cref{eq: asymptotic balance} fails for all sufficiently large $\eta$.
   So, indeed $f_\mathbf{x}(\eta)\not\equiv 0.$
    By the same counting argument as in \cref{lm: c has no relation after stretching}, there exists $\eta \in \mathbb{R}$ such that $f_\mathbf{x}(\eta) \neq 0$ for all $\mathbf{x} \in \mathbb{Z}^m \setminus \{\mathbf{0}\}$. 
    That is, $\widehat{\mathfrak{L}}(\rho(\mathcal{T}_{B_{p_1}}(\eta)),\rho(\mathcal{T}_{B_{p_2}}(\eta)),\ldots,\rho(\mathcal{T}_{B_{p_m}}(\eta)))=\{\mathbf{0}\}$.
\end{proof}

An implication of \cref{lemma: latticeInterpolation} is that if the spectral lattice of $M$ is trivial, then $\PlGH(M)$ is \numPhard.
(Basically, we can interpolate 
%anything 
freely by \cref{lemma: latticeInterpolation}.)
When $M$ is a tensor product of $2 \times 2$ matrices, we can show that the hardness criterion extends to the triviality of the ``reduced spectral lattice".
Formally, we have the following lemma which is proved in \autoref{Appendix: Tensor of 2by2}:
\begin{restatable}
{lemma}{ReducedLatticeLemma}\label{lm: reduced lattice lemma}
    If $\widehat{\mathfrak{L}}(\rho(B_1),\rho(B_2),\ldots,\rho(B_m))=\{\mathbf{0}\},$ then $\PlGH(M)$ is $\#$P-hard.
\end{restatable}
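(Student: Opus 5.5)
Diagonalize $M$ through its tensor factors and then use \cref{lemma: latticeInterpolation} to replace the spectrum by one of our choice. The target is a matrix in which exactly one factor is a generic $2\times 2$ matrix and every other factor is (a scalar multiple of) a rank-one projection. This isolates $\PlGH(B_i^{\otimes t_i})$-type problems.

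\textbf{Step 1: the spectrum of $M$.} Write $A_i=H_iD_iH_i^{\tt T}$ with $D_i=\lambda_i\,\mathrm{diag}(1,\rho_i)$, where $\rho_i=\rho(A_i)$. Then $M=HDH^{\tt T}$ with $H=\bigotimes H_i$ and $D=\bigotimes D_i$. The eigenvalue indexed by $\mathbf{e}\in\{0,1\}^s$ is
\[
\Lambda_{\mathbf e}=\Bigl(\prod_i\lambda_i\Bigr)\prod_{i}\rho_i^{e_i}.
\]
Group the factors by the distinct matrices $B_1,\dots,B_m$. Then $\Lambda_{\mathbf e}=\Lambda\prod_{k=1}^m\rho(B_k)^{n_k(\mathbf e)}$, where $n_k(\mathbf e)=\#\{i: A_i=B_k,\ e_i=1\}$.

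\textbf{Step 2: the lattice $\mathfrak{L}(\Lambda_{\mathbf e})$.} Take $\mathbf{x}$ with $\sum_{\mathbf e} x_{\mathbf e}=0$. The common factor $\Lambda$ cancels, so $\prod_{\mathbf e}\Lambda_{\mathbf e}^{x_{\mathbf e}}=\prod_k\rho(B_k)^{\sum_{\mathbf e}x_{\mathbf e}n_k(\mathbf e)}$. By the hypothesis $\widehat{\mathfrak L}(\rho(B_1),\dots,\rho(B_m))=\{\mathbf 0\}$, this product equals $1$ if and only if $\sum_{\mathbf e}x_{\mathbf e}n_k(\mathbf e)=0$ for every $k$.

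Hence any diagonal $\Delta_{\mathbf e}=\prod_k\delta_k^{n_k(\mathbf e)}$ with $\delta_k\neq0$ satisfies $\mathfrak L(\Lambda)\subseteq\mathfrak L(\Delta)$. So \cref{lemma: latticeInterpolation} gives $\PlGH(H\Delta H^{\tt T})\le_p^T\PlGH(M)$.

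\textbf{Step 3: choosing $\Delta$.} We have $H\Delta H^{\tt T}=\bigotimes_i H_i\,\mathrm{diag}(1,\delta_{k(i)})H_i^{\tt T}$, where $k(i)$ is the index with $A_i=B_{k(i)}$. For $k\neq 1$ set $\delta_k\to 0$ (realized by interpolation/limits, or take $\delta_k$ tiny). Then those factors become the rank-one projections $\mathbf{w}_i\mathbf{w}_i^{\tt T}$, where $\mathbf w_i$ is the positive Perron vector of $A_i$.

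A tensor factor that is rank one of the form $\mathbf w\mathbf w^{\tt T}$ contributes $\prod_v w_{\sigma(v)}^{\deg v}$ to the partition function. This is a vertex weight depending only on degrees. It factorizes out of $Z$ and can be computed in polynomial time (it equals $\prod_v\sum_a w_a^{\deg v}$). So we reduce from $\PlGH(B_1(\delta)^{\otimes t_1})$ with $B_1(\delta)=H_1\mathrm{diag}(1,\delta)H_1^{\tt T}$.

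\textbf{Main obstacle.} The main difficulty is the exact setting $\delta_k=0$. It is not allowed in \cref{lemma: latticeInterpolation}, since that lemma needs nonzero entries. I would handle it by a second interpolation in the variable $\delta_k$: $Z$ is a polynomial in $\delta_k$ of degree at most $|E|t_k$, and we can evaluate it at polynomially many nonzero values of $\delta_k$. Solving the resulting Vandermonde system recovers the constant coefficient, which is the $\delta_k=0$ value.

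\textbf{Step 4: hardness of $\PlGH(B_1^{\otimes t})$.} It remains to show this is \#P-hard for a non-matchgate $B_1$, i.e.\ one with unequal diagonal entries. Here $\delta$ is free, so we may take $B_1(\delta)^{\otimes t}$ with $\delta$ generic, and then apply thickening and then \cref{thm: determinant criterion} or a known $q=2$ result. Concretely:
\begin{itemize}
\item If $t=1$, use the Boolean dichotomy for $\PlGH$ of $2\times2$ matrices: with $a\neq c$ and generic $\delta$, the matrix is not a matchgate and not degenerate, so the problem is \#P-hard.
\item If $t\ge 2$, check the determinant criterion for $B_1(\delta)^{\otimes t}$. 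Its log-matrix is $L=\sum_i(\text{log of factor } i)$ lifted, and one needs $\det\begin{bmatrix}L&\mathbf 1\\ \mathbf 1^{\tt T}&0\end{bmatrix}\ne0$.
\end{itemize}
The tensor structure makes $L$ a sum of Kronecker-lifted $2\times2$ log matrices. That risks singularity, so I expect this case to be the genuine sticking point. If it fails, the fallback is to reduce $t$ by pinning: combine a gadget with the rank-one reductions above, applied to copies of the same $B_1$ via distinct thickenings, so as to separate them and reach the $t=1$ case.
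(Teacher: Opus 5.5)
Your Steps 1--3 are essentially the paper's argument. The hypothesis $\widehat{\mathfrak{L}}(\rho(B_1),\ldots,\rho(B_m))=\{\mathbf{0}\}$ makes the spectral lattice depend only on the block counts, so \cref{lemma: latticeInterpolation} lets you replace each $\rho(B_k)$ by any nonzero $\delta_k$.

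Your ``main obstacle'' is self-inflicted. The paper simply takes $\delta_k=1$ for $k\ge 2$. This turns those factors into $H_i\,\mathrm{diag}(1,1)\,H_i^{\tt T}=I$, whose partition function is $2^{\#\text{components}}$. No second interpolation toward $\delta_k=0$ is needed. Your Vandermonde fix does work; it is just unnecessary.

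The genuine gap is Step 4 for $t_1\ge 2$. You leave it open, and both fallbacks you propose fail.
\begin{itemize}
\item The determinant criterion can never hold for a tensor power with $t\ge 2$. Let $\mathbf{u}=(1,-1)^{\tt T}$ and $\mathbf{v}=\mathbf{u}\otimes\mathbf{u}\otimes\mathbf{1}\otimes\cdots\otimes\mathbf{1}$. Then $\mathbf{v}\in\mathbf{1}^\perp$ and $L\mathbf{v}=0$, because every summand of $L$ applies the all-ones matrix to at least one copy of $\mathbf{u}$. So the bordered determinant vanishes.
\item Separating the copies by thickenings or other gadgets is also impossible. Every planar edge gadget acts identically on all factors, $\mathbf{K}(B^{\otimes t})=\mathbf{K}(B)^{\otimes t}$. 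Within a block, the lattice condition forces $\Delta$ to depend geometrically on $|T_1|$ alone, so interpolation only yields $B_1(\delta)^{\otimes t_1}$.
\end{itemize}

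The missing observation is elementary. We have $Z_{B_1^{\otimes t}}(G)=Z_{B_1}(G)^{t}$, and $Z_{B_1}(G)>0$ because $B_1$ has positive entries. So $Z_{B_1}(G)$ is recovered as the unique positive $t$-th root, giving $\PlGH(B_1)\le_p^T\PlGH(B_1^{\otimes t_1})$. The Boolean-domain dichotomy for the non-matchgate $B_1$ (unequal diagonal entries) then finishes the proof. This is exactly the step $\PlGH(N')\equiv_p^T\PlGH(B_1)$ in the paper.
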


Finally, we are ready to prove \cref{thm: hardness for 2 by 2 tensor positive definite}:

%\HardnessForTensorTwoByTwo*
\begin{proof}
    Fix the $\eta$ in \cref{lm: rho has no relation after thickening}, $\PlGH(\mathcal{T}_M(\eta))\le_p^T \PlGH(M)$ by \cref{lemma: simpleThickening}.
    %Then apply \cref{lm: reduced lattice lemma}.
    %
    Then the theorem follows from \cref{lm: reduced lattice lemma}.
\end{proof}

We can also extend  the dichotomy from $\Sym{q}{pd}{\Rp}$ to
$\Sym{q}{}{\R_{\geq 0}}$; see \autoref{Appendix: Tensor of 2by2}.
%with a little effort.
%The proof is deferred to \autoref{Appendix: Tensor of 2by2}.
\begin{restatable}
{theorem}{HardnessForTensorTwoByTwoNonnegative}\label{thm: hardness for 2 by 2 tensor nonnegative}
    Let $M\in \Sym{q}{}{\R_{\ge 0}}$ be a non-zero matrix.
    Suppose
    $M = \bigotimes_{i=1}^{s} A_{i}$,
    where $A_i\in \Sym{2}{}{\R_{\ge 0}}$ for every $1\le i\le s$.
    Let $I := \{i \in[s] : (A_{i})_{12}\neq 0 
    \text{ and } \det(A_{i})\neq 0\}$.
    Then $\PlGH(M)$ is $\#$P-hard unless
    $(A_{i})_{11}=(A_i)_{22}$ for every $i\in I$,
    in which case the problem is computable in P-time.
    %tractable.
\end{restatable}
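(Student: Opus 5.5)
The plan is to reduce \cref{thm: hardness for 2 by 2 tensor nonnegative} to the positive definite case, \cref{thm: hardness for 2 by 2 tensor positive definite}, by first discarding the degenerate factors $A_i$ with $i\notin I$ and then fixing the signs of the eigenvalues of the remaining factors. Write $M = M_I\otimes M_{\bar I}$ with $M_I=\bigotimes_{i\in I}A_i$ and $M_{\bar I}=\bigotimes_{i\notin I}A_i$. Since $Z_{X\otimes Y}(G)=Z_X(G)\,Z_Y(G)$ for every graph $G$, the partition function factors accordingly.

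\textbf{Degenerate factors.} For $i\notin I$ there are two cases.
\begin{itemize}
\item If $(A_i)_{12}=0$, then $A_i$ is diagonal. $Z_{A_i}(G)$ is then a product over connected components of $a^{|E(C)|}+d^{|E(C)|}$, where $a,d$ are the diagonal entries, so it is P-time computable.
\item If $\det A_i=0$ and $A_i\neq 0$, then $A_i=\mathbf{v}\mathbf{v}^{\tt T}$ up to sign, with $\mathbf{v}\ge 0$ since $A_i$ is nonnegative. $Z_{A_i}(G)$ is then the product over vertices of $\sum_k v_k^{\deg}$, again trivial.
\end{itemize}
So $Z_{M_{\bar I}}$ is P-time computable.

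\textbf{Reduction to $M_I$.} If $Z_{M_{\bar I}}(G)\neq 0$, we can recover $Z_{M_I}(G)$ from $Z_M(G)$ by division, which gives $\PlGH(M_I)\le_p^T\PlGH(M)$. Division fails when $Z_{M_{\bar I}}(G)=0$. To avoid this, I would first apply a stretching or thickening to $M$ (e.g.\ $\mathbf{T_2}$, or $\mathbf{S_2}$, giving $M^2$) so that every factor in $M_{\bar I}$ has strictly positive diagonal terms where nonzero. Then $Z_{M_{\bar I}}(G)>0$ for all $G$, unless $M_{\bar I}$ is identically zero on some factor, which the hypothesis $M\neq 0$ rules out. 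Conversely, tractability of $M_I$ gives tractability of $M$.

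\textbf{Making the factors positive definite.} Each $A_i$ with $i\in I$ is entrywise positive off the diagonal, nonsingular, and nonnegative. If $\det A_i<0$, replace $M$ by $\mathbf{S_2}(M)=M^2=\bigotimes A_i^2$. Each $A_i^2$ is positive definite and entrywise positive, since $(A_i^2)_{12}=b(a+c)>0$ when $a+c>0$; if $a=c=0$, then $A_i^2=b^2I$, which is diagonal and moves to the degenerate case. Moreover, $A_i^2$ has equal diagonal entries iff $a^2+b^2=c^2+b^2$, i.e.\ iff $a=c$. So the dichotomy condition is preserved in the hardness direction. Now apply \cref{thm: hardness for 2 by 2 tensor positive definite} to $\bigotimes_{i\in I}A_i^2$: if some $i\in I$ has $(A_i)_{11}\neq(A_i)_{22}$, then $\PlGH(M^2)$, hence $\PlGH(M)$, is \#P-hard.

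\textbf{Tractable side.} If every $i\in I$ has equal diagonal entries, each such $A_i$ is a matchgate. The Hadamard holographic transformation together with FKT computes $Z_{M_I}$ on planar graphs, as in the positive definite case, and the product with the trivial $Z_{M_{\bar I}}$ is P-time. This does not require positive definiteness, since the matchgate argument only uses the form $\left[\begin{smallmatrix}a&b\\b&a\end{smallmatrix}\right]$.

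\textbf{Main obstacle.} I expect the hardest step to be the division step: ensuring $Z_{M_{\bar I}}(G)\neq 0$, especially for nonnegative rank-one factors with zero entries. There $\sum_k v_k^{\deg(v)}$ is positive whenever $\mathbf{v}\neq 0$, and diagonal factors give $a^{m}+d^{m}>0$ after squaring. The fallback is to restrict to graphs where these quantities are positive, which is without loss of generality for the Potts hardness source.
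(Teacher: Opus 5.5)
Your proof is correct, and its skeleton matches the paper's. You strip off the factors outside $I$, whose partition functions are explicitly computable and strictly positive. You pass to $M^2$ via $\mathbf{S_2}$ to reach the positive definite, entrywise positive case and invoke \cref{thm: hardness for 2 by 2 tensor positive definite}. You use the Hadamard/matchgate transformation for tractability.

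The one genuine difference is how you dispose of a diagonal factor $A_j=\mathrm{diag}(\alpha,\beta)$. The paper writes $M\cong(\alpha M')\oplus(\beta M')$ and appeals to the pinning-based direct-sum lemma of \cite{cai2013graph}. You instead observe that $Z_M(G)=Z_{M'}(G)\prod_{C}(\alpha^{|E(C)|}+\beta^{|E(C)|})$, with the product over connected components $C$ of $G$, and divide. Your route is more elementary and obviously preserves planarity. It also treats diagonal and rank-one factors uniformly; the paper already divides out the rank-one ones in exactly this way.

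You also correctly catch the factor $A_i=\left[\begin{smallmatrix}0&b\\ b&0\end{smallmatrix}\right]\in I$. Its square $b^2I$ is diagonal, not entrywise positive. The paper's \cref{lm: square makes positive definite} is stated only for strictly positive entries, yet its final step assumes every $A_i^2$ ($i\in I$) is entrywise positive. Moving such factors back to the degenerate class fills that small hole.

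Your ``main obstacle'' is not actually an obstacle. Since $M\neq 0$, every factor is nonzero and nonnegative. So already for the original factors, for every $m\ge 0$ (with $0^0=1$), a nonzero diagonal factor gives $a^m+d^m>0$ and a nonzero rank-one factor $\mathbf{v}\mathbf{v}^{\tt T}$ gives $\sum_k v_k^m>0$. Hence $Z_{M_{\bar I}}(G)>0$ for every $G$. The preliminary thickening or stretching and the fallback of restricting the input graphs can both be dropped. Note also that $\mathbf{T_2}$ produces the entrywise square of $M$, not $M^2$.
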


\newpage

\appendix
\noindent\textbf{AI Disclosure:}
All the results in this paper were obtained before February 2026, and the entire manuscript was written by human authors.
We used Gemini 3.1 Pro to assist with literature search on matrix perturbation theory \cite{kato1995perturbation}. 
Gemini 3.1 Pro also helped with a routine calculation within the proof of \cref{lm: b(theta) and c(theta) after stretching}, with no real intellectual ingredient, and is manually verified. 
The tool materially affected \autoref{Appendix: Tensor of 2by2}. The authors verified the correctness and originality of all content including references.
\section{A Very Brief Introduction to
Quantum Automorphism Groups and the
``Quantum Barrier''}
\label{Appendix: Quantum}

In this section,
we briefly describe \emph{quantum automorphism groups}~\cite{wang1995free, wang1998quantum,
bichon2003quantum, banica2005quantum,
atserias2019quantum, chassaniol2019study,
lupini2020nonlocal, manvcinska2020quantum, kar2026npa}
and their relation to the ``quantum barrier'' discovered in
\cite{cai2026dichotomy}, which motivates this work. We begin by introducing the necessary
operator-algebraic framework before giving the formal definitions.

A $C^{*}$-algebra is a Banach algebra together with an involution $^*$ satisfying the properties of the adjoint.
It is a generalization of certain classes of bounded linear operators on Hilbert spaces. They form the analytic foundation of compact
quantum groups~\cite{arveson1998invitation}. 
%You can learn more about them from
We consider a matrix
$\mathcal{U} = (u_{ij})$ whose entries
belong to a (not necessarily commutative)
unital $C^{*}$-algebra.
The \emph{conjugate transpose}
$\mathcal{U}^{\dagger}$ is defined by
$(\mathcal{U}^{\dagger})_{ij} = u_{ji}^{*}$.

\begin{definition}[Wang \cite{wang1998quantum}]
A matrix $\mathcal{U} = (u_{ij})$ with entries in a unital $C^{*}$-algebra
is called a \emph{quantum permutation matrix} if, for all $i,j$,
    \begin{enumerate}
        \item $u_{ij} = u_{ij}^{*} = u_{ij}^{2}$,
        \item $\sum_{j} u_{ij} = 1$ and $\sum_{i} u_{ij} = 1$.
    \end{enumerate}
\end{definition}
These are ``quantum'' generalizations of permutation matrices.
%We are now ready to define 
Quantum automorphism groups
%They 
arise as a special case of quantum permutation groups,
which were introduced by Wang~\cite{wang1998quantum} as a
noncommutative generalization of classical permutation groups.
Formally, quantum permutation groups are realized within the
operator-algebraic framework of compact matrix quantum groups
developed by Woronowicz~\cite{woronowicz1987compact}.

\begin{definition}[Wang \cite{wang1998quantum}]
    The \emph{quantum permutation group} $\mathcal{Q}$ of order $q$ is defined
    by the universal $C^{*}$-algebra $C(\mathcal{Q})$ generated by the
    entries of a $q \times q$ quantum permutation matrix $\mathcal U$.
    The matrix $\mathcal{U}$ is called the \emph{fundamental representation}
    of $\mathcal{Q}$.
\end{definition}

Using quantum permutation group one can define the notion of two graphs $H$ and $H'$ being quantum
isomorphic. 
When $H=H'$ this is called the quantum automorphism group.
\begin{definition}[Wang \cite{wang1998quantum}, Banica \cite{banica2005quantum}]
    Let $M \in \Sym{q}{}{\R}$.
    The \emph{quantum automorphism group} $\qut(M)$ is the quantum
    permutation group generated by a quantum permutation matrix
    $\mathcal U$ satisfying $\mathcal{U} M = M \mathcal{U}$.
\end{definition}

%%% JYC: add a word about def'n of quantum isomorphism between graphs 

Man{\v{c}}inska and Roberson~\cite{manvcinska2020quantum} established 
a direct link of quantum isomorphism between graphs and planar graph homomorphism.
They proved that two graphs $H$ and $H'$ are quantum isomorphic if and only if 
they define the same function $Z_{H}(G) = Z_{H'}(G)$
from all {\it planar} input graphs $G$. 
We do not delve further into the
formal theory of quantum automorphism
groups here.
Instead, we summarize the consequences of their appearance in
planar graph homomorphism counting, as established in
\cite{cai2026dichotomy}.
The following theorem \cite[Theorem~36]{cai2026dichotomy}
%(also proven in \cite{cai2024planar})
%%% JYC as this is just a brief recap, no need to be too complete in this. especially to cite ourselves
establishes the connection between $\qut(M)$ and
the set of {\it all} planar edge gadgets $\PlEdge(M)$ that can be constructed from $M$.

\begin{theorem}\label{theorem: qutM-PlEdge}
    Let $M \in \Sym{q}{}{\mathbb{R}_{\geq 0}}$. Then,
    $\mathbf{K}(M)_{ii} = \mathbf{K}(M)_{jj}$ for every
    $\mathbf{K}(M) \in \PlEdge(M)$ if and only if
    $i$ and $j$ are in the same orbit of $\qut(M)$.
\end{theorem}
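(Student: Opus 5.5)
The plan is to translate both sides of \cref{theorem: qutM-PlEdge} into statements about one object: the space of \emph{$1$-labeled} planar signatures. Let $\mathcal{V} \subseteq \mathbb{R}^q$ be the set of vectors $\mathbf{v}_{\mathbf{F}}(M)$, where $\mathbf{v}_{\mathbf{F}}(M)_i$ is the partition function of a planar graph $\mathbf{F}$ with one distinguished vertex on the outer face, pinned to $i$.

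Two observations connect this to the statement.
\begin{itemize}
\item For any $\mathbf{K}(M)\in\PlEdge(M)$, identify $\ell_1$ with $\ell_2$. This keeps the graph planar because both labels lie on the outer face. The diagonal of $\mathbf{K}(M)$ is then exactly such a vector.
\item Conversely, any $\mathbf{v}_{\mathbf{F}}(M)$ is the diagonal of the planar edge gadget obtained by attaching $\mathbf{F}$ at $\ell_1$ and adding $\ell_2$ joined to $\ell_1$ by a gadget with diagonal signature. Alternatively, I would simply work with $\mathrm{diag}(\mathbf{K}(M))$ directly.
\end{itemize}
So the theorem reduces to: $v_i=v_j$ for all $\mathbf{v}\in\mathcal{V}$ iff $i,j$ lie in the same orbit of $\qut(M)$. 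Here orbits are the classes of the relation $u_{ij}\neq 0$, which is an equivalence relation for quantum permutation groups.

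For the direction ``same orbit $\Rightarrow$ equal diagonals'', I would use the Man\v{c}inska--Roberson principle that signatures of planar (bi)labeled graphs are intertwiners of $\qut(M)$. For a $1$-labeled vector this means $\mathcal{U}\mathbf{v}=\mathbf{v}$. I would verify this by induction on the planar operations that build every planar $1$-labeled graph: the vertex vector $\mathbf{1}$, applying $M$ along an edge, entrywise (Schur) products from gluing at the label, and contraction. Each operation uses only $\mathcal{U}M=M\mathcal{U}$ and the magic-unitary relations, since the Schur product is the multiplication map of $\mathbb{C}^q$, which $\mathcal{U}$ intertwines.

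Given $\sum_k u_{ik}v_k=v_i$, multiplying by $u_{ij}$ and using orthogonality of projections in row $i$ gives $u_{ij}v_j=u_{ij}v_i$. Hence $(v_i-v_j)u_{ij}=0$, so $u_{ij}\neq0$ forces $v_i=v_j$.

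For the converse, I would first show that $\mathrm{span}\,\mathcal{V}$ is a unital subalgebra of $(\mathbb{R}^q,\circ)$. It is unital because it contains $\mathbf{1}$, and it is closed under $\circ$ because gluing two $1$-labeled planar graphs at their label is planar. A unital subalgebra of $(\mathbb{R}^q,\circ)$ is spanned by the indicator vectors of the blocks of some partition $\Pi$ of $[q]$. Vertices $i,j$ in the same block of $\Pi$ are exactly those with $v_i=v_j$ for all $\mathbf{v}\in\mathcal{V}$.

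It remains to show $\Pi$ equals the orbit partition of $\qut(M)$, i.e.\ that $\mathrm{span}\,\mathcal{V}=\mathrm{Fix}(\mathcal{U})$. Here I would invoke Woronowicz's Tannaka--Krein duality. The planar bilabeled graph signatures (with multiple labels) form a concrete rigid monoidal $C^*$-category generated by $M$ and the Frobenius structure maps of $\mathbb{C}^q$. This category therefore corresponds to a quantum subgroup $\mathcal{Q}'\le S_q^+$ whose intertwiner spaces are exactly the spans of planar signatures. Its defining relations are precisely the magic-unitary relations together with commutation with $M$, so $\mathcal{Q}'=\qut(M)$. Consequently $\mathrm{Fix}(\mathcal{U})=\mathrm{span}\,\mathcal{V}$, and $\mathrm{Fix}(\mathcal{U})$ is spanned by orbit indicators.

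The main obstacle is this last step. The Man\v{c}inska--Roberson / Tannaka--Krein identification is stated for $0$--$1$ adjacency matrices, and I would need it for weighted nonnegative $M$. My first attempt would be to check that their proof uses $M$ only as a self-adjoint generator of the category, so that the weighted case goes through verbatim. Failing that, I would decompose $M=\sum_w w\,M_w$ into $0$--$1$ level matrices $M_w$. Each $M_w$ is obtainable from planar gadgets by interpolation on thickenings, as in \cref{lemma: thickeningLemma}, and $\qut(M)=\bigcap_w\qut(M_w)$. This reduces the weighted case to a colored-edge version of the known result.
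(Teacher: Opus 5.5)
The paper does not prove \cref{theorem: qutM-PlEdge}. It quotes it from \cite[Theorem~36]{cai2026dichotomy} as background for the quantum barrier, so there is no in-paper argument to compare against. Your outline is the standard route behind such results, and it is sound. It combines two ingredients:
\begin{itemize}
\item the Man\v{c}inska--Roberson/Tannaka--Krein identification $\mathrm{Hom}(1,\mathcal{U})=\mathrm{span}\,\mathcal{V}$;
\item the elementary fact that the fixed vectors of a quantum permutation group are exactly the vectors constant on its orbits. Your computation $(v_i-v_j)u_{ij}=0$ is one half of this; the identity $\mathcal{U}\mathbf{1}_O=\mathbf{1}_O$ for an orbit $O$ is the other.
\end{itemize}
Compared with the bare citation, your version makes clear that the identification is the only deep input and everything else is bookkeeping.

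Four steps need repair, none fatal.

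(1) Your second bullet fails as stated: in general there is no planar gadget with a diagonal signature. For $M=J$, every signature is a multiple of $J$. Instead, put one copy of $\mathbf{F}$ at $\ell_1$ and a disjoint copy at $\ell_2$. The signature $\mathbf{v}\mathbf{v}^{\tt T}$ is symmetric, hence in $\PlEdge(M)$. Its diagonal $\mathbf{v}\circ\mathbf{v}$ separates $i,j$ whenever $\mathbf{v}$ does, because $M\ge 0$ forces $\mathbf{v}\ge 0$.

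(2) Apart from the unspecified ``contraction'', the operations you list for the forward direction generate only rooted trees. These are the single vertex, $\mathbf{v}\mapsto M\mathbf{v}$, and gluing at the root. Already the rooted triangle, with vector $\mathrm{diag}(M^3)$, requires $2$-labeled intermediates together with the cup and multiplication maps. So this direction is not a separate elementary induction. It is the easy half of the Man\v{c}inska--Roberson theorem, resting on their decomposition of every planar bilabeled graph into basic generators, and you should cite it as such.

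(3) The weighted case is less of an obstacle than you fear. That decomposition lemma is purely combinatorial. In the Tannaka--Krein step, $M$ enters only as a real symmetric element of $\mathrm{End}(\mathcal{U})$, so the argument transfers verbatim. Your level-matrix fallback is correct in principle: $\qut(M)=\bigcap_w\qut(M_w)$ holds because $\mathrm{End}(\mathcal{U})$ contains $J$ and is closed under Schur products. But it still needs an edge-coloured version of the same theorem, which is no easier.

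(4) The subalgebra/partition step is unnecessary. Once $\mathrm{span}\,\mathcal{V}=\mathrm{Fix}(\mathcal{U})$, the indicator $\mathbf{1}_{O_i}$ lies in $\mathrm{span}\,\mathcal{V}$. Hence some $\mathbf{v}\in\mathcal{V}$ separates $i$ from every $j\notin O_i$.
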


In particular, if $\qut(M)$ is non-trivial, then there exist distinct
labels $i \neq j$ that lie in the same orbit of $\qut(M)$, and hence are
\emph{inseparable} by planar gadgets in the following concrete sense:
every planar edge gadget $\mathbf{K}(M)$ has equal diagonal entries at
those indices, $\mathbf{K}(M)_{ii} = \mathbf{K}(M)_{jj}$.
Therefore, any reduction strategy that relies on \emph{vertex separation}
(i.e., producing planar gadgets whose signatures have distinct diagonal
entries so as to isolate a smaller effective domain) cannot succeed in
general once $\qut(M)$ has nontrivial orbits; this is the ``quantum
barrier'' of \cite{cai2026dichotomy}.
%%% JYC: it will be good to make the claim that all previous proofs of
% planar GH dichotomy uses this "isolate a smaller effective domain" reduction.

This barrier can apply even when the classical automorphism group
$\aut(M)$ is trivial:
$\qut(M)$ can be strictly larger than $\aut(M)$, collapsing labels that
are classically distinguishable but remain indistinguishable to all
planar gadgets.
Moreover, \cite{cai2026dichotomy} shows
(via \cite{atserias2019quantum,slofstra2019set})
that determining whether $\qut(M)$ is trivial is \emph{undecidable}, ruling out
any general algorithmic test for whether planar vertex separation is
possible.

Crucially, the quantum barrier only rules out reductions
from smaller domain problems.
It does not imply that \#P-hardness is unattainable when
$\qut(M)$ is nontrivial.
In this paper, we specifically demonstrate this fact by proving \#P-hardness for 
circulant matrices.
Circulant
matrices have all identical diagonal entries,
and their quantum automorphism group is such that
\emph{all} elements lie on the \emph{same orbit}.
Consequently, given any circulant matrix $C$,
$\mathbf{K}(C)_{ii} = \mathbf{K}(C)_{jj}$ for all 
planar edge gadgets $\mathbf{K}$.
So, vertex separation techniques provably
cannot be used to find reductions from any
smaller-sized \#P-hard matrices.
Nevertheless, our circulant hardness result
(see \cref{thm: circulant prime order}) shows that planar edge
gadgets can still be used profitably when combined with additional
analytic and spectral arguments, 
to find reductions from same-sized
\#P-hard matrices.

While this \#P-hardness proof for circulant matrices
serves as a \emph{proof of concept} that
planar edge gadgets can establish \#P-hardness
beyond the quantum barrier,
it does not provide a \emph{general strategy}
for proving \#P-hardness for
matrices with non-trivial quantum automorphism groups.
The main challenge we face is that given any
matrix $N$ for which $\qut(N)$ is \emph{not} maximal,
there exist matrices $M$, such that
no planar edge gadget can be used to prove the reduction:
$\PlGH(N) \leq_{p}^{T} \PlGH(M)$.
In other words, if $\qut(N)$ is not maximal,
then $N$ \emph{cannot} serve as a universal source
of \#P-hardness for all matrices $M$.
Let us now consider the
maximal quantum automorphism group $S_{q}^{+}$.
Proposition~38 of \cite{cai2026dichotomy} identifies this regime
exactly:
$$\qut(M) = S_{q}^{+} \quad\Longleftrightarrow\quad
M \in \operatorname{span}_{\mathbb R}\{I, J\},$$
where $I$ is the identity matrix and $J$ is the all-1 matrix.
This is precisely the set of Potts-model matrices,
for which $\PlGH(M)$ already admits a dichotomy
(see \cite{vertigan2005computational}).
This sharp dichotomy motivates
a \emph{sufficient condition} for proving
the \#P-hardness of matrices
that operates \emph{within}
quantum-symmetry constraints rather than attempting to break them.
In this spirit, \cref{thm: only look at lattice of eigenvalues}
yields the following sufficient condition for proving \#P-hardness.

\begin{theorem}\label{thm: sufficientTest}
    Let $M \in \Sym{q}{pd}{\Rp}$. 
    %Then,
    Suppose  
    there exists some
    planar edge gadget $\mathbf{K}$ such that
    $\mathbf{K}(M)$ has eigenvalues
    $\lambda_{1} > \lambda_2 \ge\dots \ge \lambda_{q}$ that satisfy:
    $\mathfrak{L}(\lambda_1, \lambda_2, \dots,\lambda_q)
    \subseteq \LPotts{q}$. Then $\PlGH(M)$ is \#P-hard.
\end{theorem}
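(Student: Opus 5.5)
The plan is to reduce \cref{thm: sufficientTest} to \cref{thm: only look at lattice of eigenvalues} applied to the signature matrix $N:=\mathbf{K}(M)$, and then pull the hardness back through the gadget reduction. The first step is the gadget reduction itself: by the discussion following \cref{eq:signature_matrix}, for every planar edge gadget with symmetric signature we have $\PlGH(\mathbf{K}(M)) \leq^{T}_{p} \PlGH(M)$. The hypothesis lists real eigenvalues $\lambda_1>\lambda_2\ge\dots\ge\lambda_q$, so $\mathbf{K}(M)$ is implicitly symmetric; I would either read this into the hypothesis (as the definition of $\PlEdge(M)$ does) or note that a gadget with labels on the outer face can be symmetrized. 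Hence it suffices to show that $\PlGH(N)$ is \#P-hard.

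The second step is to verify the hypotheses of \cref{thm: only look at lattice of eigenvalues} for $N$. Since $M$ has strictly positive entries, every summand in \cref{equation: signature} is positive, so $N\in\Sym{q}{}{\Rp}$. By Perron--Frobenius, $\lambda_1$ is simple and $\lambda_1\ge|\lambda_i|$. The theorem, however, needs $N$ to have full rank and $\lambda_1>|\lambda_i|$ strictly. Strict dominance holds automatically for entrywise positive matrices, because the Perron root strictly dominates in modulus. Full rank is the delicate point. If some $\lambda_i=0$, then $\mathfrak{L}$ is not even defined, since \cref{definition: latticeSet} requires nonzero entries. I would therefore read the hypothesis $\mathfrak{L}(\lambda_1,\dots,\lambda_q)\subseteq\LPotts{q}$ as implicitly asserting that all $\lambda_i\neq0$, which gives $N\in\Sym{q}{F}{\Rp}$.

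If one wants to avoid relying on that reading, the alternative is to compose with stretching, but this route needs care. In general $\mathbf{K}(M)$ need not be positive definite even when $M$ is (for example, thickening gadgets can break this), so \cref{lemma: stretchingLemma} does not directly apply. I will therefore rely on the nonzero-eigenvalue reading of the definition.

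With these checks done, \cref{thm: only look at lattice of eigenvalues} gives that $\PlGH(N)$ is \#P-hard, and the gadget reduction transfers this to $\PlGH(M)$. I expect the main obstacle to be the second step, namely confirming that $\mathbf{K}(M)$ lies in $\Sym{q}{F}{\Rp}$ with a strictly dominant top eigenvalue. Positivity and dominance follow from the positivity of $M$, and full rank follows from $\mathfrak{L}$ being well defined. The rest of the argument is a direct chaining of earlier results.
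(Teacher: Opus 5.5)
Your proposal is correct and follows the paper's route exactly: the paper obtains this theorem directly from \cref{thm: only look at lattice of eigenvalues} applied to $\mathbf{K}(M)\in\PlEdge(M)$, combined with $\PlGH(\mathbf{K}(M))\le_p^T\PlGH(M)$, and your hypothesis checks (symmetry, entrywise positivity, strict Perron dominance, full rank because $\mathfrak{L}$ requires nonzero entries) are the right ones. You should also carry over $q\ge 3$ explicitly, since \cref{thm: only look at lattice of eigenvalues} needs it and the statement leaves it implicit; for $q=2$ the claim is false, because the single-edge gadget on $\left[\begin{smallmatrix}2&1\\1&2\end{smallmatrix}\right]$ gives $\mathfrak{L}(3,1)=\{\mathbf{0}\}=\LPotts{2}$, yet this Ising instance is FKT-tractable. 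Separately, your unused aside is inaccurate: integer thickening preserves positive definiteness by the Schur product theorem.
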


The advantage of \cref{thm: sufficientTest} is twofold.
First, it yields a simple, checkable criterion for establishing
\#P-hardness.
Second, unlike vertex-separation arguments,
it is \emph{not} restricted to matrices with
trivial quantum automorphism groups.
Instead, the condition operates directly
on the lattice of eigenvalues of $M$,
relating it to the eigenvalue lattice of the Potts model.
Consequently, the criterion remains applicable
even when $\qut(M)$ is non-trivial,
including the case of maximal quantum symmetry,
separation-based techniques provably fail by
\cref{theorem: qutM-PlEdge}.

It is important to note that while \cref{thm: sufficientTest}
is a \emph{sufficient} condition to prove \#P-hardness,
it is \emph{not} a necessary condition.
For a concrete example, assume $M = A_{1} \otimes A_{2}$, where
$A_{1} = \left[\begin{smallmatrix}
    a_{1} & b_{1}\\
    b_{1} & c_{1}\\
\end{smallmatrix}\right]$, and $ A_{2} = \left[\begin{smallmatrix}
    a_{2} & b_{2}\\
    b_{2} & c_{2}\\
\end{smallmatrix}\right]$.
If we let $\lambda_{1}, \lambda_{2}$ be the eigenvalues of $A_{1}$,
and $\mu_{1}, \mu_{2}$ be the eigenvalues of $A_{2}$,
it follows that $\lambda_{1}\mu_{1}, \lambda_{1}\mu_{2}, 
\lambda_{2}\mu_{1},  \lambda_{2}\mu_{2}$ are the
eigenvalues of $M$.
It is easily seen that $(1, -1, -1, 1) \in \mathfrak{L}
(\lambda_{1}\mu_{1}, \lambda_{1}\mu_{2}, 
\lambda_{2}\mu_{1},  \lambda_{2}\mu_{2})$.
So, the lattice of eigenvalues of $M$ is not a subset of
$\LPotts{q}$.
More worryingly, it can be seen that for any planar edge gadget
$\mathbf{K}$, $\mathbf{K}(M) = 
\mathbf{K}(A_{1}) \otimes \mathbf{K}(A_{2})$.
This ensures that for {\it any planar} edge gadget $\mathbf{K}$,
the lattice of eigenvalues of $\mathbf{K}(M)$ cannot be
a subset of $\LPotts{q}$.
In other words,  when $M$ is a tensor product,
\cref{thm: sufficientTest} cannot be used to
establish \#P-hardness.
On the other hand, we already have a known dichotomy
result for $4 \times 4$ matrices (see \cite{cai2024polynomial}),
that includes the case of tensor products,
as seen above.
So, it is in fact known that when $a_{1} \neq c_{1}$
or $a_{2} \neq c_{2}$, $\PlGH(M)$ is in fact
\#P-hard, even though $M$ does not meet
the requirements of \cref{thm: sufficientTest}.
This proof for hardness in that case relies a very delicate analysis coupled with
vertex separation techniques, thus raising a very natural question:
are there any matrices for which neither vertex separation
techniques, nor \cref{thm: sufficientTest}
can be used to prove \#P-hardnes?

To answer this question, we focus on matrices of the
form $M = A_1 \otimes \cdots \otimes A_s$ with
$2\times 2$ factors.
The multiplicative relations among eigenvalues are
so structured that the lattice condition
in \cref{thm: sufficientTest}
can fail to certify hardness
even when the problem is in fact \#P-hard.
At the same time, the tractable subfamily of problems
are exactly the ones where the
quantum barrier is most pronounced:
when every factor has equal diagonal
entries, the diagonal of $M$ is completely uniform
and nontrivial quantum
symmetries are present,
making vertex separation impossible.
In this setting, we prove a dichotomy,
which shows that, in this important family
where the spectral
sufficiency condition can be too rigid,
the remaining boundary is
nevertheless clean:
tractability coincides precisely with the
quantum symmetries, 
%where vertex separation techniques lead to a proof of \#P-hardness.
and our analytic approach and analysis on the ``reduced lattice" lead to a proof of \#P-hardness.
\section{Model of Computation}\label{appendix: modelComputation}

The Turing machine model is naturally suited to the study of computation over discrete structures such as integers or graphs. 
When  $M \in \Sym{q}{}{\mathbb{R}}$, for  $\PlGH(M)$ 
one usually restricts $M$ to be a matrix
with only algebraic numbers. This is strictly for the consideration
of the model of computation, even though allowing all
real-valued matrices would be more natural. 
 
 There is a formal (albeit nonconstructive) method to treat  $\PlGH(M)$ 
 for arbitrary real-valued matrices $M$ and yet stay strictly
 within the Turing machine model in terms of  bit-complexity.
In this paper, because our proof depends heavily on analytic
argument with continuous functions on 
%$\mathbb{R}^{d}$, 
$\mathbb{R}$, 
this
logical formal view becomes necessary.

To begin with, we recall a theorem from field theory:
Every extension field ${\bf F}$ over  ${\mathbb Q}$
by a finite set of real numbers is  a finite algebraic extension ${\bf E}'$
of a certain  purely transcendental   extension field ${\bf E}$
over ${\mathbb Q}$,
which has the form ${\bf E} = {\mathbb Q}(X_1,
\ldots, X_m)$ where $m \ge 0$ and $X_1,
\ldots, X_m$ are algebraically independent~\cite{jacobson1985basic} (Theorem 8.35, p.~512).
${\bf F}$ is said to have
a finite transcendence degree $m$ over ${\mathbb Q}$.
It is known that $m$ is uniquely defined for ${\bf F}$.
Since 
$\rm{char}~{\mathbb Q} =0$,
the finite algebraic extension ${\bf E}'$ over ${\bf E}$
is actually simple, ${\bf E}' = {\bf E}(\beta)$ for some $\beta$,
and it is specified by a
minimal polynomial in ${\bf E}[Y]$, the polynomial ring over ${\bf E}$ with one
indeterminant $Y$.
Now given a real matrix $M$, let ${\bf F} = {\mathbb Q}(M)$ 
be the extension field by adjoining the entries of $M$.
We  consider $M$ is fixed for the problem $\PlGH(M)$,
and thus we may assume (nonconstructively) that the form
${\bf F} = {\bf E}(\beta)$
 and ${\bf E} = {\mathbb Q}(X_1,
\ldots, X_m)$ are given. (This means, among other things,
that the minimal polynomial of $\beta$ over ${\bf E}$ is given,
and all arithmetic operations can be performed on ${\bf F}$.)

Now, the computational problem $\PlGH(M)$ is the following:
Given a planar 
$G$, compute $Z_{M}(G)$ as an element in ${\bf F}$
(which is expressed as a polynomial in $\beta$ with coefficients in ${\bf E}$).
More concretely, we can show that this is equivalent to the following problem $\COUNT(M)$:
The input  is a pair $(G,x)$,
  where $G=(V,E)$ is a planar graph and $x\in {\bf F}$.
The output is\vspace{-0.1cm}
$$
\text{\#}_{M}(G,x)= \Big|\big\{\sigma:V\rightarrow
  [q]\hspace{0.08cm}: \hspace{0.08cm} \prod_{(u, v) \in E} m_{\sigma(u), \sigma(v)}=x\big\}\Big|,\label{full_COUNTM}
  $$
a non-negative integer. Note that, in this definition,
we are basically combining terms with the same 
product value in the definition of $Z_{M}(G)$.

Let $n=|E|$.
Define  $X$ to be the  set of all possible product values
appearing in $Z_{M}(G)$:
\begin{equation}\label{full_definitionreuse}
X=\left\{\prod_{i,j\in [q]}m_{ij}^{k_{ij}}\hspace{0.08cm}\Big|\hspace{0.1cm}
  \text{integers $k_{ij}\ge 0$ and $\sum_{i,j\in [q]}k_{ij}=n$}
\right\}.
\end{equation}
There are $\binom{n+q^2-1}{q^2-1} = n^{O(1)}$ many integer sequences
$(k_{i,j})$ such that  $k_{i,j}\ge 0$ and $\sum_{i,j\in [q]}k_{i,j}=n$.
$X$
  is defined as a set, not a multi-set.
After removing repeated
elements the cardinality $|X|$ is also polynomial in $n$.
 For fixed and given ${\bf F}$
 the elements in $X$ can be enumerated in polynomial time in $n$.
 (It is important that ${\bf F}$ and $q$ are all treated as fixed constants.)
It then follows from the definition that
  $\text{\#}_{M}(G,x)= 0$ for any $x\notin X$.
This gives us the following relation:
\[Z_M(G)= \sum_{x\in X} x \cdot \text{\#}_{M}(G,x),\ \ \ \text{for any 
  graph $G$,}\]
and thus, $\PlGH(M)\le_p^T \COUNT(M).$

For the other direction,
we construct, for any $p\in [|X|]$ (recall that $|X|$ is polynomial in $n$), the graph $\mathbf{T_{p}}G$.
Then,
$$
Z_M(\mathbf{T_p}G) =
Z_{\mathbf{T_{p}}(M)}(G) = \sum_{x\in X} x^p \cdot \text{\#}_{M}(G,x),\ \ \ \text{for any 
  graph $G$.}
$$
This is a Vandermonde system; it has full rank since
elements in $X$ are distinct by definition. So by
querying $\PlGH(M)$ for the
  values of $Z_M(\mathbf{T_p}G)$,
  we can solve it in polynomial time
  and get $\text{\#}_{M}(G,x)$ for\vspace{0.0015cm} every non-zero $x\in X$.
To obtain $\text{\#}_{M}(G,0)$ (if $0\in X$), we note that
$$
\sum_{x\in X} \text{\#}_{M}(G,x) =q^{|V|}.
$$
This gives us a polynomial-time reduction and thus, $\COUNT(M)\le_p^T \PlGH(M)$.
We have proved
\begin{lemma}\label{lemma: full_count}
For any fixed  $M \in \Sym{q}{}{\mathbb{R}}$,  
   $\PlGH(M)\equiv_p^T \COUNT(M)$.
\end{lemma}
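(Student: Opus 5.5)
The plan is to prove the two reductions $\PlGH(M)\le_p^T \COUNT(M)$ and $\COUNT(M)\le_p^T \PlGH(M)$ separately. Throughout, $M$, and hence $q$ and the field ${\bf F}=\mathbb{Q}(M)={\bf E}(\beta)$, are fixed constants given in explicit form, so arithmetic in ${\bf F}$ costs polynomial time in the input size.

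\textbf{The easy direction, $\PlGH(M)\le_p^T \COUNT(M)$.} Let $n=|E|$ and define the set $X$ of all products $\prod_{i,j} m_{ij}^{k_{ij}}$ with nonnegative integers $k_{ij}$ summing to $n$.
\begin{enumerate}
\item There are $\binom{n+q^2-1}{q^2-1}=n^{O(1)}$ exponent vectors, so $X$ can be enumerated in polynomial time.
\item Duplicates are removed by exact comparison in ${\bf F}$.
\item Every assignment $\sigma$ contributes a product lying in $X$. Grouping assignments by the value of this product gives
\[
Z_M(G)=\sum_{x\in X} x\cdot \#_M(G,x).
\]
\item Querying the $\COUNT(M)$ oracle for each $x\in X$ and summing yields $Z_M(G)$.
\end{enumerate}

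\textbf{The harder direction, $\COUNT(M)\le_p^T \PlGH(M)$.} Here I would use thickening.
\begin{enumerate}
\item Since $\mathbf{T_p}(M)_{ij}=m_{ij}^p$, each assignment's weight is raised to the $p$-th power. Hence, for $p=1,\dots,|X|$,
\[
Z_M(\mathbf{T_p}G)=\sum_{x\in X} x^p\, \#_M(G,x).
\]
\item Each $\mathbf{T_p}G$ is planar and has polynomial size, so all $|X|$ values are available from the $\PlGH(M)$ oracle.
\item Restrict the system to the nonzero elements of $X$. The coefficient matrix is $(x^p)$, which equals a Vandermonde matrix times the diagonal matrix $\mathrm{diag}(x)$. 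It is invertible because the $x$'s are distinct and nonzero.
\item Solve the system by Gaussian elimination over ${\bf F}$ to obtain $\#_M(G,x)$ for every nonzero $x\in X$.
\item Recover $\#_M(G,0)$ from the identity $\sum_{x\in X}\#_M(G,x)=q^{|V|}$.
\item For an input $x\notin X$ the answer is $0$, and membership in $X$ is decided by the enumeration above.
\end{enumerate}

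\textbf{Main obstacle.} The step needing the most care is the complexity of the linear algebra over ${\bf F}$. Gaussian elimination uses polynomially many field operations, but the sizes of the intermediate elements must be controlled. Elements of ${\bf F}$ are polynomials in $\beta$ with coefficients that are rational functions in $X_1,\dots,X_m$.

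I would handle this through Cramer's rule. All entries of the system are products of $m_{ij}$ of total degree at most $n|X|=n^{O(1)}$, so every determinant is a polynomial of polynomially bounded degree in the fixed generators. After reducing modulo the minimal polynomial of $\beta$, the sizes remain polynomial. Alternatively, one can use fraction-free (Bareiss) elimination over the polynomial ring. In either case the overall reduction runs in polynomial time.
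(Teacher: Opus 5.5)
Your proposal is correct and follows the paper's proof essentially step for step. One direction enumerates the polynomially many candidate products $X$ and uses $Z_M(G)=\sum_{x\in X}x\cdot\#_M(G,x)$; the other thickens to $\mathbf{T_p}G$ for $p\in[|X|]$, solves the resulting Vandermonde system for the nonzero $x$, and recovers $\#_M(G,0)$ from $\sum_{x}\#_M(G,x)=q^{|V|}$. Your extra remarks (factoring the coefficient matrix as a Vandermonde matrix times $\mathrm{diag}(x)$, and bounding the sizes of field elements via Cramer's rule) only spell out what the paper asserts when it says the system can be solved in polynomial time.
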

Thus, $\PlGH(M)$ can be identified with the 
problem of producing those polynomially many integer coefficients
in the canonical expression for $Z_M(G)$ as a sum
of (distinct) terms from $X$.

This  formalistic view has 
the advantage that we can treat the complexity 
of  $\PlGH(M)$ for  general $M$, and not restricted to algebraic numbers. 
 Thus, numbers such as $e$ or $\pi$
need not be excluded.
More importantly, in this paper this generality is essential, due to the proof techniques that we employ.
% Furthermore, once freed from this restriction we in fact explicitly use
% transcendental numbers as a tool in our proof (see \cref{lemma: analyticGadgets}).
In short, in this paper, treating the complexity of  $\PlGH(M)$ for  general real $M$
is not a \emph{bug} but a \emph{feature}.

However, we note that this treatment 
has the following subtlety.  
For the computational problem $\PlGH(M)$
the formalistic view demands that
${\bf F}$ be specified in the form ${\bf F} = {\bf E}(\beta)$.
Such a form exists, and its specification is of
constant size when  measured
in terms of the size  of the input graph $G$. 
However, 
in reality many basic questions for transcendental numbers
are unknown.
For example, it is still unknown whether $e + \pi$ or $e \pi$ are
rational, algebraic irrational or transcendental,
and it is open whether ${\mathbb Q}(e, \pi)$ has  transcendence degree
2 (or 1) over  ${\mathbb Q}$, i.e., whether $e$ and $\pi$ are algebraically
independent.
The formalistic view here non-constructively
assumes this information is given for ${\bf F}$.
A  polynomial time reduction $\Pi_1  \le_p^T \Pi_2$  from one problem
to another  in this setting merely
implies that the \emph{existence} of a polynomial time algorithm
for $\Pi_2$ logically implies the 
\emph{existence} of  a  polynomial time algorithm
for $\Pi_1$. We do not actually obtain such
an algorithm constructively. 

This logical detour not withstanding, if a reader
is  interested only in the complexity of $\PlGH(M)$ 
for $M \in \Sym{q}{}{\mathbb{Z}_{\geq 0}}$, then the
complexity dichotomy proved in this paper holds
according to the standard definition of $\PlGH(M)$ 
for integral $M$ in terms of
the model of computation; the fact that this is proved 
in a broader setting for all real matrices $M$ 
is irrelevant. This  is akin to the situation
in analytic number theory, where one might be 
interested in a question strictly about the ordinary
integers, but the theorems are proved
in a broader setting of analysis.
% \section{Planar Edge Gadgets}\label{appendix: edgeGadgetsAppendix}
\section{Planar Edge Gadgets}

In this section, we shall prove all omitted proofs
from \cref{sec: preliminaries}.
First, we prove \cref{lemma: MequivalentCM}.

\MequivalentCM*
\begin{proof}
    Note that given any $G = (V, E),$
    $$Z_{cM}(G) = \sum_{\sigma: V \rightarrow [q]}
    \prod_{(u, v) \in E}(cM)_{\sigma(u)\sigma(v)}
    = c^{|E|} \sum_{\sigma: V \rightarrow [q]}
    \prod_{(u, v) \in E}cM_{\sigma(u)\sigma(v)}
    = c^{|E|}Z_{M}(G).$$
    Therefore, oracle access to one of $\PlGH(M)$ or $\PlGH(cM)$
    gives the other.
   % allows us to solve $\PlGH(cM)$ (resp. $\PlGH(M)$).
   % Therefore, oracle access to $\PlGH(M)$ (or $\PlGH(cM)$)
   % allows us to solve $\PlGH(cM)$ (resp. $\PlGH(M)$).
    %%% JYC this does not need to be stated for both direction,
    %%% as one just take c^{-1}
\end{proof}

\subsection{Thickening Gadgets}\label{appendix: thickeningAppendix}

In this section, we shall explore the thickening gadgets,
and gadget interpolation more rigorously.
For $m, n \ge 1$, let
$$\mathcal{P}_{m}(n) = \left\{\mathbf{x} = (x_{i})_{i \in [m]}
\in (\mathbb{Z}_{\geq 0})^{m} \hspace{0.08cm}\Big|\hspace{0.1cm}
\sum_{i \in [m]}x_{i} = n\right\}.$$

We note that
given any planar graph $G = (V, E)$,
\begin{equation}\label{equation: thickeningEqn}
    Z_{M}(\mathbf{T_{n}}G) = Z_{\mathbf{T_{n}}(M)}(G) = 
    \sum_{x \in X(G)}x^{n} \cdot \text{\#}_{M}(G, x)
\end{equation}
where
\begin{equation}\label{equation: thickeningX(G)}
    X(G) = \left\{\prod_{i, j\in [q]}
    M_{ij}^{k_{ij}}\hspace{0.08cm}\Big|\hspace{0.1cm}
    \mathbf{k} = (k_{ij})_{i, j \in [q]} \in
    \mathcal{P}_{q^{2}}(|E|)\right\},
\end{equation}
and
\begin{equation}\label{equation: thickeningNumMappings}
    \text{\#}_{M}(G,x)= \Big|\big\{\sigma:V\rightarrow [q]\hspace{0.08cm}:
    \hspace{0.08cm} \prod_{(u, v) \in E} M_{\sigma(u)\sigma(v)}=x\big\}\Big|.
\end{equation}

Note that given any $x \in X(G)$,
$\text{\#}_{M}(G, x)$ does not depend on $n$,
but depends only on the entries of the matrix $M$.
Generating Sets, as defined in \cref{definition: generatingSet}
let us deal with this dependence.
First, we shall prove that generating sets always exist.

\generatingSet*

\begin{lemma}\label{lemma: generatingSet}
    Every finite set $\mathcal{A} \subset \mathbb{R}_{\neq 0}$
    of non-zero real numbers has
    a generating set.
\end{lemma}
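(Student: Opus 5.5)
The plan is to pass to logarithms and use a basis of the $\mathbb{Q}$-vector space they span. Let $\mathcal{A}=\{a_1,\dots,a_k\}$ and put $\ell_i=\ln|a_i|\in\mathbb{R}$. The sign will be handled by $e_0$: set $e_0=1$ exactly when $a_i<0$. This choice is forced, since $g_t>1$ makes $g_1^{e_1}\cdots g_d^{e_d}>0$. What remains is to write each $|a_i|$ uniquely as a product $g_1^{e_1}\cdots g_d^{e_d}$ with every $g_t>1$ and integer exponents.

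Let $W=\mathrm{span}_{\mathbb{Q}}\{\ell_1,\dots,\ell_k\}$, and let $\Lambda=\mathbb{Z}\ell_1+\cdots+\mathbb{Z}\ell_k\subset\mathbb{R}$. Then $\Lambda$ is a finitely generated subgroup of $(\mathbb{R},+)$. It is torsion-free, so it is a free abelian group of some finite rank $d\ge 0$. (We can take $d=\dim_{\mathbb{Q}}W$, since $\Lambda$ spans $W$ over $\mathbb{Q}$ and is finitely generated.) Choose a $\mathbb{Z}$-basis $h_1,\dots,h_d$ of $\Lambda$. Each $h_t$ is nonzero, so replacing $h_t$ by $-h_t$ where necessary keeps a $\mathbb{Z}$-basis and arranges $h_t>0$. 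Set $g_t=e^{h_t}>1$.

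Existence: every $\ell_i$ lies in $\Lambda$, so $\ell_i=\sum_t e_t h_t$ with $e_t\in\mathbb{Z}$, which gives $|a_i|=\prod_t g_t^{e_t}$. Uniqueness: suppose $\prod_t g_t^{e_t}=\prod_t g_t^{e'_t}$. Taking logarithms gives $\sum_t (e_t-e'_t)h_t=0$. A $\mathbb{Z}$-basis of a free abelian group is $\mathbb{Z}$-linearly independent, so $e_t=e'_t$ for all $t$. The sign bit $e_0$ is unique as noted above. The degenerate case where every $|a_i|=1$ gives $\Lambda=0$, $d=0$, and the empty generating set, which is allowed since $d\ge 0$.

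The only step with real content is that a finitely generated subgroup of $\mathbb{R}$ is free. This follows from the structure theorem for finitely generated abelian groups, because $\mathbb{R}$ is torsion-free. I expect no real obstacle beyond invoking this. The one point needing care is that uniqueness requires a $\mathbb{Z}$-basis of $\Lambda$, not merely a $\mathbb{Q}$-basis of $W$. If we took $\ell_1,\dots,\ell_d$ themselves as a $\mathbb{Q}$-basis, some $\ell_i$ could need rational coefficients, and then the exponents would not be integers. Working with $\Lambda$ avoids this.
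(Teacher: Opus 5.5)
Your proposal is correct and is essentially the paper's proof: the paper takes the multiplicative subgroup of $(\mathbb{R}_{>0},\cdot)$ generated by the values $|a|$, notes that it is finitely generated and torsion-free and hence isomorphic to $\mathbb{Z}^d$, and flips basis elements so that $g_t>1$. This is exactly your argument transported through the isomorphism $\ln\colon(\mathbb{R}_{>0},\cdot)\to(\mathbb{R},+)$, and your explicit handling of the sign bit $e_0$ and of uniqueness via $\mathbb{Z}$-linear independence fills in details the paper leaves implicit.
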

\begin{proof}
    Consider the multiplicative group $\mathcal{G}$ 
    generated by 
    the positive real numbers 
    $\{|a| : a \in \mathcal{A}\}$.
    It is  a subgroup of the multiplicative group
    $(\mathbb{R}_{> 0}, \cdot)$.
    Since  $\mathcal{A}$ is finite, and 
    $(\mathbb{R}_{> 0}, \cdot)$ is torsion-free,
    the group $\mathcal{G}$ is  a finitely generated free Abelian group, 
    and thus isomorphic to 
    $\mathbb{Z}^d$ for some $d \ge 0$.
    Let $f$ be this isomorphism from $\mathbb{Z}^{d}$ to
    the multiplicative group.
    By flipping $\pm 1$ in $\mathbb{Z}$ we may assume that this isomorphism maps the basis
    elements of $\mathbb{Z}^{d}$ to some elements
    $\{g_{t}\}_{t \in [d]}$ such that $g_{t} > 1$
    for all $t \in [d]$.
    The set $\{g_{t}\}_{t \in [d]}$ is
    a generating set.
\end{proof}

We now use \cref{lemma: generatingSet} to find a generating set
for the entries $(M_{ij})_{i, j \in [q]}$ of any 
matrix $M \in \Sym{q}{}{\mathbb{R}_{\neq 0}}$.
Note that this generating set need not be unique. 
However, with respect to a fixed generating set, 
for any $M_{ij}$, there are unique integers 
$e_{ij0} \in \{0, 1\}$, and
$e_{ij1}, \dots, e_{ijd} \in \mathbb{Z}$,
such that
\begin{equation}\label{equation: generatingM}
    M_{ij} = (-1)^{e_{ij0}} \cdot g_{1}^{e_{ij1}} \cdots g_{d}^{e_{ijd}}.
\end{equation}

\begin{remark*}
    It should be noted that since $M$ is symmetric, $M_{ij} = M_{ji}$
    for all $i, j \in [q]$.
    The uniqueness of the integers $e_{ijt}$
    in \cref{equation: generatingM} then implies that
    for all $i, j \in [q]$,
    $e_{ijt} = e_{jit}$ for all $t \in [d]$.
\end{remark*}

We can now prove the following lemma.

\begin{lemma}\label{lemma: thickeningGeneralLemma}
    Let $M \in \Sym{q}{}{\mathbb{R}_{\neq 0}}$
    with its entries generated by some $\{g_t\}_{t \in [d]}$,
    such that
    $$M_{ij} = (-1)^{e_{ij0}} \cdot
    g_{1}^{e_{ij1}} \cdots g_{d}^{e_{ijd}}.$$
    Furthermore, assume that $e_{ijt} \geq 0$ for all
    $i, j \in [q]$, and $t \in [d]$.
    Then, for any
    $N \in \Sym{q}{}{\mathbb{R}}$
    of the form
    %$$N_{ij} = (-1)^{e_{ij0}} \cdot z_{1}^{e_{ij1}} \cdots z_{d}^{e_{ijd}},$$
    %%%%ZZT: To aviod 0^0 or negative number^0, it is better to define N_{ij} as
    $$
    N_{ij}=(-1)^{e_{ij0}} \prod_{\substack{1\le t\le d\\e_{ijt>0}}}z_t^{e_{ijt}},
    $$
    with parameters
    $\mathbf{z} = (z_{1}, \dots, z_{d}) \in \mathbb{R}^{d}$,
    we have $\PlGH(N) \leq^{T}_{p} \PlGH(M)$.
\end{lemma}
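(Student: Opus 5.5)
The plan is to use the thickening identity \cref{equation: thickeningEqn}, written in the coordinates given by the generating set, and then solve a Vandermonde system.

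Fix a planar $G=(V,E)$ with $n=|E|$. For each $\sigma:V\to[q]$ the product $\prod_{(u,v)\in E} M_{\sigma(u)\sigma(v)}$ equals $(-1)^{k_0}g_1^{k_1}\cdots g_d^{k_d}$. Here $k_t=\sum_{(u,v)\in E} e_{\sigma(u)\sigma(v)t}\ge 0$ and $k_0\equiv\sum e_{\sigma(u)\sigma(v)0} \pmod 2$. By the uniqueness clause of \cref{definition: generatingSet}, and because the $g_t$ are multiplicatively independent, distinct exponent vectors $\mathbf{k}=(k_0 \bmod 2,k_1,\dots,k_d)$ give distinct real values $x$. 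So $\text{\#}_M(G,x)$ can be indexed by $\mathbf{k}$; call it $c_{\mathbf{k}}$. Each $k_t$ lies in $\{0,\dots,n\cdot\max e\}$, so the number of possible $\mathbf{k}$ is $O(n^{d})$, which is polynomial since $q,d$ and the $e$'s are constants. In the same way,
\[
Z_N(G)=\sum_{\mathbf{k}} c_{\mathbf{k}}\,(-1)^{k_0}\prod_{t:\,k_t>0} z_t^{k_t}.
\]
This is consistent with the convention in the definition of $N$: a factor $z_t$ contributes only when $e_{ijt}>0$, and $k_t=0$ exactly when every edge has $e_{\sigma(u)\sigma(v)t}=0$. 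So it suffices to compute every $c_{\mathbf{k}}$ in polynomial time using the oracle for $\PlGH(M)$.

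For $p=1,\dots,|X(G)|$, query $Z_M(\mathbf{T_p}G)=\sum_{x\in X(G)}x^p\,\text{\#}_M(G,x)$. The nodes $x\in X(G)$ are distinct and nonzero, since all $M_{ij}\neq0$. The resulting system is a nonsingular Vandermonde system in the unknowns $\text{\#}_M(G,x)$, which we solve in polynomial time. This recovers every $c_{\mathbf{k}}$ with $\mathbf{k}$ the exponent vector of $x$. We then evaluate the displayed polynomial at the given $\mathbf{z}$, which takes polynomially many arithmetic operations. Each $\mathbf{T_p}G$ is planar and has size $p\cdot n$, which is polynomial.

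The main obstacle is the injectivity step, namely that distinct exponent vectors $\mathbf{k}$ give distinct products $x$. Without it, different monomials in $\mathbf{z}$ would be merged into a single count, and $Z_N$ could not be read off. Injectivity is exactly what the uniqueness in the generating-set definition supplies. We also need $k_0$ only modulo $2$, and this is harmless because the sign in $N$ uses the same $e_{ij0}$. A second point to handle is the sign and $0^0$ conventions when some $z_t=0$ or $z_t<0$. The restricted product in the definition of $N$ makes the monomial identity hold literally, so no case split is needed.
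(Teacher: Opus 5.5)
Your proposal is correct and follows essentially the same route as the paper. Thickening plus a nonsingular Vandermonde system recovers every count $\text{\#}_{M}(G,x)$ for $x\in X(G)$. The unique exponent representation coming from the generating set then lets you replace each class by its signed $\mathbf{z}$-monomial. The paper packages this last step as the map $\widehat{y}:X(G)\to\mathbb{R}$ with $\widehat{y}\bigl(\prod M_{ij}^{k_{ij}}\bigr)=\prod N_{ij}^{k_{ij}}$, which is exactly your indexing by exponent vectors, and your handling of the restricted product via $e_{ijt}\ge 0$ matches the paper's. You invoke multiplicative independence of the $g_t$ to make the exponent vector a well-defined function of $x$; the paper uses this same step implicitly, and it does follow from the uniqueness clause applied to any single entry of $M$.
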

\begin{proof}
    For any $n \geq 1$ and graph $G = (V, E)$, recall from
    \cref{equation: thickeningEqn} that
    \begin{equation}\label{Z_M-expression}
        Z_{M}(\mathbf{T_{n}}G)
        = \sum_{x \in X(G)} x^{n} \cdot \text{\#}_{M}(G, x),
    \end{equation}
    where $X(G)$ and $\text{\#}_{M}(G, x)$ are as in
    \cref{equation: thickeningX(G),equation: thickeningNumMappings}.
    Since $X(G)$ is a finite set with $|X(G)| \leq
    |\mathcal{P}_{q^{2}}(|E|)| \leq |E|^{O(1)}$,
    querying $\PlGH(M)$ for $n \in [|X(G)|]$
    gives us a full-rank Vandermonde system in the variables
    $\{\text{\#}_{M}(G, x)\}_{x \in X(G)}$.
    Hence all $\text{\#}_{M}(G, x)$ can be computed in polynomial time.

    Next, each $x \in X(G)$ can be written as
    $x = \prod_{i, j \in [q]} M_{ij}^{k_{ij}}$
    for some $\mathbf{k} = (k_{ij}) \in \mathcal{P}_{q^{2}}(|E|)$.
    Since the entries of $M$ are generated by $\{g_t\}_{t \in [d]}$,
    every such $x$ admits a unique representation
    $$x = (-1)^{e^{x}_{0}} \cdot 
    g_{1}^{e^{x}_{1}} \cdots g_{d}^{e^{x}_{d}},
    \qquad e^{x}_{t} = \sum_{i, j \in [q]} k_{ij} \cdot e_{ijt},$$
    with $e^{x}_{t} \in \mathbb{Z}_{\ge 0}$ since
    all $e_{ijt}, k_{ij} \ge 0$.

    Define a map $\widehat{y}: X(G) \to \mathbb{R}$ by
    %$$\widehat{y}(x) = (-1)^{e^{x}_{0}} \cdot z_{1}^{e^{x}_{1}} \cdots z_{d}^{e^{x}_{d}},$$
    %%%%ZZT: avoid 0^0:
    $$\widehat{y}(x) = (-1)^{e^{x}_{0}} \prod_{\substack{1\le t\le d\\e_t^x>0}}z_t^{e_t^x},$$
    where
    $\mathbf{z} = (z_{1}, \dots, z_{d})$ are the parameters
    of $N \in \Sym{q}{}{\mathbb{R}}$.
    For any $\mathbf{k} \in \mathcal{P}_{q^{2}}(|E|)$,
    this definition ensures that
    $$\widehat{y} \left(\prod_{i, j \in [q]} M_{ij}^{k_{ij}}\right)
    = \prod_{i, j \in [q]} N_{ij}^{k_{ij}}.$$
    Let
    $$Y(G) = \left\{\prod_{i, j \in [q]} N_{ij}^{k_{ij}}
    \ \middle|\ \mathbf{k} \in \mathcal{P}_{q^{2}}(|E|) \right\}.$$

    For any labeling $\sigma : V \to [q]$, define
    $k_{ij} = |\{(u,v) \in E : \sigma(u)=i, \sigma(v)=j\}|$.
    Then
    $$\widehat{y}\left( \prod_{(u,v)\in E}
    M_{\sigma(u)\sigma(v)} \right)
    = \widehat{y}\left(\prod_{i, j \in [q]}M_{ij}^{k_{ij}}\right)
    = \prod_{i, j \in [q]} N_{ij}^{k_{ij}}
    = \prod_{(u,v)\in E} N_{\sigma(u)\sigma(v)}.$$
    Hence, for each $y \in Y(G)$,
    $$\left\{ \sigma : V \to [q] \ \middle|\ 
    \prod_{(u,v)\in E} N_{\sigma(u)\sigma(v)} = y \right\}
    = \bigsqcup_{\substack{x \in X(G):\\ \widehat{y}(x)=y}}
    \left\{ \sigma : V \to [q] \ \middle|\ 
    \prod_{(u,v)\in E} M_{\sigma(u)\sigma(v)} = x \right\}.$$
    Therefore, we obtain
    $$\text{\#}_{N}(G, y)
    = \sum_{\substack{x \in X(G):\\ \widehat{y}(x) = y}}
    \text{\#}_{M}(G, x).$$
    Using the values $\text{\#}_{M}(G, x)$ already determined,
    we can compute
    \begin{align*}
        Z_{N}(G)
        &= \sum_{y \in Y(G)} y \cdot \text{\#}_{N}(G, y)
        = \sum_{x \in X(G)} \widehat{y}(x) \cdot \text{\#}_{M}(G, x).
    \end{align*}
    Therefore, $Z_{N}(G)$ can be computed in polynomial time
    given oracle access to $\PlGH(M)$, implying
    $\PlGH(N) \leq^{T}_{p} \PlGH(M)$.
\end{proof}

We are now ready to prove
\cref{lemma: thickeningLemma}.

\thickeningLemma*
\begin{proof}
    Since $M \in \Sym{q}{}{\mathbb{R}_{\neq 0}}$
    is generated by $\{g_t\}_{t \in [d]}$, we know that
    there exist integers $e_{ijt} \in \mathbb{Z}$ such that
    for all $i, j \in [q]$,
    $$M_{ij} = (-1)^{e_{ij0}} \cdot
    g_{1}^{e_{ij1}} \cdots g_{d}^{e_{ijd}}.$$
    Now, we let $e_{t}^{*} = \min_{i, j \in [q]}e_{ijt}$
    for all $t \in [d]$, and let
    $c = \prod_{t \in [d]}g_{t}^{-e_{t}^{*}}$.
    From \cref{lemma: MequivalentCM}, we see that
    $\PlGH(M) \equiv_p^T \PlGH(cM)$.
    Moreover, we also see that the entries of the matrix
    $cM$ are also generated by $\{g_t\}_{t \in [d]}$
    such that for all $i, j \in [q]$,
    $$(cM)_{ij} = (-1)^{e_{ij0}} \cdot
    g_{1}^{e_{ij1} - e_{1}^{*}} \cdots g_{d}^{e_{ijd} - e_{d}^{*}}.$$
    From \cref{definition: mathcalTLarge,lemma: thickeningGeneralLemma},
    it follows that
    $\PlGH(\mathcal{T}(M; \mathbf{z})) \leq^{T}_{p} \PlGH(cM)
    \equiv_p^T \PlGH(M)$ for all
    $\mathbf{z} = (z_{1}, \dots, z_{d}) \in \mathbb{R}^{d}$.
\end{proof}

\cref{lemma: simpleThickening}
now follows as a corollary:
\simpleThickening*
\begin{proof}
    Let the entries of $M \in \Sym{q}{}{\mathbb{R}_{> 0}}$
    be generated by some $\{g_t\}_{t \in [d]}$,
    such that $M_{ij} = g_{1}^{e_{ij1}} \cdots g_{d}^{e_{ijd}}$
    for all $i, j \in [q]$.
    Let $e_{t}^{*} = \min_{i, j \in [q]}e_{ijt}$ for $t \in [d]$.
    Then, we note from \cref{definition: mathcalTLarge}, that
    $$\mathcal{T}(M; (g_{1}^{\theta}, \dots, g_{d}^{\theta}))_{ij}
    = (g_{1}^{e_{ij1}} \cdots g_{d}^{e_{ijt}})^{\theta} \cdot
    (g_{1}^{-e_{1}^{*}}\cdots g_{d}^{-e_{d}^{*}})^{\theta}
    = (M_{ij})^{\theta} \cdot c^{\theta},$$
    for $c = (g_{1}^{-e_{1}^{*}}\cdots g_{d}^{-e_{d}^{*}})$.
    Importantly, $c$ is independent of all $i, j \in [q]$.
    So, $\mathcal{T}(M; (g_{1}^{\theta}, \dots, g_{d}^{\theta}))
    = c^{\theta} \cdot \mathcal{T}_{M}(\theta)$ for all
    $\theta \in \mathbb{R}$.
    Therefore, \cref{definition: mathcalTSmall,lemma: thickeningLemma,lemma: MequivalentCM}
    imply that for all $\theta \in \mathbb{R}$,
    $$\PlGH(\mathcal{T}_{M}(\theta)) \equiv_p^T 
    \PlGH(\mathcal{T}(M; (g_{1}^{\theta}, \dots, g_{d}^{\theta})))
    \leq^{T}_{p} \PlGH(M).$$
\end{proof}

%\newpage
\subsection{Stretching Gadgets}\label{appendix: stretchingAppendix}

In this section, we give omitted proofs in \Autoref{sec: stretchingGadgets}.
In particular, we prove \cref{lemma: latticeInterpolation} (first proved in \cite{cai2023complexity}) from the perspective of ``conformal lattice interpolation" in \cite{CaiFST26}.
We first recall \cref{definition: latticeSet}.

\latticeSet*

\begin{lemma}[Conformal lattice interpolation \cite{CaiFST26}]\label{lm:conformal lattice interpolation}
    Let $k\in \Z_+$, $\mathbf{x}=(x_1,x_2,\ldots,x_k)\in (\C^*)^k$ and  $\mathbf{y}=(y_1,y_2,\ldots,y_k)\in (\C^*)^k$.
    Suppose 
    ${\mathfrak{L}}(x_1,x_2\ldots,x_k) \subseteq {\mathfrak{L}}(y_1,y_2\ldots,y_k)$.
    Given the numbers $$N_l(x_1, x_2, \ldots, x_k) := \sum\limits_{\substack{j_1,j_2,\dots, j_k \geq 0 \\ j_1+j_2+\dots+j_k= m}} (x_1^{j_1}x_2^{j_2}\cdots x_k^{j_k})^l z_{j_1,j_2,\dots,j_k}$$ for $l = 1,2,\dots,  \binom{m+k-1}{k-1}$,  we can compute $N_1(y_1, y_2, \ldots, y_k)$ in time polynomial in $m$.
\end{lemma}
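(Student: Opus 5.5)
The plan is to reduce to the case where the exponent vectors of the monomials are linearly independent. From that case we can solve a Vandermonde system for aggregated coefficients and then re-evaluate at $\mathbf{y}$.

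Write $M_{\mathbf{j}}(\mathbf{x})=\prod_i x_i^{j_i}$ for $\mathbf{j}$ ranging over $\mathcal{P}_k(m)$. Since all $\mathbf{j}$ have the same total degree $m$, two indices $\mathbf{j},\mathbf{j}'$ give the same value $M_{\mathbf{j}}(\mathbf{x})=M_{\mathbf{j}'}(\mathbf{x})$ if and only if $\mathbf{j}-\mathbf{j}'\in\widehat{\mathfrak{L}}(\mathbf{x})$. Because $\sum_i (j_i-j'_i)=0$, this is the same as $\mathbf{j}-\mathbf{j}'\in\mathfrak{L}(\mathbf{x})$. This is exactly why the hypothesis concerns the homogeneous lattice $\mathfrak{L}$ rather than $\widehat{\mathfrak{L}}$.

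First, partition $\mathcal{P}_k(m)$ into classes on which $M_{\mathbf{j}}(\mathbf{x})$ is constant, and let $\{w_1,\dots,w_r\}$ be the distinct values, with $r\le\binom{m+k-1}{k-1}$. Then
\[
N_l(\mathbf{x})=\sum_{s=1}^r w_s^l Z_s ,
\]
where $Z_s$ is the sum of $z_{\mathbf{j}}$ over the class of $w_s$. The $w_s$ are distinct and nonzero, so the $r\times r$ Vandermonde system given by $l=1,\dots,r$ is invertible, and we solve it for all $Z_s$.

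Second, by the hypothesis $\mathfrak{L}(\mathbf{x})\subseteq\mathfrak{L}(\mathbf{y})$, any two $\mathbf{j},\mathbf{j}'$ in the same class satisfy $\mathbf{j}-\mathbf{j}'\in\mathfrak{L}(\mathbf{y})$, so $M_{\mathbf{j}}(\mathbf{y})=M_{\mathbf{j}'}(\mathbf{y})$. Hence $M(\mathbf{y})$ is constant on each class, taking some value $v_s$, and
\[
N_1(\mathbf{y})=\sum_s v_s Z_s .
\]

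The main obstacle is making the class computation polynomial in $m$. Deciding which $\mathbf{j}$ share a value needs the lattice $\mathfrak{L}(\mathbf{x})$, which is a fixed finitely generated object depending only on $\mathbf{x}$, not on $m$. We take a basis of it as given, which fits the nonuniform, fixed-constant model of \autoref{appendix: modelComputation}. Membership of $\mathbf{j}-\mathbf{j}'$ in this lattice is then a constant-dimensional integer linear algebra test. Alternatively, we compare the values directly in the fixed field $\mathbf{F}$, where arithmetic on $M_{\mathbf{j}}(\mathbf{x})$ with $|\mathbf{j}|=m$ costs time polynomial in $m$.

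There are $\binom{m+k-1}{k-1}=m^{O(1)}$ indices since $k$ is fixed, so grouping the indices, solving the Vandermonde system over $\mathbf{F}$ (or over $\mathbb{C}$ with exact field arithmetic), and evaluating the $v_s$ are all polynomial in $m$. This is the only delicate bookkeeping; the algebra is otherwise straightforward.
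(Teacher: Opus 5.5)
Your proposal is correct and is essentially the standard argument behind this lemma, whose proof the paper defers to \cite{CaiFST26}. You group the exponent vectors by the value of $\prod_i x_i^{j_i}$, solve the nonsingular Vandermonde system for the aggregated coefficients, and use that $\mathbf{j}-\mathbf{j}'$ has coordinate sum $0$, hence lies in $\mathfrak{L}(\mathbf{x})\subseteq\mathfrak{L}(\mathbf{y})$, so that $\prod_i y_i^{j_i}$ is constant on each group. One small wording slip: your opening sentence says you reduce to linearly independent exponent vectors, but the argument only uses, correctly, that the class values $w_s$ are nonzero and pairwise distinct.
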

\begin{remark}
    There is a slightly difference between the statement of \cref{lm:conformal lattice interpolation} and the conformal lattice interpolation in \cite{CaiFST26}, but the proofs are essentially the same.
\end{remark}

\latticeInterpolation*
\begin{proof}

Given a matrix $M \in \Sym{q}{pd}{\mathbb{R}}$,
we know that there exists an orthonormal matrix
(not necessarily unique) 
of unit eigenvectors $H$
with entries $H_{ij} \in \mathbb{R}$,
and a diagonal matrix of eigenvalues
$D = \diag(\lambda_{1}, \dots, \lambda_{q})$,
such that $M = HDH^{\tt{T}}$.
Therefore,
$$(\mathbf{S_{\ell}}(M))_{ij} = 
(M^{\ell})_{ij} = (H_{i1}H_{j1}) \lambda_{1}^{\ell} + \dots +
(H_{iq}H_{jq}) \lambda_{q}^{\ell},$$
for all $i, j \in [q]$, and all $\ell \geq 1$.

Given a planar graph $G=(V,E)$, we may construct $\mathbf{S}_{\ell}G$ for $1\le \ell \le \binom{|E|+q-1}{q-1}=|E|^{O(1)}$ in polynomial time, and query
\begin{equation*}
    \begin{aligned}
        Z_{M}(\mathbf{S}_{\ell}G)=Z_{\mathbf{S}_{\ell}(M)}(G)
    =&\sum_{\sigma:V\to [q]}\prod_{(u,v)\in E}\sum_{k=1}^qH_{\sigma(u)k}H_{\sigma(v)k}\lambda_k^\ell\\
    =&\sum_{\substack{j_1,j_2,\ldots,j_q\ge 0\\j_1+j_2+\cdots+j_q=|E|}}z_{j_1,j_2,\ldots,j_q}(\lambda_{j_1}\lambda_{j_2}\cdots\lambda_{j_q})^\ell.
    \end{aligned}
\end{equation*}
Here the coefficients $z_{j_1,j_2,\ldots,j_q}$ only depend on $G$ and $H$, but not depend on $\lambda_i$'s.
Similarly, $Z_{H\Delta H^{\tt T}}(G)$ has the expression
\begin{equation*}
    \begin{aligned}
        Z_{H\Delta H^{\tt T}}(G)=&\sum_{\sigma:V\to [q]}\prod_{(u,v)\in E}\sum_{k=1}^qH_{\sigma(u)k}H_{\sigma(v)k}\Delta_k\\
        =&\sum_{\substack{j_1,j_2,\ldots,j_q\ge 0\\j_1+j_2+\cdots+j_q=|E|}}z_{j_1,j_2,\ldots,j_q}\Delta_{j_1}\Delta_{j_2}\cdots\Delta_{j_q}
    \end{aligned}
\end{equation*}
By \cref{lm:conformal lattice interpolation}, oracle access to $Z_{M}(\mathbf{S}_{\ell}G)$ for $1\le \ell \le \binom{|E|+q-1}{q-1}$ recovers $Z_{H\Delta H^{\tt T}}(G)$.
Therefore, $\PlGH(H\Delta H^{\tt T})\le_p^T\PlGH(M).$
\end{proof}

Before we prove \cref{lemma: stretchingLemma}, we first recall \cref{definition: mathcalS} and prove that $\mathcal{S}_M$ is well-defined.
\DefMathcalS*

\begin{restatable}{lemma}{RealPowerOfMatrix}\label{lemma: real power of pd matrix}
For $M \in \Sym{q}{pd}{\mathbb{R}}$, if $M = U \Delta U^*$ for any unitary $U$ and diagonal $\Delta$, then
the diagonal entries are the (real positive) eigenvalues of $M$, and  $\mathcal{S}_{M}(\theta) = 
U \Delta^{\theta} U^*$. 
In particular,  $\mathcal{S}_M(\theta)$  is independent of the choice of the orthogonal decomposition $HDH^{\tt{T}}$ in \cref{definition: mathcalS}. 
%of $H$ or $D$.
   % The definition of $\mathcal{S}_M(\theta)$ is well-defined, as it is determined uniquely by $M$ and does not vary with the choice of $H$ or $D$.
\end{restatable}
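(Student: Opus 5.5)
The plan is to reduce both claims to the spectral theorem for normal matrices, the uniqueness of spectral projections, and the fact that the definition of $\mathcal{S}_M(\theta)$ only depends on those projections. Let $M\in\Sym{q}{pd}{\mathbb{R}}$ and suppose $M=U\Delta U^*$ with $U$ unitary and $\Delta=\diag(\delta_1,\dots,\delta_q)$.

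\emph{Step 1: the diagonal entries are the eigenvalues.} From $MU=U\Delta$, the $k$-th column $\mathbf{u}_k$ of $U$ satisfies $M\mathbf{u}_k=\delta_k\mathbf{u}_k$. Since $U$ is invertible, $\mathbf{u}_k\neq 0$, so each $\delta_k$ is an eigenvalue of $M$. Moreover, $\det(tI-M)=\det(U(tI-\Delta)U^*)=\prod_k(t-\delta_k)$, so the multiset $\{\delta_k\}$ equals the spectrum of $M$ with multiplicities. As $M$ is real symmetric positive definite, all $\delta_k$ are real and positive, and hence $\Delta^\theta=\diag(\delta_k^\theta)$ is well defined for real $\theta$.

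\emph{Step 2: rewrite the power via spectral projections.} For each distinct eigenvalue $\lambda$, let $S_\lambda=\{k:\delta_k=\lambda\}$ and set $P_\lambda=\sum_{k\in S_\lambda}\mathbf{u}_k\mathbf{u}_k^*$. Since the columns of $U$ are orthonormal, $P_\lambda$ is the orthogonal projection onto $\mathrm{span}\{\mathbf{u}_k:k\in S_\lambda\}$. I claim this span is exactly $\ker(M-\lambda I)$. The containment is clear. For the reverse, write a vector $\mathbf{v}=\sum_k\alpha_k\mathbf{u}_k$ with $M\mathbf{v}=\lambda\mathbf{v}$; this gives $\sum_k\alpha_k(\delta_k-\lambda)\mathbf{u}_k=0$, which forces $\alpha_k=0$ for $k\notin S_\lambda$. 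Therefore $P_\lambda$ is the orthogonal projection onto the $\lambda$-eigenspace, which is intrinsic to $M$. It follows that
\[U\Delta^\theta U^*=\sum_k\delta_k^\theta\mathbf{u}_k\mathbf{u}_k^*=\sum_{\lambda\in\mathrm{spec}(M)}\lambda^\theta P_\lambda.\]

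\emph{Step 3: conclude independence.} Apply the same computation to the real decomposition $HDH^{\tt T}$, which is a special case with $U=H$. Its right-hand side is the same intrinsic expression, so $\mathcal{S}_M(\theta)=HD^\theta H^{\tt T}=U\Delta^\theta U^*$ for every such decomposition. In particular, the value does not depend on the choice of $H$ or on the ordering in $D$.

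No step is a genuine obstacle. The only point needing care is the identification of the column span with the full eigenspace in Step 2, since this is what makes repeated eigenvalues harmless. An alternative would be to note that $\lambda\mapsto\lambda^\theta$ agrees with a single interpolating polynomial $p$ on the finite spectrum, so both expressions equal $p(M)$; the projection argument above is the one I would write.
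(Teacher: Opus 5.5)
Your proof is correct and is essentially the paper's argument. The paper defines the intrinsic operator $\mathcal{M}^\theta$ on $\mathbb{C}^q$ by $\mathbf{v}=\sum_i\mathbf{v}_i\mapsto\sum_i\lambda_i^\theta\mathbf{v}_i$ using the eigenspace decomposition. It observes that the matrix of this operator in the basis of columns of $U$ is $\Delta^\theta$, hence $U\Delta^\theta U^*$ in the standard basis, and that matrix is exactly your $\sum_\lambda\lambda^\theta P_\lambda$. The only difference is cosmetic: the paper needs only that each column lies in some eigenspace, whereas your projection formulation needs the relevant columns to span the whole eigenspace, which you correctly verify.
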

\begin{proof}
    It is clear that the diagonal entries of $\Delta$ are the real positive eigenvalues of $M.$
    Suppose $\lambda_1>\lambda_2>\ldots>\lambda_k>0$, where  $1 \le k \le q$, are all distinct eigenvalues of $M$.
    Define the linear map $\mathcal{M}:\C^q \to \C^q, \mathbf{v}\mapsto M\mathbf{v}.$
    Then $\mathcal{M}$ is also diagonalizable, and $\lambda_1>\lambda_2>\ldots>\lambda_k>0$ are all distinct eigenvalues of $\mathcal{M}.$
    Let $E_i=\mathrm{Ker}(\mathcal{M}-\lambda_i\mathcal{I})\subseteq \C^q$ be the eigenspace corresponding to $\lambda_i$, for every $1\le i\le k$.
    Since $\mathcal{M}$ is diagonalizable, $\C^q=\bigoplus_{i=1}^kE_i$ is a direct sum.
    Thus, every $\mathbf{v}\in\C^q$ can be uniquely decomposed as $\mathbf{v}=\sum_{i=1}^k \mathbf{v}_i$, where $\mathbf{v}_i\in E_i.$
    Define the linear map $\mathcal{M}^\theta:\C^q\to \C^q: \mathbf{v}\mapsto \sum_{i=1}^k \lambda_i^\theta \mathbf{v}_i$.
    Suppose $U=(\mathbf{u}_1,\ldots,\mathbf{u}_{q})$ is a unitary matrix that diagonalizes $M=U\Delta U^*$.
    Hence $\{\mathbf{u}_1,\ldots,\mathbf{u}_{q}\}$ form a basis of $\C^q$.
    For every $1\le i\le q$, $\mathbf{u}_i$ belongs to some eigenspace $E_{d_i}$, where $1\le d_i\le k.$
    Then $\mathcal{M}^\theta (\mathbf{u}_i)=\lambda_{d_i}^\theta \mathbf{u}_i$.
    Thus, the matrix representation of the linear map $\mathcal{M}^\theta$ under the basis $\{\mathbf{u}_1,\ldots,\mathbf{u}_{q}\}$
    is $\mathrm{diag}(\lambda_{d_1}^\theta,\lambda_{d_2}^\theta,\ldots,\lambda_{d_q}^\theta)=\Delta^\theta$.
    Since $I=UU^*$, $U^*$ is the transition matrix from $\{\mathbf{u}_1,\ldots,\mathbf{u}_{q}\}$ to the standard orthonormal basis $\{\mathbf{e}_1,\ldots,\mathbf{e}_q\}.$ 
    Thus, the matrix representation of the linear map $\mathcal{M}^\theta$ under the standard basis $\{\mathbf{e}_1,\ldots,\mathbf{e}_{q}\}$ is $(U^*)^{-1}\Delta^\theta U^*=U\Delta^\theta U^*$.
    Since the matrix representation of $\mathcal{M}^\theta$ under the standard basis is unique, we see that $U \Delta^\theta U^*=HD^\theta H^{\tt{T}}=\mathcal{S}_M(\theta)$ is independent of choice of the (real) orthogonal decomposition $HDH^{\tt{T}}$.
\end{proof}

\cref{lemma: stretchingLemma} now follows as
an immediate corollary of \cref{lemma: latticeInterpolation}.

\stretchingLemma*
\begin{proof}
    Let $D = \diag(\lambda_{1}, \dots, \lambda_{q})$
    be the eigenvalues of $M$.
    Since $M \in \Sym{q}{pd}{\mathbb{R}}$, it follows that
    $\lambda_{i} > 0$ for all $i \in [q]$, and
    $D^{\theta} = \diag(\lambda_{1}^{\theta},
    \dots, \lambda_{q}^{\theta})$ is well-defined
    for all $\theta \in \mathbb{R}$,
    %_{\neq 0}$,
    with $\lambda_{i}^{\theta} > 0$ for all $i \in [q]$.
    Moreover, for any $\mathbf{k}=(k_1,\ldots,k_{q})\in \Z^q$,
    %for $\theta \neq 0$ and $\lambda_i > 0$ for all $i \in [q]$,
    \begin{align*}
        \mathbf{k} \in \mathfrak{L}(\lambda_{1}, \dots, \lambda_{q})
        &\iff \lambda_{1}^{k_{1}} \cdots \lambda_{q}^{k_{q}} = 1 \\
        &\implies(\lambda_{1}^{k_{1}} \cdots
        \lambda_{q}^{k_{q}})^{\theta}= 1\\
        &\iff (\lambda_{1}^{\theta})^{k_{1}} \cdots
        (\lambda_{q}^{\theta})^{k_{q}} = 1
        \iff \mathbf{k} \in \mathfrak{L}(\lambda_{1}^{\theta},
        \dots, \lambda_{q}^{\theta}).
    \end{align*}
    So, $\mathfrak{L}(\lambda_{1}, \dots, \lambda_{q}) 
    \subseteq \mathfrak{L}(\lambda_{1}^{\theta}, \dots, \lambda_{q}^{\theta})$.
    It then immediately follows from
    \cref{definition: mathcalS,lemma: latticeInterpolation} that
    $\PlGH(\mathcal{S}_{M}(\theta)) \leq^{T}_{p} \PlGH(M)$,
    for all $\theta \in \mathbb{R}$.
\end{proof}

\section{Hardness Criteria}\label{Appendix: I+uuT}

In this section, we give all omitted proofs from \cref{sec: I+uuT}.

\GeneratingSetofRankOnePerturbation*
\begin{proof}
    By the proof of \cref{lemma: generatingSet}, we may choose a generating set $\mathcal{G}_1=\{g_t\mid t\in [d]\}$ of the off-diagonals $\{u_iu_j\mid 1\le i<j\le q\}$ such that $g_1,g_2\ldots,g_d$ are multiplicative independent, i.e.,  $\widehat{\mathfrak{L}}(g_1,\ldots, g_d)=\{\mathbf{0}\}.$
    Let $\mathcal{G}_2=\{v_i\mid 1\le i\le r\}$ be distinct elements among $\{u_i\mid 1\le i\le q\}.$
    Let $c>1$ be transcendental over the field $F:=\mathbb{Q}(g_1\ldots, g_t,v_1,\ldots,v_r)$.
   
    We next show that elements in $\mathcal{G}_1\cup \mathcal{G}_2$ are multiplicative independent.
    Suppose for a contradiction that $$\prod_{i=1}^r(c+v_i^2)^{x_i}\prod_{t=1}^d g_t^{y_t}=1$$ for some  $(x_1,\ldots, x_r,y_1,\ldots, y_d)\in \mathbb{Z}^{d+r}\setminus \{\mathbf{0}\}.$
    If $(x_1,\ldots, x_r)=\mathbf{0}$, then $(y_1,\ldots, y_d)\neq \mathbf{0}$ and $\prod_{t=1}^r g_t^{y_t}=1.$
    So $(y_1,\ldots, y_d)\in \widehat{\mathfrak{L}}(g_1,\ldots, g_d)$, contradicting that $\widehat{\mathfrak{L}}(g_1,\ldots, g_d)=\{\mathbf{0}\}.$
    Thus $(x_1,\ldots, x_r)\neq \mathbf{0}$.
    We have
    \begin{equation}\label{eq: generating set of I+uuT}
        \prod_{t=1}^d g_t^{y_t}\prod_{\substack{1\le i\le r\\x_i\ge 0}}(c+v_i^2)^{x_i}
        =\prod_{\substack{1\le i\le r\\x_i< 0}}(c+v_i^2)^{-x_i}
    \end{equation}
    Let $$P(X)=\prod_{t=1}^d g_t^{y_t}\prod_{\substack{1\le i\le r\\x_i\ge 0}}(X+v_i^2)^{x_i}
    -\prod_{\substack{1\le i\le r\\x_i< 0}}(X+v_i^2)^{-x_i}.$$
    We have $P(X)\in F[X].$
    Also, $P(X)$ is a non-zero polynomial since $v_i$'s are discint and $(x_1,\ldots,x_r)\neq \mathbf{0}$.
    But $c$ is transcendental over $F$, so $P(c)\neq 0$.
    This contradicts \eqref{eq: generating set of I+uuT}.
    Therefore, elements in $\mathcal{G}_1\cup \mathcal{G}_2$ are multiplicative independent.
    
    Consequently, for every $i\in[q]$, $c+u_i^2$ has a unique multiplicative expression by elements in $\mathcal{G}_2$.
    So $\mathcal{G}_2$ is a generating set of $\{c+u_i^2\mid 1\le i\le q\}.$
    It is also easy to see that every entry of $cI+\mathbf{uu}^{\tt T}$ has a unique multiplicative expression by elements in $\mathcal{G}_1\cup \mathcal{G}_2$.
    So $\mathcal{G}_1\cup \mathcal{G}_2$ is a generating set of all entries in $cI+\mathbf{uu}^{\tt T}$.
\end{proof}
\section{Circulant Matrix of Prime Order}\label{appendix: Circulant}
In this section, we give all omitted proofs from \cref{sec: circulant}.

\PropertiesOfCirculant*
\begin{proof}
    Let $\psi_m=\sum_{j=0}^{q-1}c_j\zeta_q^{mj}$ for $0\le m\le q-1.$ 
    By \cref{lm: evs of circulant matrix}, $(\psi_0,\psi_1,\ldots,\psi_{q-1})$ are eigenvalues of $C$.
    Since $C$ is positive definite, $\psi_m>0$ for every $0\le m\le q-1$.
    Notice that $(\psi_0,\psi_1,\ldots,\psi_{q-1})$ is a Fourier transformation of $(c_0,c_1,\ldots, c_{q-1})$, we have $c_j=\frac{1}{q}\sum_{m=0}^{q-1}\psi_m \zeta_q^{-jm} \,(\forall 0\le j\le q-1)$ by the Fourier inversion formula.
    By the fact that $c_j,\psi_m\in\R$ for every $j,m\in[q]$, we have 
    $$
    c_j=\mathrm{Re}\left(\frac{1}{q}\sum_{m=0}^{q-1}\psi_m e^{\frac{2\pi \mathfrak{i} jm}{q}}\right)=\frac{1}{q}\sum_{m=0}^{q-1}\psi_m \cos\left(\frac{2\pi jm}{q}\right), ~~(\forall 0\le j\le q-1).
    $$
    Then 
    $$
    c_0-c_j=\frac{1}{q}\sum_{m=0}^{q-1}\psi_m\left(1-\cos\left(\frac{2\pi jm}{q}\right)\right)>0 , ~~(1\le j\le q-1).
    $$
    Similarly, 
    $$
    \psi_m=\mathrm{Re}\left(\sum_{j=0}^{q-1}c_je^{\frac{-2\pi \mathfrak{i} mj}{q}}\right)=\sum_{j=0}^{q-1}c_j \cos\left(\frac{2\pi mj}{q}\right), ~~(\forall m\in[q])
    $$
    and $\psi_0>\psi_m$ for every $1\le m\le q-1$.
\end{proof}

\begin{remark}\label{rmk: lattice of Potts}
    Recall that $\potts{q}{x}=\mathcal{C}(x,1\ldots,1)$.
    We note that by \cref{lemma: Potts Model},
    $\PlGH(\potts{q}{x})$ is \#P-hard for every $q\ge 3$ and real $x\neq 1$.
    By \cref{lm: evs of circulant matrix}, the eigenvalues of $\potts{q}{x}$ are 
    $
    \psi_0=x+q-1,$
    and
    $\psi_{m}=x+\sum_{j=1}^{q-1}\zeta_q^{mj}=x-1,\,\forall 1\le m\le q-1.
    $
    It follows from \cref{def: lattice of potts,definition: latticeSet} that for every $x>1$,
    $$
    \LPotts{q}\subseteq \mathfrak{L}(\psi_0,\psi_1,\ldots,\psi_{q-1}).
    $$
    For $x>1$, $\frac{x+q-1}{x-1}>1$ is not a root of unity.
    Thus, for any $\mathbf{x}=(x_0,x_1,\ldots,x_{q-1})\in \Z$ such that $\sum_{m=0}^{q-1}x_m=0$, 
    $$
    \prod_{m=0}^{q-1} \psi_m^{x_m}=1 \implies\left(\frac{x+q-1}{x-1}\right)^{x_0}=1 \implies x_0=0.
    $$
    That is,
    $$
     \mathfrak{L}(\psi_0,\psi_1,\ldots,\psi_{q-1})\subseteq \LPotts{q}.
    $$
    Therefore, we conclude that 
    $
     \mathfrak{L}(\psi_0,\psi_1,\ldots,\psi_{q-1})= \LPotts{q}
    $
    for any $x>1.$
\end{remark}

\section{Tensor Product of 2 by 2 Matrices}
\label{Appendix: Tensor of 2by2}
%The following two lemmas follow from direct calculation (see \autoref{Appendix: Tensor of 2by2}):
In this section, we give all omitted or abbreviated proofs from \cref{sec: tensor of 2 by 2}.

\RhoGammaInZeroOne*
\begin{proof}
    For simplicity, let $\lambda=\lambda(A)$, and define $\mu,\rho,\gamma$ similarly.
    Since $A\in\Sym{2}{pd}{\Rp}$, we have $0<\mu<\lambda$ by \cref{def: 2 by 2 matrix parameters}.
    Thus $\rho=\frac{\mu}{\lambda}\in (0,1).$
    Notice that $\lambda$ and $\mu$ are roots of $\varphi_A(x)=x^2-(a+c)x+ac-b^2$.
    We have $\varphi_A(a)=-b^2<0$, so $\mu<a<\lambda$, which implies $\gamma>0$.
    Also $\lambda+\mu=a+c<2a$, which implies $\gamma<1$.    
\end{proof}
%\vspace{-.2in}

\begin{restatable}{lemma}{BThetaAndCTheta}\label{lm: b(theta) and c(theta) after stretching}
    Let $A=\left[\begin{smallmatrix}
    1 & b\\
    b & c
\end{smallmatrix}\right]\in \Sym{2}{pd}{\Rp}$ with $c<1$, and $\rho=\rho(A),\gamma=\gamma(A)$.
Then up to a scalar normalization, $\mathcal{S}_A(\theta)=\left[\begin{smallmatrix}
    1 & b(\theta)\\
    b(\theta) & c(\theta)
\end{smallmatrix}\right]$ for any $\theta\in \R,$ where 
$
b(\theta)=\frac{\sqrt{\gamma}(1-\rho^\theta)}{1+\gamma\rho^\theta}
$
and
$
c(\theta)=\frac{\rho^\theta+\gamma}{1+\gamma\rho^\theta}.
$
\end{restatable}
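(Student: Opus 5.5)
Write $\lambda=\lambda(A)$, $\mu=\mu(A)$, $\rho=\mu/\lambda$, $\gamma=\gamma(A)=(\lambda-1)/(1-\mu)$. The plan is to compute $\mathcal{S}_A(\theta)$ directly from the spectral decomposition of $A$, using the explicit eigenvectors of a $2\times 2$ symmetric matrix. By \cref{lemma: real power of pd matrix}, $\mathcal{S}_A(\theta)$ does not depend on which orthogonal diagonalization we use, so we may pick convenient eigenvectors. The proof has three steps.

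\textbf{Step 1 (eigenvectors).} For the eigenvalue $\lambda$, the first row of $(A-\lambda I)\mathbf{v}=0$ gives $(1-\lambda)v_1+bv_2=0$, so $\mathbf{v}_\lambda\propto(b,\lambda-1)^{\tt T}$. Similarly $\mathbf{v}_\mu\propto(b,\mu-1)^{\tt T}$. By the proof of \cref{lm: rho and gamma range}, $\mu<1<\lambda$ (there $a=1$). From the characteristic polynomial $\varphi_A(x)=x^2-(1+c)x+(c-b^2)$ we get $\varphi_A(1)=-b^2$, that is, $(\lambda-1)(1-\mu)=b^2$. Hence $\gamma=(\lambda-1)/(1-\mu)=(\lambda-1)^2/b^2$, so $\sqrt{\gamma}=(\lambda-1)/b$ and $1/\sqrt{\gamma}=(1-\mu)/b$. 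Rescaling, the eigenvectors become $(1,\sqrt{\gamma})^{\tt T}$ for $\lambda$ and $(1,-1/\sqrt{\gamma})^{\tt T}$ for $\mu$, which are visibly orthogonal. Their squared norms are $1+\gamma$ and $1+\gamma^{-1}=(1+\gamma)/\gamma$.

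\textbf{Step 2 (assemble $\mathcal{S}_A(\theta)$).} Write
\[
\mathcal{S}_A(\theta)=\lambda^\theta P_\lambda+\mu^\theta P_\mu ,
\]
where
\[
P_\lambda=\tfrac{1}{1+\gamma}\left[\begin{smallmatrix}1&\sqrt{\gamma}\\ \sqrt{\gamma}&\gamma\end{smallmatrix}\right],
\qquad
P_\mu=\tfrac{\gamma}{1+\gamma}\left[\begin{smallmatrix}1&-1/\sqrt{\gamma}\\ -1/\sqrt{\gamma}&1/\gamma\end{smallmatrix}\right].
\]
Factor out $\lambda^\theta/(1+\gamma)$ and use $\mu^\theta=\rho^\theta\lambda^\theta$. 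The result is
\[
\mathcal{S}_A(\theta)=\frac{\lambda^\theta}{1+\gamma}\left[\begin{smallmatrix}1+\gamma\rho^\theta&\sqrt{\gamma}(1-\rho^\theta)\\ \sqrt{\gamma}(1-\rho^\theta)&\gamma+\rho^\theta\end{smallmatrix}\right].
\]

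\textbf{Step 3 (normalize).} Divide by the $(1,1)$ entry, which is legitimate since $1+\gamma\rho^\theta>0$ for every real $\theta$. This gives exactly the stated $b(\theta)$ and $c(\theta)$, and scalar multiples do not change the problem by \cref{lemma: MequivalentCM}.

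The only delicate point is the identity $(\lambda-1)(1-\mu)=b^2$ together with choosing the correct sign of the square root, $\sqrt{\gamma}=(\lambda-1)/b>0$, which uses $b>0$. Everything else is routine. As a sanity check, at $\theta=1$ the formulas should reproduce $b$ and $c$; it is worth verifying $c(1)=c$ using $\lambda+\mu=1+c$.
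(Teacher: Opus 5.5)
Your proof is correct and takes essentially the same route as the paper. Both write $\mathcal{S}_A(\theta)=\lambda^\theta P_\lambda+\mu^\theta P_\mu$, use $(\lambda-1)(1-\mu)=b^2$ (from $\varphi_A(1)=-b^2$), and normalize by the $(1,1)$ entry; the paper gets the projectors from Sylvester's formula $P_\lambda=(A-\mu I)/(\lambda-\mu)$ and simplifies the entries algebraically, whereas you read them off the normalized eigenvectors $(1,\sqrt{\gamma})^{\tt T}$ and $(1,-1/\sqrt{\gamma})^{\tt T}$, which makes the formula immediate for all real $\theta$.
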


%\BThetaAndCTheta*
\begin{proof}
By \cref{lemma: real power of pd matrix}, $A^\theta=\mathcal{S}_A(\theta)$ is a well defined (matrix-valued) analytic function of $\theta$.
Let $A(\theta):=\frac{A^\theta}{A^\theta_{11}}=\left[\begin{smallmatrix}
    1 & b(\theta)\\
    b(\theta) & c(\theta)
\end{smallmatrix}\right]$ be the normalized matrix of $A^\theta$.
In the following, $\lambda=\lambda(A)$ and $\mu=\mu(A)$.
Note that $\lambda,\mu$ are roots of $\varphi_A(x)=x^2-(1+c)x+c-b^2$.
Using Sylvester's formula for $2\times 2$ matrix, we have
$$A^\theta = \lambda^\theta \frac{A - \mu I}{\lambda - \mu} + \mu^\theta \frac{A - \lambda I}{\mu - \lambda},\theta>0.$$
Then $$A^{\theta}_{11}=\frac{\lambda^{\theta}(1-\mu)-\mu^\theta(1-\lambda)}{\lambda-\mu},~~
A^{\theta}_{22}=\frac{\lambda^{\theta}(c-\mu)-\mu^\theta(c-\lambda)}{\lambda-\mu},$$
and 
\begin{equation}\label{eq: c after stretching}
    \begin{aligned}
        c(\theta)=&\frac{A_{22}^\theta}{A_{11}^\theta}
        =\frac{\lambda^{\theta}(c-\mu)-\mu^\theta(c-\lambda)}{\lambda^{\theta}(1-\mu)-\mu^\theta(1-\lambda)}\\
=&\frac{\lambda^\theta\times\frac{b^2}{1-\mu}-\mu^\theta\times\frac{b^2}{1-\lambda}}{\lambda^{\theta}(1-\mu)-\mu^\theta(1-\lambda)}\\
    =&\frac{b^2}{(1-\lambda)(1-\mu)}\times\frac{\lambda^\theta(1-\lambda)-\mu^\theta(1-\mu)}{\lambda^{\theta}(1-\mu)-\mu^\theta(1-\lambda)}\\
    =&\frac{\mu^\theta(1-\mu)-\lambda^\theta(1-\lambda)}{\lambda^{\theta}(1-\mu)-\mu^\theta(1-\lambda)}\\
    =&\frac{\rho^\theta+\gamma}{1+\gamma\rho^\theta},
    \end{aligned}
\end{equation}
Similarly, we can calculate
\begin{equation}\label{eq: b after stretching}
    \begin{aligned}        b(\theta)=&\frac{A^\theta_{12}}{A^\theta_{11}}=\frac{\lambda^\theta b-\mu^\theta b}{\lambda^{\theta}(1-\mu)-\mu^\theta(1-\lambda)}\\
        =&\frac{(\lambda^\theta-\mu^\theta)\sqrt{-(1-\lambda)(1-\mu)}}{\lambda^{\theta}(1-\mu)-\mu^\theta(1-\lambda)}\\
        =&\frac{\sqrt{\gamma}(1-\rho^\theta)}{1+\gamma\rho^\theta}.
    \end{aligned}
\end{equation}
\end{proof}

\TensorProductFirstReduction*
\begin{proof}
    Suppose $M\in \Sym{q}{pd}{\Rp}$ and  $M=\bigotimes_{i=1}^s A_i$, where $A_i\in\R^{2\times 2}$.
    We first prove that $A_i$ is symmetric for every $1\le i\le s.$
    Since $M$ is symmetric, $\bigotimes_{i=1}^s(A_i)^{\tt{T}}= M^{\tt{T}}=M=\bigotimes_{i=1}^s A_i$.
    This implies $A_i=t_i A_i^{\tt{T}}$ for some $t_i\in \R$, for every $1\le i\le s.$
    So $A_i=t_i(t_iA_i^{\tt{T}})^{\tt{T}}=t_i^2A_i$.
    Since $M$ has all entries positive, $A_i\neq O$.
    This implies $t_i=1$ or $t_i=-1$, for every $1\le i\le s.$
    If for some $1\le i\le s$, $t_i=-1$, then $A_i$ is skew symmetric and the diagonal entries of $A_i$ are 0.
    This contradicts that all entris of $M$ are positive.
    So $t_i=1$ for every $1\le i\le s$.
    i.e., $A_i$ is symmetric for every $1\le i\le s$.
    
    So far we may assume $A_i=\left[\begin{smallmatrix}
        a_i & b_i\\
        b_i & c_i
    \end{smallmatrix}\right]\in\Sym{2}{}{\R_{\neq 0}}.$
    Since $M$ has all entries positive, the entries within any $A_i$ must have the same sign. 
    That is, for every $1\le i\le s$, either $A_i\in \Sym{q}{}{\Rp}$ or $A_i\in \Sym{q}{}{\R_{<0}}$.
    Note that for an arbitrary graph $G$, 
    \begin{equation}\label{eq: Z_M(G) is product of ZA_i(G)}
        Z_M(G)=\prod_{i=1}^s Z_{A_i}(G).
    \end{equation}
    Also note that $Z_{-A_i}(G)=(-1)^{|E(G)|}Z_{A_i}(G)$.
    So if $A_i\in\Sym{q}{}{\R_{<0}}$, we may replace $A_i$ by $-A_i$ without changing the complexity of $\PlGH(M)$.
    Thus, we may assume that $A_i\in \Sym{2}{}{\Rp}$ for every $1\le i\le s$.
    Since the eigenvalues of $M$ are products of those of $A_i$, and they are all positive, the two eigenvalues of every $A_i$ are either both positive or both negative.
    Also, the sum of eigenvalues of $A_i$ is equal to $\mathrm{Tr}(A_i)>0$, so they are both positive.
    Thus, we may assume that $A_i\in \Sym{2}{pd}{\Rp}$ for every $1\le i\le s$.

    It is known that if $A\in \Sym{2}{pd}{\Rp}$ has equal diagonal entries, then $A$ is matchgate transformable \cite{cai2017complexitybook}.
    So if $a_i=c_i$ for every $1\le i\le s$, then $M$ is a tensor product of matchgates transformable signatures, and $\PlGH(M)$ is polynomial time solvable by Valiant's holographic algorithm plus the FKT algorithm. %(\textcolor{red}{Maybe need a citation here.})
    Thus, to prove \cref{thm: hardness for 2 by 2 tensor positive definite}, it suffices to prove that $\PlGH(M)$ is \#P-hard in the case that there exists some $A_i$ which has distinct diagonal entries.
    If there exists some $1\le i\le s$ such that $a_i=c_i$, then $Z_{A_i}(G)>0$ is easy to compute. 
    Thus, to compute $Z_M(G)$ is equivalent to compute the product of those terms on the RHS of \cref{eq: Z_M(G) is product of ZA_i(G)} corresponding to non-matchgate $A_i$ .
    So, we may assume that for every $1\le i\le s$, $a_i\neq c_i$.
    Also, let $A^r_i=\left[\begin{smallmatrix}
        c_i & b_i\\
        b_i & a_i
    \end{smallmatrix}\right]$, we have $Z_{A_i}(G)=Z_{A^r_i}(G)$ by flipping 0 and 1  in the assignment of vertices of the input graph $G$.
    So, we may assume that $a_i>c_i$ for every $1\le i\le s$.
    Finally, by \cref{lemma: MequivalentCM}, we can normalize $a_i=1$ for every $1\le i\le s$.
    
    So far, we may assume $M=\bigotimes_{i=1}^sA_i$, where $A_i=\left[\begin{smallmatrix}
        1 & b_i\\
        b_i & c_i
    \end{smallmatrix}\right]\in \Sym{2}{pd}{\Rp}$ and $c_i<1$ for every $1\le i\le s$.
    Next we do stretching to realize $b_i>c_i$.
    By \cref{lm: b(theta) and c(theta) after stretching}, for every $1\le i\le s$, $\lim_{\theta\to + \infty}b_i(\theta)=\sqrt{\gamma_i}$ and $\lim_{\theta\to + \infty}c_i(\theta)=\gamma_i$.
    Since $\gamma_i\in(0,1)$ by \cref{lm: rho and gamma range}, there exists some $\theta>0$, such that $1>b_i(\theta)>c_i(\theta)>0$ $(\forall1\le i\le s)$.
    Combining \cref{lm: b(theta) and c(theta) after stretching}, \cref{lemma: MequivalentCM} and \cref{lemma: stretchingLemma}, $\PlGH(\bigotimes_{i=1}^s A_i(\theta))\equiv_p^T\PlGH(\bigotimes_{i=1}^s \mathcal{S}_{A_i}(\theta))\equiv_p^T \PlGH(\mathcal{S}_M(\theta))\le_p^T \PlGH(M)$.
    Notice that if each $A_i \in \Sym{2}{pd}{\Rp}$, $\mathcal{S}_{A_i}(\theta)$ is also positive definite.
    That is, $c_i(\theta)>b_i(\theta)^2$.
    %%% JYC I edited:
    % stretching has no affect on the positive definiteness,
    So we may replace $M$ by $\mathcal{S}_M(\theta)$, and $A_i$ by $A_i(\theta)$, and assuming $1>b_i>c_i>b_i^2>0$, for every $1\le i\le s$. 
\end{proof}

\RhoDistinctByThickening*
\begin{proof}
    Recall that $B_i=A_{p_i}$ and $\mathcal{T}_{B_i}(n)=\left[\begin{smallmatrix}
    1 & b_{p_i}^n\\
    b_{p_i}^n & c_{p_i}^n
    \end{smallmatrix}\right]$, and $(b_{p_i},c_{p_i})$ are all distinct for $1\le i\le m.$
    We only need to prove that for arbitrary $1\le i<j
    \le s$, there exists $N_{ij}\in \mathbb{Z}_+$ such that for every $n>N_{ij}$, $\rho(\mathcal{T}_{B_i}(n))\neq \rho(\mathcal{T}_{B_j}(n))$.
    The lemma follows by taking an arbitrary $n>\max_{1\le i<j\le s} N_{ij}$.
    We prove this by looking at the Taylor expansion for each $\rho(\mathcal{T}_{B_i}(n))$ as $n\to +\infty$. 
    For simplicity, we temporarily omit subscripts in the following.
    Using $\sqrt{1+x}=1+\frac{1}{2}x -\frac{1}{8}x^2+\Theta(x^3)$ as $x\to 0$ and $c>b^2$, we have 
    \begin{equation*}
        \begin{aligned}
            \sqrt{\Delta_n}:=
            &\sqrt{(1-c^n)^2+4b^{2n}}=\sqrt{1-2c^n+c^{2n}+4b^{2n}}\\=&
            1+\frac{1}{2}(-2c^n+c^{2n}+4b^{2n})-\frac{1}{8}(-2c^n+c^{2n}+4b^{2n})^2+
            \Theta(c^{3n})\\
            =&
            1-c^n+\frac{1}{2}c^{2n}+2b^{2n}-\frac{1}{8}(4c^{2n}+16b^{4n}-16c^n b^{2n})+
            \Theta(c^{3n})\\
            =&
            1-c^n+2b^{2n}+2c^n b^{2n}-2b^{4n}+
            \Theta(c^{3n}),\text{ as $n\to\infty$.}\\
        \end{aligned}
    \end{equation*}  
    By assumption $b>c$, 
    \begin{equation}
        \sqrt{\Delta_n}=1-c^n+2b^{2n}+\Theta(c^nb^{2n}),\text{ as $n\to\infty$.}
    \end{equation}
    Then
    $$
    \rho(\mathcal{T}_{B}(n))=\frac{1+c^n-\sqrt{\Delta_n}}{1+c^n+\sqrt{\Delta_n}}=\frac{c^n-b^{2n}+\Theta(c^{n}b^{2n})}{1+b^{2n}+\Theta(c^{n}b^{2n})}, \text{ as $n\to\infty$.}
    $$
    Using $\frac{1}{1+x}=1-x+\Theta(x^2)$ as $x\to 0$ and $c>b^2$,
    \begin{equation}
        \rho(\mathcal{T}_{B}(n))=(c^n-b^{2n}+\Theta(c^{n}b^{2n}))\times(1-b^{2n}+\Theta(c^{n}b^{2n}))=c^n-b^{2n}+\Theta(c^nb^{2n}), \text{ as $n\to\infty$.}
    \end{equation}

    Now fix arbitrary $1\le i<j\le m$ and consider $\rho(\mathcal{T}_{B_i}(n))$ and $\rho(\mathcal{T}_{B_j}(n))$.
    If $c_{p_i}\neq c_{p_j}$, say $c_{p_i}>c_{p_j}$, then $\rho(\mathcal{T}_{B_j}(n))=o(\rho(\mathcal{T}_{B_i}(n)))$.
    Hence there exists $N_{ij}\in \Zp$, such that for every $n>N_{ij}$, $\rho(\mathcal{T}_{B_i}(n))\neq \rho(\mathcal{T}_{B_j}(n)).$
    If $c_{p_i}=c_{p_j}$, then $b_{p_i}\neq b_{p_j}$ by the assumption that all $B_i$  are distinct.
    Say $b_{p_i}>b_{p_j}$, then $\rho(\mathcal{T}_{B_j}(n))-\rho(\mathcal{T}_{B_i}(n))=b_{p_i}^{2n}-b_{p_j}^{2n}+\Theta(c_{p_i}^nb_{p_i}^{2n}+c_{p_j}^nb_j^{2n})=b_{p_i}^{2n}+o(b_{p_i}^{2n})$.
    Hence there exists $N_{ij}\in \Zp$, such that for every $n>N_{ij}$,  $\rho(\mathcal{T}_{B_i}(n))\neq \rho(\mathcal{T}_{B_j}(n))$.
\end{proof}

In the following, we let ${\mathfrak{L}}(\{\lambda_T\}_{T\subseteq\{0,1\}^s})$ denote the lattice ${\mathfrak{L}}(\lambda_{00\ldots0}, \ldots, \lambda_{11\ldots1})$, where the $s$ indices are ordered lexicographically.

\begin{lemma}\label{lm: lattice translation}
    Let $s\in \Zp$, $(\rho_1,\rho_2,\ldots,\rho_s),(\rho'_1,\rho'_2,\ldots,\rho'_s)\in \R_{\neq 0}^s$, and
    \begin{equation}\label{eq: lambda and mu}
        \lambda_{T}=\prod_{i\in T} \rho_i, \quad \lambda'_{T}=\prod_{i\in T} \rho'_i, \quad \text{for }T\subseteq \{0,1\}^s.
    \end{equation}
    Then
     $$\widehat{\mathfrak{L}}(\rho_1,\ldots,\rho_t)\subseteq \widehat{\mathfrak{L}}(\rho'_1,\ldots,\rho'_t) \Longrightarrow {\mathfrak{L}}(\{\lambda_T\}_{T\subseteq\{0,1\}^s})\subseteq {\mathfrak{L}}(\{\lambda'_T\}_{T\subseteq\{0,1\}^s}).$$ 
\end{lemma}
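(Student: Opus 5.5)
We prove the implication directly by writing every element of the lattice on the left in terms of the $\rho_i$. Identify each index $T\in\{0,1\}^s$ with the subset $\{i : T_i=1\}$, so that $\lambda_T=\prod_{i:T_i=1}\rho_i$ and $\lambda'_T=\prod_{i:T_i=1}\rho'_i$. (We read the statement's $\widehat{\mathfrak{L}}(\rho_1,\ldots,\rho_t)$ as $\widehat{\mathfrak{L}}(\rho_1,\ldots,\rho_s)$.)

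Take any $\mathbf{x}=(x_T)_{T}\in\mathfrak{L}(\{\lambda_T\})$. By \cref{definition: latticeSet} this means $\sum_T x_T=0$ and $\prod_T\lambda_T^{x_T}=1$. Regroup the product by coordinates:
$$\prod_T\lambda_T^{x_T}=\prod_{i=1}^s\rho_i^{\,y_i},\qquad y_i:=\sum_{T:\,T_i=1}x_T\in\mathbb{Z}.$$
This is a plain rearrangement of finitely many integer powers of nonzero reals, so it is valid even when some $\rho_i$ are negative. The condition $\prod_T\lambda_T^{x_T}=1$ therefore says exactly that $\mathbf{y}=(y_1,\ldots,y_s)\in\widehat{\mathfrak{L}}(\rho_1,\ldots,\rho_s)$.

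By hypothesis, $\mathbf{y}\in\widehat{\mathfrak{L}}(\rho'_1,\ldots,\rho'_s)$. The vector $\mathbf{y}$ is built from $\mathbf{x}$ alone, not from the $\rho_i$, so the same regrouping applied to the primed quantities gives
$$\prod_T(\lambda'_T)^{x_T}=\prod_{i=1}^s(\rho'_i)^{y_i}=1.$$
Since $\sum_T x_T=0$ is unaffected, $\mathbf{x}\in\mathfrak{L}(\{\lambda'_T\})$, which proves the inclusion.

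The only step needing care is the regrouping with negative entries, and that is just exponent additivity $a^{m}a^{n}=a^{m+n}$ in $\mathbb{R}_{\neq0}$. The hypothesis is only needed on the image of the linear map $\mathbf{x}\mapsto\mathbf{y}$, so the sum-zero condition plays no role beyond being preserved.
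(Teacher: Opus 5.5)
Your proof is correct and follows the paper's argument essentially verbatim. You regroup $\prod_T\lambda_T^{x_T}$ as $\prod_i\rho_i^{y_i}$ with $y_i=\sum_{T:\,T_i=1}x_T$, apply the hypothesis to $\mathbf{y}$, and undo the regrouping for the primed quantities; reading $t$ as $s$ and justifying the regrouping for negative $\rho_i$ are sensible clarifications of the statement.
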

\begin{proof}
    Assume $\widehat{\mathfrak{L}}(\rho_1,\ldots,\rho_t)\subseteq \widehat{\mathfrak{L}}(\rho'_1,\ldots,\rho'_t).$
    For any $\mathbf{x}=(x_T)_{T\subseteq \{0,1\}^s}\in {\mathfrak{L}}(\{\lambda_T\}_{T\subseteq\{0,1\}^s})$, we have 
    $$
    \sum_{T\subseteq \{0,1\}^s}x_{T}=0,
    \text{ and }
    \prod_{T\subseteq\{0,1\}^s}\lambda_{T}^{x_{T}}=1.
    $$
    Substituting $\lambda_T=\prod_{i \in T}\rho_i$ into the second equation above yields
    \begin{equation*}
        \prod_{i=1}^s\rho_i^{y_i}=1,\text{ where } y_i=\sum_{i\in T\subseteq\{0,1\}^s}x_T.
    \end{equation*}
    So $\mathbf{y}=(y_1,\ldots,y_s)\in \widehat{\mathfrak{L}}(\rho_1,\ldots,\rho_s)\subseteq \widehat{\mathfrak{L}}(\rho_1',\ldots,\rho_s')$.
    It follows that $\prod_{i=1}^s(\rho_i')^{y_i}=1$, which is equivalent to
    $
    \prod_{T\subseteq \{0,1\}^s}(\lambda'_T)^{x_T}=1.
    $
    Therefore, $\mathbf{x}\in {\mathfrak{L}}(\{\lambda'_T\}_{T\subseteq\{0,1\}^s}).$
\end{proof}

\ReducedLatticeLemma*
\begin{proof}
Suppose $B_i$ appears $t_i$ times in $A_1,A_2,\ldots,A_s$, $1\le i\le m.$
Let $N=\bigotimes_{i=1}^m B_i^{\otimes t_i}.$
Then for any planar graph $G$, $Z_N(G)=\prod_{i=1}^m (Z_{B_i}(G))^{t_i}=Z_M(G)$.
So $\PlGH(N)\equiv_p^T \PlGH(M).$
By assumption $\widehat{\mathfrak{L}}(\rho(B_1),\rho(B_2),\ldots,\rho(B_s))=\{\mathbf{0}\},$ we have
     \begin{equation}\label{eq: lattice of repeated mu}
        \begin{aligned}
            &\widehat{\mathfrak{L}}(\rho(B_1),\ldots,\rho(B_1),\rho(B_2),\ldots,\rho(B_2),\ldots,\rho(B_m),\ldots,\rho(B_m)) = ~\bigoplus_{i=1}^m\widehat{\mathfrak{L}}(\rho(B_i) \ldots, \rho(B_i))\\
            \subseteq ~ &
            \widehat{\mathfrak{L}}(\rho(B_1) \ldots, \rho(B_1))\oplus\bigoplus_{i=2}^s \widehat{\mathfrak{L}}(1,\ldots,1) = ~\widehat{\mathfrak{L}}(\rho(B_1),\ldots,\rho(B_1),1,\ldots,1,\ldots,1,\ldots,1),\\
        \end{aligned}
    \end{equation}
    where $\rho(B_i)$ repeats $t_i$ times for every $1\le i\le m$.   
    We diagonalize $B_i$ by $B_i=U_i\Lambda_iU_i^{\tt{T}}$, where $U_i\in O(2)$ and 
    $$
    \Lambda_i=\mathrm{diag}(\lambda(B_i),\mu(B_i)).
    $$
    Then $N=U \Lambda U^{\tt{T}}$, where $U=\bigotimes_{i=1}^m U_i^{\otimes t_i}$ and $\Lambda=\bigotimes_{i=1}^m \Lambda_i^{\otimes t_i}$.
    We do a re-normalization by letting
    $C_i=U_i\mathrm{diag}(1,\rho(B_i))U_i^{\tt{T}}$ and $N_1=\bigotimes_{i=1}^m C_i^{\otimes p_i}$.
    Then $N_1=U\Delta U^{\tt{T}}$, where $$\Delta=\bigotimes_{i=1}^m \mathrm{diag}(1,\rho(B_i))^{\otimes t_i}.$$
    By \cref{lemma: MequivalentCM}, $\PlGH(N_1)\equiv_p^T \PlGH(N).$
    Let $N'=B_1^{\otimes t_1}\otimes I^{\otimes t_2}\otimes\cdots\otimes I^{\otimes t_m}$.
    Then $N'= U \Lambda'U^{\tt{T}}$, where $\Lambda'=\Lambda_1^{\otimes t_1}\otimes I^{\otimes t_2}\otimes\cdots \otimes I^{\otimes t_m}.$
    We also do a re-normalization to $N'$ by letting $N_1'=U\Delta' U^{\tt{T}}$,
    where 
    $$
    \Delta'=\mathrm{diag}(1,\rho(B_1))^{\otimes t_1}\otimes I^{\otimes t_2}\otimes\cdots\otimes I^{\otimes t_m}.
    $$
    Also by \cref{lemma: MequivalentCM}, $\PlGH(N_1')\equiv_p^T \PlGH(N')$.
    
    The eigenvalues of $N_1$ are given by
    \begin{equation}\label{eq: lambda mu repeated A_i}
        \lambda_{(T_1,T_2,\ldots,T_m)}=\prod_{i=1}^m \rho(B_i)^{|T_i|}, \text{ for all } T_i\subseteq [t_i], 1\le i\le m.
    \end{equation}
    Similarly, the eigenvalues of $N_1'$ are
    \begin{equation}\label{eq: lambda' mu' repeated A_i}
        \lambda_{(T_1,T_2,\ldots,T_m)}=\rho(B_1)^{|T_1|}\prod_{i=2}^m 1^{|T_i|}, \text{ for all } T_i\subseteq [t_i], 1\le i\le m.
    \end{equation}
    Combining \cref{eq: lattice of repeated mu} \cref{eq: lambda mu repeated A_i}, \cref{eq: lambda' mu' repeated A_i} and \cref{lm: lattice translation}, 
        $$
        \mathfrak{L}(\{\lambda_{(T_1,T_2,\ldots,T_m)}\}_{T_i\subseteq [t_i],1\le i\le m})
        \subseteq 
        \mathfrak{L}(\{\lambda'_{(T_1,T_2,\ldots,T_m)}\}_{T_i\subseteq [t_i],1\le i\le m})
        $$
        Then $\PlGH(N_1')\le_p^T \PlGH(N_1)$ by \cref{lemma: latticeInterpolation}.
        Hence $\PlGH(N')\le_p^T \PlGH(N)\equiv_p^T\PlGH(M)$.
        Since $\PlGH(N')\equiv_p^T \PlGH(B_1)$ and $B_1$ is not matchgate, $\PlGH(N')$ is \#P-hard, and thus $\PlGH(M)$ is \#P-hard.
\end{proof}

We will now prove 
\cref{thm: hardness for 2 by 2 tensor nonnegative},
but we shall first need to prove a few lemmas.

\begin{lemma}\label{lm: square makes positive definite}
    Let $A\in \Sym{2}{F}{\R_{>0}}$.
    Then $A^{2}\in \Sym{2}{pd}{\Rp}$.
    Moreover, $A_{11} = A_{22}$ if and only if $(A^{2})_{11}=(A^{2})_{22}$.
\end{lemma}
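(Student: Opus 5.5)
Write $A=\left[\begin{smallmatrix} a & b\\ b & c\end{smallmatrix}\right]$ with $a,b,c>0$ and $\det A = ac-b^2\neq 0$, since $A\in\Sym{2}{F}{\R_{>0}}$. The plan is a direct computation of
$$A^2=\left[\begin{matrix} a^2+b^2 & b(a+c)\\ b(a+c) & b^2+c^2\end{matrix}\right].$$
We then verify the three properties bundled in the claim $A^2\in\Sym{2}{pd}{\Rp}$ separately.

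First, symmetry of $A^2$ is immediate from $(A^2)^{\tt T}=(A^{\tt T})^2=A^2$. Second, entrywise positivity holds because every entry above is a sum or product of positive reals; in particular the off-diagonal entry $b(a+c)>0$. Third, for positive definiteness: $A$ is real symmetric and invertible, so its eigenvalues $\lambda,\mu$ are real and nonzero. The eigenvalues of $A^2$ are therefore $\lambda^2,\mu^2>0$, so $A^2$ is positive definite. Alternatively, one can note $\mathbf{v}^{\tt T}A^2\mathbf{v}=\|A\mathbf{v}\|_2^2>0$ for $\mathbf{v}\neq\mathbf{0}$, using that $A$ is injective.

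For the ``moreover'' part, compute
$$(A^2)_{11}-(A^2)_{22}=a^2-c^2=(a-c)(a+c).$$
Since $a+c>0$, this difference vanishes if and only if $a=c$, i.e.\ $A_{11}=A_{22}$.

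The only point needing care is that full rank, and not positivity of entries, is what guarantees $\mu\neq 0$, so that $\mu^2>0$; a singular positive $A$ would only give a positive semidefinite square. Beyond that, there is no real obstacle, as the statement is an elementary calculation.
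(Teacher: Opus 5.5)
Your proposal is correct and follows essentially the same route as the paper: the nonzero real eigenvalues of the full-rank symmetric $A$ square to positive eigenvalues of $A^2$, and $(A^2)_{11}-(A^2)_{22}=a^2-c^2$ vanishes iff $a=c$ because $a,c>0$. Your explicit check that every entry of $A^2$ is positive is a detail the paper leaves implicit.
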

\begin{proof}
    Since $A$ is symmetric and full rank,
    its eigenvalues are nonzero reals.
    Hence the eigenvalues of $A^{2}$ are strictly positive,
    so $A^{2}$ is positive definite.
    Writing
    $A = \left[\begin{smallmatrix}
    a & b\\
    b & c\end{smallmatrix}\right]$
    with $a, b, c > 0$,
    $$
        (A^{2})_{11} - (A^{2})_{22} = 
        (a^{2} + b^{2}) - (c^{2} + b^{2}) = a^{2} - c^{2},
    $$
    proving the second claim.
\end{proof}

\begin{lemma}\label{lm: strictly positive full rank case}
    Let $M = \bigotimes_{i=1}^{s} A_{i}$ with
    $A_{i} \in \Sym{2}{F}{\R_{>0}}$.
    Then $\PlGH(M)$ is \#P-hard unless
    $(A_{i})_{11}=(A_i)_{22}$ for every $1\le i\le s$,
    in which case, $\PlGH(M)$ is polynomial-time tractable.
\end{lemma}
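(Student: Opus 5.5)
The plan is to reduce to \cref{thm: hardness for 2 by 2 tensor positive definite} by squaring via the stretching gadget $\mathbf{S_2}$. Set $M' := \mathbf{S_2}(M) = M^2$. Since $\mathbf{S_2}(M) \in \PlEdge(M)$, we have $\PlGH(M^2) \le_p^T \PlGH(M)$. The mixed-product property of Kronecker products gives
\[
M^2 = \bigotimes_{i=1}^s A_i^2 .
\]
By \cref{lm: square makes positive definite}, each $A_i^2 \in \Sym{2}{pd}{\Rp}$. Hence $M^2$ is positive definite, because its eigenvalues are products of positive numbers, and it has positive entries. So $M^2 \in \Sym{q}{pd}{\Rp}$ is a tensor product of $2\times 2$ factors, and \cref{thm: hardness for 2 by 2 tensor positive definite} applies to it.

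For the hardness direction, suppose some $(A_i)_{11} \neq (A_i)_{22}$. The second part of \cref{lm: square makes positive definite} gives $(A_i^2)_{11} \neq (A_i^2)_{22}$. Then \cref{thm: hardness for 2 by 2 tensor positive definite} says $\PlGH(M^2)$ is \#P-hard, and therefore so is $\PlGH(M)$.

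For the tractable direction, suppose every $A_i$ has equal diagonal entries. Then $Z_M(G) = \prod_i Z_{A_i}(G)$, as in \cref{eq: Z_M(G) is product of ZA_i(G)}, so it suffices to compute each $Z_{A_i}(G)$ in polynomial time on planar $G$. A symmetric $2\times 2$ matrix $\left[\begin{smallmatrix} a & b\\ b & a\end{smallmatrix}\right]$ is a Boolean Ising-type signature. Under the Hadamard holographic transformation it becomes a matchgate-realizable (even-parity) signature; this is the same fact cited from \cite{cai2017complexitybook} in the proof of \cref{lm: tensor product of 2 by 2 first reduction}, and it needs no positive definiteness. So each $Z_{A_i}(G)$ is computable by FKT, and $Z_M$ is computable as their product.

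The only point requiring care is this last one: the earlier cited fact was stated for $\Sym{2}{pd}{\Rp}$, while here $A_i$ may be indefinite (for example $a<b$). I would handle this directly by noting that the Hadamard transform of $\left[\begin{smallmatrix} a & b\\ b & a\end{smallmatrix}\right]$ is $\mathrm{diag}(a+b,\,a-b)$ up to scaling. That is a generalized-equality-type binary signature which, combined with the equality signatures at vertices, gives a matchgate-computable Holant for every real $a,b$. If that verification proved awkward, a fallback is to observe that the Ising model with uniform external field zero is tractable on planar graphs for all edge weights, via the standard FKT reduction.
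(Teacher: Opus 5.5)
Your proof is correct, and the hardness direction is exactly the paper's: square via $\mathbf{S_2}$, use \cref{lm: square makes positive definite} to land in $\Sym{q}{pd}{\Rp}$ with (in)equality of diagonals preserved, and invoke \cref{thm: hardness for 2 by 2 tensor positive definite}. For tractability you are more careful than the paper. Its proof only says the lemma ``follows from'' that theorem, yet $\PlGH(M^2)\le_p^T\PlGH(M)$ cannot transfer tractability from $M^2$ to $M$, so your direct argument that each possibly indefinite factor $\left[\begin{smallmatrix} a & b\\ b & a\end{smallmatrix}\right]$ is planar-tractable is what is actually needed. One small correction there: after the Hadamard transformation the vertex signatures become even-parity signatures, not equalities, and it is these together with the edge signature $\mathrm{diag}(a+b,a-b)$ that are all matchgate signatures, so FKT applies.
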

\begin{proof}
    By \cref{lm: square makes positive definite},
    $M^{2} = \bigotimes_{i=1}^{s} A_{i}^{2} \in \Sym{q}{pd}{\Rp}$.
    Also $A_{i}^{2}$ has equal diagonal entries
    if and only if $A_{i}$ does.
    By stretching,
    $\PlGH(M^{2})\le_{p}^{T} \PlGH(M)$.
    The lemma follows from
    \cref{thm: hardness for 2 by 2 tensor positive definite}.
\end{proof}

\begin{lemma}\label{lm: rank zero one factors}
    Let $M = \bigotimes_{i=1}^{s} A_{i}$ with 
    $A_{i} \in \Sym{2}{}{\R_{>0}}$.
    Let $I = \{i:\det(A_i) \neq 0\}$.
    Then
    $$
        \PlGH(M) \equiv_{p}^{T}
        \PlGH\left(\bigotimes_{i \in I} A_{i}\right).
    $$
\end{lemma}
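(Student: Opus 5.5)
The approach is to use the multiplicativity $Z_M(G)=\prod_{i=1}^s Z_{A_i}(G)$ (the identity \cref{eq: Z_M(G) is product of ZA_i(G)}, which holds for any tensor product). Then we show that each singular factor contributes an explicitly computable, nonzero scalar. Since every $A_i$ has strictly positive entries, $\det(A_i)=0$ forces $\operatorname{rank}(A_i)=1$. Hence $A_i=\mathbf{v}_i\mathbf{v}_i^{\tt T}$ with $\mathbf{v}_i=(x_i,y_i)^{\tt T}\in\R_{>0}^2$, since a positive symmetric rank-one matrix has this form with $x_i=\sqrt{(A_i)_{11}}$ and $y_i=\sqrt{(A_i)_{22}}$.

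For a rank-one factor, the partition function factorizes over vertices:
\[
Z_{A_i}(G)=\sum_{\sigma:V\to\{1,2\}}\prod_{(u,v)\in E}v_{\sigma(u)}v_{\sigma(v)}=\prod_{w\in V}\bigl(x_i^{\deg w}+y_i^{\deg w}\bigr).
\]
This quantity is strictly positive, and it is computable in polynomial time, as an element of the fixed field $\mathbf{F}$ in the model of \autoref{appendix: modelComputation}, via $O(|V|)$ arithmetic operations on powers of constants. Write $M'=\bigotimes_{i\in I}A_i$; if $I=\emptyset$, set $Z_{M'}(G)=1$. Then
\[
Z_M(G)=Z_{M'}(G)\cdot\prod_{i\notin I}\prod_{w\in V}\bigl(x_i^{\deg w}+y_i^{\deg w}\bigr).
\]

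The two reductions now follow directly. Given an oracle for $\PlGH(M')$, we multiply its answer by the computable correction factor to obtain $Z_M(G)$, which gives $\PlGH(M)\le_p^T\PlGH(M')$. Conversely, given an oracle for $\PlGH(M)$, we divide its answer by the same factor, which is nonzero, to obtain $Z_{M'}(G)$; this gives $\PlGH(M')\le_p^T\PlGH(M)$. Both reductions use the same input graph $G$, so planarity is preserved trivially.

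The only delicate point is the model of computation. The scalars $x_i,y_i$ are square roots of entries of $A_i$, so they may not lie in $\mathbf{F}=\Q(M)$. This is harmless, because $x_i^{k}+y_i^{k}$ need not be formed separately: the full correction product can be computed inside a fixed finite extension $\mathbf{F}(x_i,y_i)$, and the results are then compared. Alternatively, we can avoid square roots by working with $Z_{A_i}(G)$ directly through the closed form above, since $x_i^2=(A_i)_{11}$ and $y_i^2=(A_i)_{22}$. Note that the product $\prod_{(u,v)}x_{\sigma(u)}x_{\sigma(v)}$ always involves even total powers per edge pair, so it can be rewritten in terms of the entries of $A_i$ and a fixed choice of square root. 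I expect this bookkeeping to be the only nontrivial step; everything else is direct.
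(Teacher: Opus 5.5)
Your proposal is correct and follows the paper's proof essentially verbatim: write each singular factor as $\mathbf{u}\mathbf{u}^{\tt T}$ with $\mathbf{u}\in\mathbb{R}_{>0}^2$, note that $Z_{A_i}(G)=\prod_{w\in V}(x^{\deg w}+y^{\deg w})>0$ is polynomial-time computable, and use $Z_M(G)=\prod_i Z_{A_i}(G)$ to multiply or divide by this factor. Your remarks on square roots and the field $\mathbf{F}$ go beyond what the paper writes but are harmless, since each edge contributes an actual entry of $A_i$, so $Z_{A_i}(G)$ already lies in the field generated by the entries of $A_i$.
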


\begin{proof}
    If $\det(A_{i}) = 0$, then
    $A_{i} = \mathbf{u}\mathbf{u}^{\tt{T}}$
    for some $\mathbf{u} = (x, y)^{\tt{T}}\in (\R_{>0})^{2}$.
    For any graph $G = (V,E)$,
    $$
        Z_{A_{i}}(G) = \prod_{v \in V}
        \big(x^{\deg(v)} + y^{\deg(v)}\big)>0,
    $$
    which is computable in polynomial time.
    Using
    $
        Z_M(G) = \prod_{i=1}^{s} Z_{A_{i}}(G),
    $
    the claim follows.
\end{proof}

\begin{lemma}\label{lm: strictly positive case}
    Let $M = \bigotimes_{i=1}^{s} A_{i}$ with
    $A_{i}\in \Sym{2}{}{\R_{>0}}$.
    Let $I = \{i \in[s]: \det(A_{i}) \neq 0\}$.
    Then, $\PlGH(M)$ is \#P-hard unless $(A_{i})_{11}=(A_i)_{22}$ for every $i \in I$,
    in which case, $\PlGH(M)$ is polynomial-time tractable.
\end{lemma}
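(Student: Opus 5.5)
The argument combines the two lemmas just proved. By \cref{lm: rank zero one factors}, $\PlGH(M) \equiv_p^T \PlGH(M_I)$ with $M_I := \bigotimes_{i\in I} A_i$. The factors $A_i$ with $i\in I$ lie in $\Sym{2}{F}{\R_{>0}}$: each is strictly positive by hypothesis and full rank by the definition of $I$. So $M_I$ satisfies the hypothesis of \cref{lm: strictly positive full rank case}.

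That lemma gives the dichotomy for $\PlGH(M_I)$. It is \#P-hard unless $(A_i)_{11}=(A_i)_{22}$ for every $i\in I$, in which case it is P-time tractable. Both directions transfer to $\PlGH(M)$ through the Turing equivalence. In the hard case, $\PlGH(M_I)\le_p^T\PlGH(M)$. In the tractable case, $Z_M(G)=Z_{M_I}(G)\prod_{i\notin I}Z_{A_i}(G)$, where each rank-one factor contributes $\prod_{v\in V}(x^{\deg v}+y^{\deg v})$, which is computable in polynomial time.

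Two edge cases need a remark.

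First, suppose $I=\emptyset$. Then $M_I$ is the empty tensor product, which we read as the $1\times 1$ matrix $[1]$, so $Z_{M_I}\equiv 1$. In this case $Z_M(G)$ is just the product of the explicit rank-one formulas, and the problem is trivially tractable. This agrees with the statement, since the condition ``for every $i\in I$'' holds vacuously.

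Second, we should confirm that the proof of \cref{lm: rank zero one factors} correctly handles $\det(A_i)=0$. A rank-one positive symmetric $2\times 2$ matrix is $\mathbf{u}\mathbf{u}^{\tt T}$ with $\mathbf{u}$ entrywise positive: up to sign it is $\pm\mathbf{u}\mathbf{u}^{\tt T}$, and the sign is $+$ because the diagonal entries are positive. This is what the lemma uses.

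There is no real obstacle here; the statement is a direct composition of \cref{lm: rank zero one factors} and \cref{lm: strictly positive full rank case}. The only points needing care are the bookkeeping for the equivalence and the vacuous case $I=\emptyset$.
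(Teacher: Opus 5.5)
Your proposal is correct and matches the paper's proof: you dispose of $I=\emptyset$ as trivially tractable, pass to $\bigotimes_{i\in I}A_i$ via the rank-zero/one factor lemma, and apply the strictly positive full-rank lemma. Your extra remarks (both directions of the equivalence, and the rank-one factors having the form $\mathbf{u}\mathbf{u}^{\tt T}$ with $\mathbf{u}$ positive) are accurate bookkeeping that the paper leaves implicit.
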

\begin{proof}
    If $I = \emptyset$, then $\PlGH(M)$ is trivially
    polynomial-time tractable, and the proof is complete.
    So, let's assume $I \neq \emptyset$.
    Let $M' := \bigotimes_{i \in I}A_{i}$.
    By \cref{lm: rank zero one factors},
    $\PlGH(M) \equiv_{p}^{T} \PlGH(M')$.
    The claim follows from
    \cref{lm: strictly positive full rank case}.
\end{proof}

\begin{lemma}\label{lm: diagonal factor direct sum}
    Let $M = \bigotimes_{i=1}^{s} A_{i}$ with
    $A_{i} \in \Sym{2}{}{\R_{\geq 0}}$.
    If $(A_{j})_{12} = 0$ for some $j$,
    then $M$ is a direct sum of two blocks
    $$M \cong (\alpha M')\oplus (\beta M'),$$
    where $A_{j} = \mathrm{diag}(\alpha,\beta)$
    and $M' = \bigotimes_{i\neq j}A_{i}$.
    Moreover, $\PlGH(M')\equiv_{p}^{T} \PlGH(M)$.
\end{lemma}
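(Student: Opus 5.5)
We need to prove: if $(A_j)_{12}=0$, then $M$ is permutation-similar to $(\alpha M')\oplus(\beta M')$, and $\PlGH(M')\equiv_p^T\PlGH(M)$.

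\textbf{The block structure.} Tensor products commute up to a simultaneous permutation of rows and columns, so we can move the factor $A_j$ to the front. Simultaneously permuting rows and columns only relabels the domain $[q]$, so it preserves $Z_M(G)$ for every $G$. After the move, $M\cong A_j\otimes M'$. With $A_j=\mathrm{diag}(\alpha,\beta)$ this is exactly the block-diagonal matrix $(\alpha M')\oplus(\beta M')$.

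\textbf{The partition function factors.} For the equivalence, I use the multiplicativity already used in the proof of \cref{lm: tensor product of 2 by 2 first reduction}: $Z_M(G)=Z_{A_j}(G)\cdot Z_{M'}(G)$. Since $A_j$ is diagonal, $Z_{A_j}(G)$ sums only over maps constant on each connected component of $G$. Writing $G_1,\dots,G_k$ for the components, this gives
\[
Z_{A_j}(G)=\prod_{r=1}^{k}\bigl(\alpha^{|E(G_r)|}+\beta^{|E(G_r)|}\bigr).
\]
This is computable in polynomial time and is strictly positive, except possibly when some $\alpha$ or $\beta$ vanishes.

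\textbf{Reduction $\PlGH(M)\le_p^T\PlGH(M')$.} This direction is immediate from the factorization: multiply the oracle answer $Z_{M'}(G)$ by the efficiently computable $Z_{A_j}(G)$.

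\textbf{Reduction $\PlGH(M')\le_p^T\PlGH(M)$.} Here we must divide by $Z_{A_j}(G)$, which needs care.
\begin{itemize}
\item If $\alpha=\beta=0$, then $M=0$. This case is excluded in the setting of \cref{thm: hardness for 2 by 2 tensor nonnegative}, so I would handle it by the standing nonzero assumption.
\item If at least one of $\alpha,\beta$ is positive, say $\alpha>0$ (both are $\ge0$), then each factor $\alpha^{|E(G_r)|}+\beta^{|E(G_r)|}$ is $\ge\alpha^{|E(G_r)|}>0$. This holds even for an isolated vertex, where the factor equals $2$ (using the convention $0^0=1$, so both terms are $1$). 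Hence $Z_{A_j}(G)>0$, and we obtain $Z_{M'}(G)=Z_M(G)/Z_{A_j}(G)$.
\end{itemize}

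\textbf{Main obstacle.} The argument has no real obstacle; the one point needing attention is the positivity of $Z_{A_j}(G)$ when one diagonal entry of $A_j$ is zero. That is settled by the bound $\alpha^{|E(G_r)|}+\beta^{|E(G_r)|}\ge\alpha^{|E(G_r)|}>0$ together with the $0^0=1$ convention for components with no edges.
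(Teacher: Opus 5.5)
Your proof is correct, but for the ``moreover'' part it takes a different route from the paper. The paper also notes $M\cong(\alpha M')\oplus(\beta M')$. It then cites the general direct-sum lemma of Cai--Chen--Lu (their Lemma~4.6, proved via the first pinning lemma): the problem for $A\oplus B$ is \#P-hard if one block is, and tractable if both are. Implicitly combined with \cref{lemma: MequivalentCM}, this is read as $\PlGH(M')\equiv_p^T\PlGH(M)$.

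You instead use that the two blocks are proportional. This gives $Z_M(G)=Z_{A_j}(G)\,Z_{M'}(G)$ with the explicit, polynomial-time computable factor $Z_{A_j}(G)=\prod_r\bigl(\alpha^{|E(G_r)|}+\beta^{|E(G_r)|}\bigr)$.

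What each route buys:
\begin{itemize}
\item Your argument is elementary and self-contained. It needs no pinning argument; the paper imports the direct-sum lemma from the general-graph setting without comment on planarity.
\item Your argument yields genuine polynomial-time reductions in both directions. The cited lemma only transfers \#P-hardness and tractability, which is weaker than the literal $\equiv_p^T$ stated, though it is all that \cref{thm: hardness for 2 by 2 tensor nonnegative} actually uses.
\item The paper's route, in return, does not need the two blocks to be proportional.
\end{itemize}

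Your handling of the degenerate case is also sharper than the paper's. For $\alpha=\beta=0$ the equivalence is false whenever $\PlGH(M')$ is \#P-hard (barring a collapse of \#P to polynomial time), since then $M=0$ is trivial. So the nonzero hypothesis you import from \cref{thm: hardness for 2 by 2 tensor nonnegative} is genuinely necessary. The paper's argument tacitly needs it too, because getting from $\PlGH(\alpha M')$ or $\PlGH(\beta M')$ back to $\PlGH(M')$ requires at least one nonzero scalar.
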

\begin{proof}
    Since $A_{j}$ is diagonal,
    for any choice of assignments on the other
    $s - 1$ tensor factors, 
    the contribution of the $j$-th factor is either
    $\alpha$ or $\beta$.
    Hence $M$ decomposes as a direct sum
    of two copies of $M'$, scaled by
    $\alpha$ and $\beta$, respectively,
    i.e., $M = (\alpha M') \oplus (\beta M')$.

    We now apply a standard result on direct sums
    \cite[Lemma~4.6]{cai2013graph}
    (whose proof uses the first pinning lemma,
    \cite[Lemma~4.1]{cai2013graph}):
    if $M = A \oplus B$, then $\PlGH(M)$ is \#P-hard
    if at least one of $\PlGH(A)$ or $\PlGH(B)$
    is \#P-hard, and $\PlGH(M)$ is polynomial-time tractable,
    if both $\PlGH(A)$ and $\PlGH(B)$ are polynomial-time tractable.
    Applying this to $M = (\alpha M') \oplus (\beta M')$,
    we conclude that $\PlGH(M') \equiv_{p}^{T} \PlGH(M)$.
\end{proof}

\HardnessForTensorTwoByTwoNonnegative*
\begin{proof}
    Let $M = \bigotimes_{i=1}^{s} A_{i}$ with $A_{i} \in \Sym{2}{}{\R_{\ge 0}}$.
    By repeatedly applying \cref{lm: diagonal factor direct sum},
    we may assume $(A_{i})_{12} > 0$ for all $i$.
    Then $I = \{i \in[s]: \det(A_i)\neq 0\}$.

    If $I = \emptyset$,
    then $\PlGH(M)$ is polynomial-time tractable by
    \cref{lm: rank zero one factors}.
    Hence assume $I \neq \emptyset$.

    Let $M' = \bigotimes_{i\in I}A_{i}$.
    By \cref{lm: rank zero one factors},
    $\PlGH(M)\equiv_p^T \PlGH(M')$.

    Squaring yields
    $(M')^{2} = \bigotimes_{i\in I} A_{i}^{2}$.
    By \cref{lm: square makes positive definite},
    each $A_{i}^{2}\in \Sym{2}{pd}{\Rp}$ and
    $(A_{i}^2)_{11} = (A_{i}^2)_{22}$
    if and only if $(A_{i})_{11} = (A_{i})_{22}$.
    By stretching,
    $\PlGH((M')^{2}) \leq_{p}^{T} \PlGH(M').$

    If $(A_{i})_{11}=(A_i)_{22}$ for every $i \in I$,
    then $M'$ (and hence $M$) is polynomial-time tractable.
    Otherwise, by
    \cref{thm: hardness for 2 by 2 tensor positive definite},
    $\PlGH((M')^{2})$ is \#P-hard,
    hence $\PlGH(M') \leq_{p}^{T} \PlGH(M)$ is \#P-hard.
\end{proof}

\bibliographystyle{plain}
\bibliography{ref}

@article{Circulant_Matrices,
author = {Gray, Robert},
year = {2001},
month = {10},
pages = {},
title = {Toeplitz and Circulant Matrices: A Review},
volume = {2},
journal = {Foundations and Trends® in Communications and Information Theory},
doi = {10.1561/0100000006}
}

@book{jacobson1985basic,
  title={Basic Algebra},
  author={Jacobson, N.},
  number={v. 2},
  isbn={9780716719335},
  lccn={84025836},
  series={Basic Algebra},
  url={https://books.google.com/books?id=oNmDSAAACAAJ},
  year={1985},
  publisher={W.H. Freeman}
}

@inproceedings{dyer2000complexity,
  title={The complexity of counting graph homomorphisms (extended abstract)},
  author={Dyer, Martin and Greenhill, Catherine},
  editor={David B. Shmoys},
  booktitle={Proceedings of the Eleventh Annual {ACM-SIAM} Symposium on Discrete Algorithms, January 9-11, 2000, San Francisco, CA, {USA}},
  pages={246--255},
  publisher={{ACM/SIAM}},
  year={2000}
}

@article{bulatov2005complexity,
  title={The complexity of partition functions},
  author={Bulatov, Andrei and Grohe, Martin},
  journal={Theoretical Computer Science},
  volume={348},
  number={2-3},
  pages={148--186},
  year={2005},
  publisher={Elsevier}
}

@article{goldberg2010complexity,
  title={A complexity dichotomy for partition functions with mixed signs},
  author={Goldberg, Leslie Ann and Grohe, Martin and Jerrum, Mark and Thurley, Marc},
  journal={SIAM Journal on Computing},
  volume={39},
  number={7},
  pages={3336--3402},
  year={2010},
  publisher={SIAM}
}

@article{cai2013graph,
  title={Graph homomorphisms with complex values: A dichotomy theorem},
  author={Cai, Jin-Yi and Chen, Xi and Lu, Pinyan},
  journal={SIAM Journal on Computing},
  volume={42},
  number={3},
  pages={924--1029},
  year={2013},
  publisher={SIAM}
}

@article{vertigan2005computational,
  title={The computational complexity of Tutte invariants for planar graphs},
  author={Vertigan, Dirk},
  journal={SIAM Journal on Computing},
  volume={35},
  number={3},
  pages={690--712},
  year={2005},
  publisher={SIAM}
}

@book{lovasz2012large,
  title={Large networks and graph limits},
  author={Lov{\'a}sz, L{\'a}szl{\'o}},
  volume={60},
  year={2012},
  publisher={American Mathematical Soc.}
}

@article{bulatov2013complexity,
  title={The complexity of the counting constraint satisfaction problem},
  author={Bulatov, Andrei A.},
  journal={Journal of the ACM (JACM)},
  volume={60},
  number={5},
  pages={1--41},
  year={2013},
  publisher={ACM New York, NY, USA}
}

@inproceedings{dyer2010complexity,
  title={On the complexity of \# CSP},
  author={Dyer, Martin E. and Richerby, David M.},
  booktitle={Proceedings of the forty-second ACM symposium on Theory of computing},
  pages={725--734},
  year={2010}
}

@InProceedings{dyer2011decidability,
  author =	{Dyer, Martin E. and Richerby, David M.},
  title =	{{The \# CSP Dichotomy is Decidable}},
  booktitle =	{28th International Symposium on Theoretical Aspects of Computer Science (STACS 2011) },
  pages =	{261--272},
  year =	{2011},
  volume =	{9},
  editor =	{Thomas Schwentick and Christoph D{\"u}rr},
  publisher =	{Schloss Dagstuhl--Leibniz-Zentrum fuer Informatik},
  address =	{Dagstuhl, Germany},
  doi =		{10.4230/LIPIcs.STACS.2011.261}
}

@article{dyer2013complexity,
  volume={42},
  number={3},
  author={Dyer, Martin E. and Richerby, David M.},
  title = {An effective dichotomy for the counting constraint satisfaction problem},
  publisher = {Society for Industrial and Applied Mathematics},
  year = {2013},
  journal = {SIAM Journal on Computing},
  pages = {1245 -- 1274},
}

@article{bulatov2012complexity,
  title={The complexity of weighted and unweighted \# CSP},
  author={Bulatov, Andrei and Dyer, Martin and Goldberg, Leslie Ann and Jalsenius, Markus and Jerrum, Mark and Richerby, David},
  journal={Journal of Computer and System Sciences},
  volume={78},
  number={2},
  pages={681--688},
  year={2012},
  publisher={Elsevier}
}

@article{cai2017complexity,
author = {Cai, Jin-Yi and Chen, Xi},
year = {2017},
month = {06},
pages = {1-39},
title = {Complexity of Counting {CSP} with Complex Weights},
volume = {64},
journal = {Journal of the ACM},
doi = {10.1145/2822891}
}

@article{cai2016nonnegative,
  title={Nonnegative weighted \#{CSP}: an effective complexity dichotomy},
  author={Cai, Jin-Yi and Chen, Xi and Lu, Pinyan},
  journal={SIAM Journal on Computing},
  volume={45},
  number={6},
  pages={2177--2198},
  year={2016},
  publisher={SIAM}
}

@article{valiant2008holographic,
  title={Holographic algorithms},
  author={Valiant, Leslie G},
  journal={SIAM Journal on Computing},
  volume={37},
  number={5},
  pages={1565--1594},
  year={2008},
  publisher={SIAM}
}

@article{valiant1979complexity,
  title={The complexity of computing the permanent},
  author={Valiant, Leslie G.},
  journal={Theoretical computer science},
  volume={8},
  number={2},
  pages={189--201},
  year={1979},
  publisher={Elsevier}
}

@article{kasteleyn1961statistics,
  title={The statistics of dimers on a lattice: I. The number of dimer arrangements on a quadratic lattice},
  author={Kasteleyn, Pieter W.},
  journal={Physica},
  volume={27},
  number={12},
  pages={1209--1225},
  year={1961},
  publisher={Elsevier}
}

@article{temperley1961dimer,
  title={Dimer problem in statistical mechanics-an exact result},
  author={Temperley, Harold NV. and Fisher, Michael E.},
  journal={Philosophical Magazine},
  volume={6},
  number={68},
  pages={1061--1063},
  year={1961},
  publisher={Taylor \& Francis}
}

@article{kasteleyn1963dimer,
  title={Dimer statistics and phase transitions},
  author={Kasteleyn, Pieter W.},
  journal={Journal of Mathematical Physics},
  volume={4},
  number={2},
  pages={287--293},
  year={1963},
  publisher={American Institute of Physics}
}

@article{kasteleyn1967graph,
  title={Graph theory and crystal physics},
  author={Kasteleyn, Pieter},
  journal={Graph theory and theoretical physics},
  pages={43--110},
  year={1967},
  publisher={Academic Press}
}

@inproceedings{cai2009holant,
  title={Holant problems and counting CSP},
  author={Cai, Jin-Yi and Lu, Pinyan and Xia, Mingji},
  booktitle={Proceedings of the forty-first annual ACM symposium on Theory of computing},
  pages={715--724},
  year={2009}
}

@article{cai2016complete,
  title={A complete dichotomy rises from the capture of vanishing signatures},
  author={Cai, Jin-Yi and Guo, Heng and Williams, Tyson},
  journal={SIAM Journal on Computing},
  volume={45},
  number={5},
  pages={1671--1728},
  year={2016},
  publisher={SIAM}
}

@article{cai2019holographic,
  title={Holographic algorithm with Matchgates is universal for planar \# CSP over boolean domain},
  author={Cai, Jin-Yi and Fu, Zhiguo},
  journal={SIAM Journal on Computing},
  volume={51},
  number={2},
  pages={STOC17--50},
  year={2019},
  publisher={SIAM}
}

@article{banica2005quantum,
  title={Quantum automorphism groups of homogeneous graphs},
  author={Banica, Teodor},
  journal={Journal of Functional Analysis},
  volume={224},
  number={2},
  pages={243--280},
  year={2005},
  publisher={Elsevier}
}

@article{atserias2019quantum,
  title={Quantum and non-signalling graph isomorphisms},
  author={Atserias, Albert and Man{\v{c}}inska, Laura and Roberson, David E. and {\v{S}}{\'a}mal, Robert and Severini, Simone and Varvitsiotis, Antonios},
  journal={Journal of Combinatorial Theory, Series B},
  volume={136},
  pages={289--328},
  year={2019},
  publisher={Elsevier}
}

@article{chassaniol2019study,
  title={Study of quantum symmetries for vertex-transitive graphs using intertwiner spaces},
  author={Chassaniol, Arthur},
  journal={arXiv preprint arXiv:1904.00455},
  year={2019}
}

@article{wang1995free,
  title={Free products of compact quantum groups},
  author={Wang, Shuzhou},
  journal={Communications in Mathematical Physics},
  volume={167},
  number={3},
  pages={671--692},
  year={1995},
  publisher={Springer}
}

@article{wang1998quantum,
  title={Quantum symmetry groups of finite spaces},
  author={Wang, Shuzhou},
  journal={arXiv preprint math/9807091},
  year={1998}
}

@article{kar2026npa,
  title={NPA hierarchy for quantum isomorphism and homomorphism indistinguishability},
  author={Kar, Prem Nigam and Roberson, David E and Seppelt, Tim and Zeman, Peter},
  journal={Quantum},
  volume={10},
  pages={1989},
  year={2026},
  publisher={Verein zur F{\"o}rderung des Open Access Publizierens in den Quantenwissenschaften}
}

@article{lupini2020nonlocal,
  title={Nonlocal games and quantum permutation groups},
  author={Lupini, Martino and Man{\v{c}}inska, Laura and Roberson, David E.},
  journal={Journal of Functional Analysis},
  volume={279},
  number={5},
  pages={108592},
  year={2020},
  publisher={Elsevier}
}

@inproceedings{manvcinska2020quantum,
  title={Quantum isomorphism is equivalent to equality of homomorphism counts from planar graphs},
  author={Man{\v{c}}inska, Laura and Roberson, David E.},
  booktitle={2020 IEEE 61st Annual Symposium on Foundations of Computer Science (FOCS)},
  pages={661--672},
  year={2020},
  organization={IEEE}
}

@article{lovasz1967operations,
  title={Operations with structures},
  author={Lov{\'a}sz, L{\'a}szl{\'o}},
  journal={Acta Math. Acad. Sci. Hungar},
  volume={18},
  number={3-4},
  pages={321--328},
  year={1967}
}

@inproceedings{backens2017new,
  author={Miriam Backens},
  title={A New Holant Dichotomy Inspired by Quantum Computation},
  booktitle={44th International Colloquium on Automata, Languages, and Programming, {ICALP} 2017, July 10-14, 2017, Warsaw, Poland},
  series={LIPIcs},
  volume={80},
  pages={16:1--16:14},
  publisher={Schloss Dagstuhl - Leibniz-Zentrum f{\"{u}}r Informatik},
  year= {2017}
}

@inproceedings{backens2018complete,
  author={Miriam Backens},
  title={A Complete Dichotomy for Complex-Valued Holant\^{}c},
  booktitle={45th International Colloquium on Automata, Languages, and Programming, {ICALP} 2018, July 9-13, 2018, Prague, Czech Republic},
  series={LIPIcs},
  volume={107},
  pages={12:1--12:14},
  publisher={Schloss Dagstuhl - Leibniz-Zentrum f{\"{u}}r Informatik},
  year={2018}
}

@article{yang2022local,
  title={Local holographic transformations: tractability and hardness},
  author={Yang, Peng and Fu, Zhiguo},
  journal={Frontiers of Computer Science},
  volume={17},
  number={2},
  pages={1--11},
  year={2022},
  publisher={Springer}
}

@article{fu2019blockwise,
  title={On blockwise symmetric matchgate signatures and higher domain \# CSP},
  author={Fu, Zhiguo and Yang, Fengqin and Yin, Minghao},
  journal={Information and Computation},
  volume={264},
  pages={1--11},
  year={2019},
  publisher={Elsevier}
}

@article{fu2014holographic,
  title={Holographic algorithms on bases of rank 2},
  author={Fu, Zhiguo and Yang, Fengqin},
  journal={Information Processing Letters},
  volume={114},
  number={11},
  pages={585--590},
  year={2014},
  publisher={Elsevier}
}

@inproceedings{cai2023complexity,
  title={The complexity of counting planar graph homomorphisms of domain size 3},
  author={Cai, Jin-Yi and Maran, Ashwin},
  booktitle={Proceedings of the 55th Annual ACM Symposium on Theory of Computing},
  pages={1285--1297},
  year={2023}
}

@book{hell2004graphs,
  title={Graphs and homomorphisms},
  author={Hell, Pavol and Nesetril, Jaroslav},
  volume={28},
  year={2004},
  publisher={OUP Oxford}
}

@book{cai2017complexitybook,
  title={Complexity dichotomies for counting problems: Volume 1, Boolean domain},
  author={Cai, Jin-Yi and Chen, Xi},
  year={2017},
  publisher={Cambridge University Press}
}

@article{slofstra2019set, 
    title={The Set of Quantum Correlations is not Closed}, 
    volume={7}, 
    journal={Forum of Mathematics, Pi}, 
    author={Slofstra, William}, 
    year={2019}
}

@article{cai2024polynomial,
  title={Polynomial and analytic methods for classifying complexity of planar graph homomorphisms},
  author={Cai, Jin-Yi and Maran, Ashwin},
  journal={arXiv preprint arXiv:2412.17122},
  year={2024}
}

@article{cai2026dichotomy,
      title={Planar Graph Homomorphisms: A Dichotomy and a Barrier from Quantum Groups}, 
      author={Jin-Yi Cai and Ashwin Maran and Ben Young},
      year={2026},
      journal={arXiv preprint arXiv:2601.23198},
      url={https://arxiv.org/abs/2601.23198}, 
}

@article{bichon2003quantum,
  title={Quantum automorphism groups of finite graphs},
  author={Bichon, Julien},
  journal={Proceedings of the American Mathematical Society},
  volume={131},
  number={3},
  pages={665--673},
  year={2003}
}

@book{arveson1998invitation,
  title={An invitation to C*-algebras},
  author={Arveson, William},
  volume={39},
  year={1998},
  publisher={Springer Science \& Business Media}
}

@article{woronowicz1987compact,
  title={Compact matrix pseudogroups},
  author={Woronowicz, Stanis{\l}aw L},
  journal={Communications in Mathematical Physics},
  volume={111},
  number={4},
  pages={613--665},
  year={1987},
  publisher={Springer}
}

@article{FITZGERALD1977633,
title = {On fractional Hadamard powers of positive definite matrices},
journal = {Journal of Mathematical Analysis and Applications},
volume = {61},
number = {3},
pages = {633-642},
year = {1977},
issn = {0022-247X},
doi = {https://doi.org/10.1016/0022-247X(77)90167-6},
author = {Carl H FitzGerald and Roger A Horn}
}

@book{kato1995perturbation,
  title={Perturbation Theory for Linear Operators},
  author={Kato, Tosio},
  series={Classics in Mathematics},
  volume={132},
  year={1995},
  edition={2nd},
  publisher={Springer-Verlag},
  address={Berlin Heidelberg}
}

@inproceedings{CaiFST26,
  author       = {Jin{-}Yi Cai and
                  Austen Z. Fan and
                  Shuai Shao and
                  Zhuxiao Tang},
  editor       = {Aditya Bhaskara and
                  Artur Czumaj},
  title        = {New Planar Algorithms and a Full Complexity Classification of the
                  Eight-Vertex Model},
  booktitle    = {Proceedings of the 58th Annual {ACM} Symposium on Theory of Computing,
                  {STOC} 2026, Salt Lake City, UT, USA, June 22-26, 2026},
  pages        = {967--978},
  publisher    = {{ACM}},
  year         = {2026},
  url          = {https://doi.org/10.1145/3798129.3800810},
  doi          = {10.1145/3798129.3800810},
  bibsource    = {dblp computer science bibliography, https://dblp.org}
}

\end{document}